\numberwithin{equation}{section} %% Comment out for sequentially-numbered
\numberwithin{figure}{section} %% Comment out for sequentially-numbered
\theoremstyle{plain}
\newtheorem{thm}{Theorem}
\newtheorem{cor}[thm]{Corollary} %%Delete [thm] to re-start numbering
\newtheorem{prob}{Problem}
\newtheorem{lem}{Lemma}[section] %%Delete [thm] to re-start numbering
\newtheorem{prop}[thm]{Proposition} %%Delete [thm] to re-start numbering
\newtheorem*{lemW}{Lemma W}
\newtheorem*{lemF}{Lemma F}
\theoremstyle{definition}
\newtheorem{defn}{Definition}
\newtheorem{ass}{Assumption}
\theoremstyle{remark}
 \newtheorem*{rem*}{Remark}
 \newtheorem*{rem}{Remark}
 \newtheorem*{rems}{Remarks}
\newcommand{\ip}[1]{\left\langle #1 \right \rangle}
\newcommand{\eq}[1]{eq.~(\ref{#1})}  %% produces: eq.(#)
\newcommand{\Ev}[1]{\E \left( #1 \right)}  %% produces \E( # )
\newcommand{\norm}[1]{\left\Vert#1\right\Vert}
\newcommand{\abs}[1]{\left\vert#1\right\vert}
\newcommand{\set}[1]{\left\{#1\right\}}
\renewcommand{\vec}[1]{\mathbf{#1}}
\newcommand{\field}[1]{\mathbb{#1}}
\newcommand{\bb}[1]{\mathbb{#1}}
\newcommand{\cu}[1]{\mathcal{#1}}
\def\Var{\operatorname{Var}}
\def\const{\mathrm{const.}}
\def\e{\mathrm e}
\def\im{\mathrm i}
\def\Im{\mathrm{Im}}
\def\half {\frac{1}{2}}
\def\1{{\mathsf 1}}
\def\di{\mathrm d}
\def\rightharpoondownfill@{%
    \arrowfill@\relbar\relbar\rightharpoondown}
\def\rightharpoonupfill@{%
    \arrowfill@\relbar\relbar\rightharpoonup}
\def\leftharpoondownfill@{%
    \arrowfill@\leftharpoondown\relbar\relbar}
\def\leftharpoonupfill@{%
    \arrowfill@\leftharpoonup\relbar\relbar}
\newcommand{\xrightharpoondown}[2][]{%
    \ext@arrow 0359\rightharpoondownfill@{#1}{#2}}
\newcommand{\xrightharpoonup}[2][]{%
    \ext@arrow 0359\rightharpoonupfill@{#1}{#2}}
\newcommand{\xleftharpoondown}[2][]{%
    \ext@arrow 3095\leftharpoondownfill@{#1}{#2}}
\newcommand{\xleftharpoonup}[2][]{%
    \ext@arrow 3095\leftharpoonupfill@{#1}{#2}}
\newcommand{\xleftrightharpoons}[2][]{\mathrel{%
    \raise.22ex\hbox{%
        $\ext@arrow 3095\leftharpoonupfill@{\phantom{#1}}{#2}$}%
    \setbox0=\hbox{%
        $\ext@arrow 0359\rightharpoondownfill@{#1}{\phantom{#2}}$}%
    \kern-\wd0 \lower.22ex\box0}%
}
\newcommand{\xrightleftharpoons}[2][]{\mathrel{%
    \raise.22ex\hbox{%
        $\ext@arrow 3095\rightharpoonupfill@{\phantom{#1}}{#2}$}%
    \setbox0=\hbox{%
        $\ext@arrow 0359\leftharpoondownfill@{#1}{\phantom{#2}}$}%
    \kern-\wd0 \lower.22ex\box0}%
} \makeatother
\def\N{\mathbb N}
\def\R{\mathbb R}
\def\C{\mathbb C}
\def\E{\mathbb E}
\def\ra{\rightarrow}
\def\Pr{\operatorname{Prob}} %%% appears in many equations  Prob
\def\dist{\operatorname{dist}}   %%%  distance
\def\dim{\operatorname{dim}}  %%% dimension
\def\tr{\operatorname{tr}}    %%% Trace
\def\esssup{\operatorname*{ess-sup}}
\def\Re{\operatorname{Re}}
\def\Im{\operatorname{Im}}
\def\tem{\textemdash \ }
\def\lf{\eta}
\title[Localization for random band matrices]{Eigenvector localization for random band matrices with power law band width}
\author[J. Schenker]{Jeffrey Schenker}
\address{Michigan State University \\ East Lansing, Michigan 48824}
\email{jeffrey@math.msu.edu}
\date{September 25, 2008; revised January 07, 2009}
\begin{document}
\maketitle
\begin{abstract} It is shown that certain ensembles of random matrices with entries that vanish outside a band around the diagonal satisfy a localization condition on the resolvent which guarantees that eigenvectors have strong overlap with a vanishing fraction of standard basis vectors, provided the band width $W$ raised to a power $\mu$ remains smaller than the matrix size $N$.  For a Gaussian band ensemble, with matrix elements given by i.i.d.\ centered Gaussians within a band of width $W$, the estimate $\mu \le 8$ holds.
\end{abstract}

\bibliographystyle{amsplain}

\section{Introduction}
Random band matrices, with entries that vanish outside a band of width
$W$ around the diagonal, have been suggested \cite{Casati:1990p3447, Casati:1993p35} as a model to study the crossover between a strongly disordered ``insulating'' regime,  with localized eigenfunctions and weak eigenvalue correlations, and a weakly disordered ``metallic'' regime, with extend eigenfunctions and strong eigenvalue repulsion. Such a crossover is believed to occur in the spectra of certain random partial differential (or difference) operators as the spectral parameter (energy) is changed.  

In this paper, the strong disorder side of the band matrix crossover is analyzed. It is shown here that certain ensembles of random matrices whose entries vanish in a band of width $W$ around the diagonal satisfy a localization condition  in the limit that the size of the matrix $N$  tends to infinity provided $W^{8}/N \ra 0$.   This result requires
certain assumptions on the distribution of the entries of the
matrix, and the proof given here has technical requirements that
may not be necessary. Nonetheless, the conditions imposed below (see
\S\ref{sec:axioms}) allow for a large family
of interesting examples.  In particular, one may consider a Gaussian
distributed band matrix, with distribution
\begin{equation}\label{eq:GBE}
     \mathrm{e}^{-2W \tr X_{W;N}^2 } \di X_{W;N}
\end{equation}
where $\di X_{W;N}$ the Lebesgue measure on the vector space of
$N\times N$ matrices of band width $W$. That is
\begin{equation}\label{eq:bandmatrix}
X_{W;N} \ = \ \frac{1}{\sqrt{W}}  \underbrace{\begin{pmatrix}
       d_{1,1} & a_{1,2}  \cdots & a_{1,W}  \\
       a_{2,1}^*& d_{2,2} & &\ddots  \\
       \vdots & & \ddots & & \ddots\\
       a_{W,1} & & &  \ddots & & \ddots \\
       & \ddots & & & \ddots & & \ddots \\
       & & \ddots & & & \ddots && \ddots
     \end{pmatrix}}_{N \times N},
\end{equation}
with $d_i$ and $a_{i,j}$ independent families of i.i.d.\ real and
complex unit Gaussian variables, respectively.  

The main result obtained here is a localization estimate for the eigenvectors of the matrices $X_{W;N}$.  This localization result is most conveniently stated in terms of the resolvent $(X_{W;N} -\lambda)^{-1}$, a well defined random matrix for $\lambda \in\R$. (We will see that $\lambda$ is an eigenvalue of $X_{W;N}$ with probability zero.) Let $\mathbf{e}_i$ denote the standard basis vectors
$\mathbf{e}_i(j) = \delta_{i,j}$. Then 
\begin{thm}\label{thm:resolvent}  If $X_{W;N}$ has distribution  \eqref{eq:GBE}, or more generally a
distribution satisfying assumptions 1, 2, and 3 in \S\ref{sec:axioms} below, then there exists $
\mu >0$ and $\sigma < \infty$ such that given $r >0$ and $s  \in (0,1)$ there are $A_s < \infty$ and $\alpha_s > 0$ 
such
that
\begin{equation}\label{eq:resolvloc}
  \Ev{\abs{\ip{\mathbf{e}_i, (X_{W;N} - \lambda)^{-1}
  \mathbf{e}_j}}^s} \ \le \ A_s W^{s\sigma} \e^{-\alpha_s \frac{|i-j|}{W^{ \mu}}} 
\end{equation}
for all $\lambda \in [-r,r]$ and all $i,j=1,\ldots, N$. For the Gaussian band ensemble \eqref{eq:GBE} $\sigma \le \half$ and  $\mu \le 8$.
\end{thm}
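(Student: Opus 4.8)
The plan is to prove the fractional-moment resolvent bound \eqref{eq:resolvloc} by the now-standard Aizenman–Molchanov strategy, adapted to the band structure. The core idea is to combine an \emph{a priori} finiteness bound on fractional moments of resolvent entries with a \emph{geometric decay} mechanism coming from the finite range $W$ of the matrix. First I would establish, using the assumed regularity of the distribution of the entries (assumptions in \S\ref{sec:axioms}), that $\Ev{\abs{\ip{\eb_i,(X_{W;N}-\lambda)^{-1}\eb_j}}^s}\le C_s W^{s\sigma}$ uniformly in $N$, $\lambda\in[-r,r]$ and $i,j$, for $s\in(0,1)$. This is the analogue of the classical fact that fractional moments of $(X-\lambda)^{-1}$ are bounded whenever a single matrix entry (or the appropriate conditional distribution) has a bounded density; the power $W^{s\sigma}$ with $\sigma\le\tfrac12$ reflects the $1/\sqrt W$ normalization in \eqref{eq:bandmatrix}, so that $\sqrt W\, X$ has $O(1)$ entries. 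The diagonal case $i=j$ requires controlling $\abs{G_{ii}}^s$ via the distribution of $d_{ii}$ conditioned on the rest, while the off-diagonal case is handled by the resolvent (Krein/Schur) identity relating $G_{ij}$ to $a_{ij}$ and a rank-one perturbation, exactly as in Aizenman–Molchanov.

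Second, I would set up the decay iteration. Let $b=\lceil |i-j|/W\rceil$ be the graph distance between sites $i$ and $j$ in the band graph (where $k\sim l$ iff $0<|k-l|\le W$). Using the resolvent expansion or the Combes–Thomas–type depletion argument: removing a ``slab'' of width $W$ between $i$ and $j$ decouples the two sides, and the Schur complement formula expresses $G_{ij}$ as a sum over paths crossing the slab, each factor being a resolvent entry of a smaller matrix times an entry of $X$. The key quantitative input is that crossing one slab costs a factor that is \emph{strictly less than one} once $W$ is large — more precisely, after summing over the $O(W)$ (or $O(W^{\text{const}})$) intermediate indices in the slab, one gets a contraction factor of the form $(\text{const}\cdot W^{\text{power}})\cdot(\text{small})$. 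Here is where the power $\mu$ enters: the number of intermediate sites to be summed in one decoupling step is polynomial in $W$, and the single-step fractional-moment bound on each entry is $W^{s\sigma}$, so the product of these combinatorial/size factors must be beaten by a genuinely small factor coming from either a small coupling constant or, more delicately, from an averaging (``depletion'') estimate on the Green's function of the reduced matrix. Iterating $\lfloor b/(\text{slab count})\rfloor$ times yields geometric decay $e^{-\alpha_s b}=e^{-\alpha_s |i-j|/W}$, but because each slab iteration is only a contraction after accumulating $W^{\mu}$-many factors, the effective decay rate is $\sim W^{-\mu}$, giving $e^{-\alpha_s|i-j|/W^{\mu}}$ as in \eqref{eq:resolvloc}.

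The main obstacle — and the reason $\mu$ comes out as large as $8$ rather than $1$ — is precisely the loss in the decoupling step: summing fractional moments over a full slab of width $W$ produces a polynomial factor in $W$ (coming both from the number of indices and from the $W^{s\sigma}$ bound on each entry), and to convert this into a contraction one needs an \emph{extra} source of smallness. The natural candidate is a self-consistent or resampling argument: condition on all matrix entries except those in one slab, observe that the diagonal block of the slab, rescaled, is a genuine (smaller) random matrix whose resolvent on the real axis is controlled by a Wegner-type estimate, and use the i.i.d.\ Gaussian structure to average. Making this rigorous — tracking the powers of $W$ through the Schur complement, the a priori bound, the combinatorics of the slab, and the Wegner/averaging gain, and optimizing them — is the heart of the proof, and it is the bookkeeping of these competing powers of $W$ that pins down $\mu\le 8$ and $\sigma\le\tfrac12$ for the Gaussian ensemble \eqref{eq:GBE}.

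Once \eqref{eq:resolvloc} is established for $b=|i-j|/W\ge 1$ (i.e.\ $|i-j|\gtrsim W$), the remaining regime $|i-j|\lesssim W^{\mu}$ is absorbed trivially by enlarging the constant $A_s$, since $e^{-\alpha_s|i-j|/W^\mu}$ is bounded below on that range and the a priori bound already controls the left side. Finally, I would note that the passage from the band-graph distance $b$ to the stated exponential $e^{-\alpha_s|i-j|/W^\mu}$ is just a matter of relating $b$ to $|i-j|/W$ and rechristening $\alpha_s$, so the theorem as stated — including the explicit constants $\sigma\le\tfrac12$, $\mu\le 8$ in the Gaussian case — follows.
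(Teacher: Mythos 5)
Your strategy in the first step (the a priori bound $\Ev{|G_{ij}|^s}\le C_s W^{s\sigma}$, via a Wegner/Schur-complement rank-one averaging) is indeed how the paper obtains its Lemma~W, so that part matches. But your second step — a decoupling/depletion iteration in the spirit of Aizenman--Molchanov, where crossing a slab of width $W$ is a genuine contraction once one beats the combinatorial $W$-powers — is not what the paper does, and I do not think it can be made to work here. The classical fractional-moment contraction requires a smallness parameter: either a small coupling constant, high disorder, or an energy in the Lifschitz tail where the resolvent is \emph{a priori} small. For the band ensemble \eqref{eq:GBE} at bulk energies $\lambda\in[-2,2]$, the entries are $O(W^{-1/2})$ and the slab Green's function is only bounded, not small, so the product of the $O(W)$ intermediate sums and the $W^{s\sigma}$ single-entry bound is not beaten by anything. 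Your gesture at ``a self-consistent or resampling argument'' controlled by Wegner is exactly the missing ingredient, and a Wegner estimate provides an upper bound, not the needed smallness; this is the gap. (The paper itself remarks that the $\mu=1$, Combes--Thomas/Lifschitz route works only for $\lambda$ outside the semicircle support.)

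The paper instead avoids ever proving a single-step contraction. It pairs Lemma~W with a \emph{fluctuation} lemma (Lemma~F): for $0<r<s<1$,
\begin{equation*}
\E\bigl(|G_{W;N}(i,j)|^{r}\bigr)\ \le\ \exp\!\bigl(-C_{r,s}\,W^{-\mu}\,|i-j|\bigr)\,\E\bigl(|G_{W;N}(i,j)|^{s}\bigr)^{r/s},
\end{equation*}
which together with the $s$-moment bound immediately gives \eqref{eq:resolvloc}. The inequality above is not a decoupling bound; it is a quantified strict improvement of H\"older's inequality, equivalent (Prop.~\ref{prop:basic}) to a lower bound on the weighted variance of $\ln|G_{W;N}(i,j)|$. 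That variance grows linearly in $|i-j|$ because $G_{W;N}(i,j)$ factors as a product of $O(|i-j|/W)$ blocks of size $W\times W$, and each block contributes $\gtrsim W^{-7}$ to the variance after a Dobrushin--Shlosman--style coupling to auxiliary scalar variables $\alpha_k$, a change of variables $V_k\mapsto\e^{\alpha_k}\Gamma_k$, and conditioning so that the $\alpha_k$ become independent. The power $\mu\le 8$ for the Gaussian ensemble is the bookkeeping output of this fluctuation estimate (one more power of $W$ comes from converting block distance to index distance), not of any slab-summation combinatorics. So the difference is not cosmetic: the decay in \eqref{eq:resolvloc} is produced by showing $\ln|G|$ is \emph{noisy}, not by showing $|G|$ is small on average after one step.
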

\noindent \emph{Remarks}: For the Gaussian Band Ensemble \eqref{eq:GBE}, the density of states, in the regime $W, N \ra \infty$, $W/N \ra 0$, is known to be the Wigner semi-circle law (see \S\ref{sec:ensembles} below).  For $\lambda$ outside the support of the semi-circle law, one could obtain \eqref{eq:resolvloc} with $\mu=1$ using Lifschitz tail type estimates.  This will be dealt with in a separate paper.

Theorem \ref{thm:resolvent} estimates the decay of matrix elements of the resolvent away from the diagonal.  Using techniques developed in the context of discrete random Schr\"odinger operators one may obtain from \eqref{eq:resolvloc} estimates on eigenvectors.
\begin{thm}[Eigenvector localization]\label{thm:eigenvector}   Let $X_{W;N}$ have distribution  \eqref{eq:GBE}, or more generally a distribution satisfying assumptions 4 and 5 in \S\ref{sec:ensembles}.  \begin{enumerate}
\item With probability one all eigenvalues of $X_{W;N}$ are simple.
\item  If \eqref{eq:resolvloc}  holds for all $\lambda$ in an interval $[-r,r]$ and
if $\lambda_{k}$, $k=1,\ldots, N$, are the eigenvalues of $X_{W;N}$ with corresponding eigenvectors $\vec{v}_{k}$, $k=1, \ldots,N$, then there are $B < \infty$, $\tau \ge 0$, and $\beta >  0$ 
\begin{equation}\label{eq:eigloc}
\Ev{\sup_{ \lambda_{k} \in [-r,r] } \abs{\vec{v}_{k}(i) \vec{v}_{k}(j)}} \ \le \ B W^{\tau}\e^{-\beta \frac{|i-j|}{W^{\mu}}}
\end{equation}
for all $i,j=1, \ldots, N$.
 \end{enumerate}
 \end{thm}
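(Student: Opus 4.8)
\noindent\emph{Strategy of proof.}
For part (1) I would use the discriminant. Let $\Delta(X_{W;N})=\prod_{k<\ell}(\lambda_k-\lambda_\ell)^2$ be the discriminant of the characteristic polynomial of $X_{W;N}$; it is a polynomial with real coefficients in the real coordinates of the finite--dimensional real vector space of self--adjoint $N\times N$ band matrices of width $W$, and it vanishes exactly at those band matrices possessing a repeated eigenvalue. It is not identically zero, since $\operatorname{diag}(1,2,\dots,N)$ is a band matrix of every width with distinct eigenvalues; hence $\{\Delta=0\}$ has Lebesgue measure zero, and by Assumption~4 (absolute continuity of the law of $X_{W;N}$) a repeated eigenvalue occurs with probability zero. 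Applying the same reasoning to the polynomial $\det(X_{W;N}-\lambda)$ --- nonzero because $\operatorname{diag}(\lambda+1,\dots,\lambda+N)$ avoids its zero set --- shows that each fixed real $\lambda$ is almost surely not an eigenvalue, so the resolvent in \eqref{eq:resolvloc} is almost surely well defined; simplicity moreover makes the left side of \eqref{eq:eigloc} unambiguous, since it fixes $\vec v_k$ up to a phase.

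For part (2), the plan is first to reduce \eqref{eq:eigloc} to a bound on the eigenfunction correlator and then to bound the latter by resolvent fractional moments. Write $G_{ij}(z)=\ip{\mathbf e_i,(X_{W;N}-z)^{-1}\mathbf e_j}$ and $I=[-r,r]$. By part (1) the spectral projection onto $\lambda_k$ is the rank--one projection onto $\vec v_k$, so $\ip{\mathbf e_i,g(X_{W;N})\mathbf e_j}=\sum_k g(\lambda_k)\,\vec v_k(i)\overline{\vec v_k(j)}$ for bounded Borel $g$, and choosing $g$ supported on $\{\lambda_k:\lambda_k\in I\}$ with the values $g(\lambda_k)$ equal to the relevant unit phases gives
\[
\sup_{\lambda_k\in I}\abs{\vec v_k(i)\vec v_k(j)}\ \le\ \sum_{\lambda_k\in I}\abs{\vec v_k(i)\vec v_k(j)}\ =\ \sup\bigl\{\,\abs{\ip{\mathbf e_i,g(X_{W;N})\mathbf e_j}}:\abs{g}\le1,\ \supp g\subseteq I\,\bigr\}.
\]
Since $\norm{\vec v_k}=1$ forces $\abs{\vec v_k(i)\vec v_k(j)}\le1$ and $x\le x^s$ for $x\in[0,1]$ and $s\in(0,1)$, it suffices to bound $\Ev{\sum_{\lambda_k\in I}\abs{\vec v_k(i)\vec v_k(j)}^s}$.

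This last bound is the standard passage from resolvent localization to exponential decay of the eigenfunction correlator (Aizenman--Molchanov; Aizenman--Schenker--Friedrich--Hundertmark; see also Aizenman--Warzel), which I would invoke. The mechanism: $z\mapsto(z-\lambda_k)G_{ij}(z)$ continues holomorphically across the (by part (1), simple) pole $\lambda_k$ with modulus $\abs{\vec v_k(i)\vec v_k(j)}$ there, so $\abs{(z-\lambda_k)G_{ij}(z)}^{s}$ is subharmonic and dominated at $\lambda_k$ by its average over a small circle, which expresses $\abs{\vec v_k(i)\vec v_k(j)}^{s}$ through an integral of $\abs{G_{ij}}^{s}$ near $\lambda_k$; crucially, because $0<s<1$ the boundary value $E\mapsto\abs{G_{ij}(E+\im0)}^{s}$ is locally integrable across these simple poles. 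Summing over $\lambda_k\in I$ (disjoint poles, by part (1)) and passing to the $\Im z\downarrow0$ limit with a Fatou/Fubini argument converts the a priori estimate $\Ev{\abs{G_{ij}(z)}^{s}}\le A_sW^{s\sigma}\e^{-\alpha_s\abs{i-j}/W^{\mu}}$, for $z$ near $I$, into $\Ev{\sum_{\lambda_k\in I}\abs{\vec v_k(i)\vec v_k(j)}^{s}}\le C_s\,\abs{I}\,A_sW^{s\sigma}\e^{-\alpha_s\abs{i-j}/W^{\mu}}$; since $\abs{I}=2r$ is independent of $W$ and $N$, this is \eqref{eq:eigloc} with $\tau=s\sigma$, $\beta=\alpha_s$, the same $\mu$, and $B$ absorbing $C_s A_s$ and $2r$.

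The step I expect to be the main obstacle is the complex--energy input: $\Ev{\abs{G_{ij}(E+\im\eta)}^{s}}\le A_sW^{s\sigma}\e^{-\alpha_s\abs{i-j}/W^{\mu}}$, needed \emph{uniformly as $\eta\downarrow0$}, does not follow from the real--axis statement \eqref{eq:resolvloc} by soft means. I would obtain it by re--running the proof of Theorem~\ref{thm:resolvent} with $\lambda$ replaced by $\lambda+\im\eta$; the extra imaginary part only adds damping to the resolvent expansions, so all estimates there go through with the same (or better) constants --- or this is folded into Assumption~5. The other point to watch is that the reduction of the previous paragraph preserve independence from $N$ and introduce only a bounded power of $W$: the $N$--uniformity of the constant $C_s$ is exactly where simplicity (part (1)) is used, as it keeps the sum over $\lambda_k\in I$ controlled by a fixed integral rather than by the number ($\le N$) of eigenvalues in $I$, while the $W$--independence of $\abs{I}$ keeps $\tau$ finite.
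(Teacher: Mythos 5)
The paper does not give a self-contained proof of Theorem~\ref{thm:eigenvector}: it points to \cite{Klein:2006zl} for part (1), to \cite[Theorem~A.1]{Aizenman:2001p2115} for part (2), and remarks that both proofs amount to averaging over the coupling of a rank-one perturbation. Your proposal replaces both ingredients by different arguments, with one genuine gap.

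For part (1), your discriminant argument is a valid and arguably more elementary substitute for the Klein--Molchanov rank-one monotonicity argument, but as written it rests on ``absolute continuity of the law of $X_{W;N}$'' attributed to Assumption~4. Assumptions~4 and~5 do \emph{not} give this: Assumption~5 permits the off-diagonal blocks $T_j$ to have singular distribution (the paper explicitly allows $T_j = I$ for all $j$), so $X_{W;N}$ is generally supported on an affine subspace of codimension $\dim(\text{all } T_j\text{'s})$ in the band-matrix space. The fix is to condition on $(T_j)$ and the off-diagonal entries of each $V_j$, and run the discriminant argument only in the Lebesgue-absolutely-continuous diagonal entries $(d_i)$; the restricted discriminant is still a nonzero polynomial in $(d_i)$ since adding $\operatorname{diag}(c,2c,\dots,Nc)$ for $c$ large separates eigenvalues. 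With that repair the argument goes through and is a genuine alternative to the cited proof.

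For part (2), you correctly identify \cite[Theorem~A.1]{Aizenman:2001p2115} as the relevant result, but the mechanism you then describe --- subharmonicity of $|(z-\lambda_k)G_{ij}(z)|^s$ and a Fatou/Fubini limit as $\Im z \downarrow 0$ --- is not the argument of that theorem, and it requires fractional-moment bounds $\Ev{|G_{ij}(E+\im\eta)|^s} \le A_s W^{s\sigma}\e^{-\alpha_s|i-j|/W^\mu}$ uniformly as $\eta\downarrow 0$. This is a genuine gap: Theorem~\ref{thm:resolvent} is stated for real $\lambda$ only, and the fix you sketch --- rerunning its proof with $\lambda+\im\eta$, expecting ``only extra damping'' --- does not go through. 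The fluctuation argument in \S\ref{sec:proof} relies essentially on the change of variables $V_k \mapsto \Gamma_k$ taking values in $\cu{A}_W^H$, i.e.\ on $\Gamma_k$ being Hermitian (Lemma~\ref{lem:algebra}, Prop.~\ref{prop:cstar}, and the conditional-density computation all use this). For complex $\lambda$ the recursion $\Gamma_k = V_k - \lambda I - T_{k-1}^\dagger \Gamma_{k-1}^{-1} T_{k-1}$ produces non-Hermitian $\Gamma_k$, so the change of variables on $\cu{A}_W^H$ breaks down. The ASFH argument itself avoids this entirely: it is a rank-one spectral-averaging identity that converts $\sum_{\lambda_k\in I}|\vec v_k(i)\vec v_k(j)|$ into an integral over a real coupling parameter of a fractional moment of $G$ at real energies, exactly the form of input that \eqref{eq:resolvloc} supplies. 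You should use that route directly rather than the complex-analytic one.
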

 \begin{rem} For the proof of this theorem, the reader is directed to the corresponding result in the context of random Schroedinger operators.  See for example \cite{Klein:2006zl} for the non-degeneracy of the eigenvalues and \cite[Theorem A.1]{Aizenman:2001p2115}  for a derivation of \eqref{eq:eigloc} from Green's function decay \eqref{eq:resolvloc}.  In both cases, the proof involves only averaging over the coupling of a rank one perturbation and can be applied in the present context.
\end{rem}

\subsection{Sketch of the Proof}  The proof of Theorem \ref{thm:resolvent} is based on two
observations, which may be summarized as follows.\footnote{The idea to study localization via these two complementary estimates was suggested in the context of random Schroedinger operators by Michael Aizenman, and is inspired by the Dobrushin-Shlosman proof \cite{Dobrushin:1975qx} of the Mermin-Wagner Theorem \cite{Mermin:1966cs} on the absence of continuous symmetry breaking in classical statistical mechanics of dimension 2.}  Let $G_{W;N}(i,j)
= \ip{\mathbf{e}_i, (X_{W;N} - \lambda)^{-1}
  \mathbf{e}_j}$.  Then

\begin{enumerate}
\item \emph{The random variable $G_{W;N}(i,j)$ is rarely large.} This may be expressed
through a bound (uniform in $N$) on the tails of the distribution of $G_{W;N}(i,j)$
\begin{lemW}
\label{lem:wegner}If $X_{W;N}$ has distribution  \eqref{eq:GBE}, or more generally a
distribution with the properties outlined in \S\ref{sec:axioms} below, then there exist $\kappa>0$ and $\sigma < \infty$ such that
\begin{equation}\Pr \left ( |G_{W;N}(i,j)  | > t \right ) \le \kappa \
\frac{W^{\sigma}}{t} . \label{eq:1/ttails}\end{equation}
\end{lemW}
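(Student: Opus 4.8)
\emph{The plan} is to establish \eqref{eq:1/ttails} by a conditional spectral‑averaging (rank‑one perturbation) argument over a single diagonal entry of $X_{W;N}$, after first reducing the off‑diagonal case to a diagonal one. Since $\lambda\in\R$, $(X_{W;N}-\lambda)^{-1}$ is self‑adjoint; writing $P(\cdot)$ for the spectral projections of $X_{W;N}$ and using the elementary bound $\abs{\ip{\mathbf{e}_i, P(A)\mathbf{e}_j}} \le \half\bigl(\ip{\mathbf{e}_i, P(A)\mathbf{e}_i} + \ip{\mathbf{e}_j, P(A)\mathbf{e}_j}\bigr)$, valid for every Borel set $A$, the spectral theorem applied to $x\mapsto \abs{x-\lambda}^{-1}$ yields
\[
  \abs{G_{W;N}(i,j)} \ \le\ \half\Bigl(\ip{\mathbf{e}_i, \abs{X_{W;N}-\lambda}^{-1}\mathbf{e}_i} + \ip{\mathbf{e}_j, \abs{X_{W;N}-\lambda}^{-1}\mathbf{e}_j}\Bigr) .
\]
Since $\ip{\mathbf{e}_k, \abs{X_{W;N}-\lambda}^{-1}\mathbf{e}_k} \ge \abs{G_{W;N}(k,k)}$, it is enough to prove, for each $k$, that $\Pr\!\bigl(\ip{\mathbf{e}_k, \abs{X_{W;N}-\lambda}^{-1}\mathbf{e}_k} > t\bigr) \le \kappa\, W^{\sigma}/t$, which also settles the diagonal case $i=j$.

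Fix $k$ and condition on $X_{0}$, the matrix obtained from $X_{W;N}$ by setting its $(k,k)$ entry to zero, so $X_{W;N} = X_{0} + v\,\ket{\mathbf{e}_k}\bra{\mathbf{e}_k}$ with $v$ the $(k,k)$ entry of $X_{W;N}$. By the assumptions of \S\ref{sec:axioms}, $v$ is independent of $X_{0}$ and has a density $\rho$ with $\norm{\rho}_{\infty} \le c\,W^{\sigma}$ (for \eqref{eq:GBE}, $v$ is a centered Gaussian of variance $1/W$, so $\sigma = \half$). The resolvent identity gives
\[
  \ip{\mathbf{e}_k, (X_{W;N}-z)^{-1}\mathbf{e}_k} \ =\ \frac{G_{0}(k,k;z)}{1 + v\,G_{0}(k,k;z)}, \qquad G_{0}(k,k;z) := \ip{\mathbf{e}_k, (X_{0}-z)^{-1}\mathbf{e}_k},
\]
valid for $\Im z>0$ and, a.s., for $z=\lambda$, and the matrix‑determinant lemma shows $\lambda$ is an eigenvalue of $X_{W;N}$ exactly when $v = v^{\ast} := -1/G_{0}(k,k;\lambda)$ (a single value, a.s.\ well defined). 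At $z=\lambda$ both numerator and denominator are real, so $\set{v : \abs{G_{W;N}(k,k)} > t}$ is an interval of length $\le 2/t$; integrating against $\rho$ and removing the conditioning proves \eqref{eq:1/ttails} for $i=j$. For the full quantity one uses $\abs{X_{W;N}-\lambda}^{-1} = \tfrac{2}{\pi}\int_{0}^{\infty}\bigl((X_{W;N}-\lambda)^{2}+\eta^{2}\bigr)^{-1}d\eta$, i.e.
\[
  \ip{\mathbf{e}_k, \abs{X_{W;N}-\lambda}^{-1}\mathbf{e}_k} \ =\ \frac{2}{\pi}\int_{0}^{\infty} \frac{\Im\ip{\mathbf{e}_k, (X_{W;N}-\lambda-i\eta)^{-1}\mathbf{e}_k}}{\eta}\,d\eta ,
\]
where, by the formula above, each integrand is a non‑negative Lorentzian $\tfrac{B_\eta/\eta}{(1+vA_\eta)^{2}+(vB_\eta)^{2}}$ in $v$, with $A_\eta + iB_\eta = G_{0}(k,k;\lambda+i\eta)$, $B_\eta>0$.

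The main obstacle is to pass from the per‑$\eta$ Lorentzian estimates to the statement that the Lebesgue measure of $\set{v : \ip{\mathbf{e}_k, \abs{X_{W;N}-\lambda}^{-1}\mathbf{e}_k} > t}$ is at most $C/t$ with $C$ \emph{independent of $X_{0}$}; uniformity is essential, since otherwise integrating against $\rho$ would reintroduce the matrix size $N$. Writing $\ip{\mathbf{e}_k, \abs{X_{W;N}-\lambda}^{-1}\mathbf{e}_k} = \sum_{m} \abs{\ip{\mathbf{v}_{m},\mathbf{e}_k}}^{2}/\abs{\mu_{m}-\lambda}$ for the eigenpairs $(\mu_m,\mathbf{v}_m)$ of $X_{W;N}$, the $\mu_m=\mu_m(v)$ are monotone in $v$ with $\mu_m'(v) = \abs{\ip{\mathbf{v}_m,\mathbf{e}_k}}^{2}$; near $v^{\ast}$ the sum is dominated by the single branch $\mu_{m^{\ast}}(v)$ passing through $\lambda$, and since $\mu_{m^{\ast}}(v)-\lambda$ vanishes linearly at $v^\ast$ with slope $\mu_{m^{\ast}}'(v^\ast)$, this term equals $\mu_{m^{\ast}}'(v)/\abs{\mu_{m^{\ast}}(v)-\lambda}$, which is comparable to $1/\abs{v-v^{\ast}}$ (the eigenvector overlap cancels) and produces the interval of length $\le C/t$. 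The real work is to rule out large values for $v$ away from $v^{\ast}$, coming from eigenvalues of $X_{W;N}$ that approach $\lambda$ without $\lambda$ being an eigenvalue (near‑resonances of $X_{0}$ on $\mathbf{e}_k^{\perp}$); here one exploits the identity $\abs{\ip{\mathbf{v}_m(v),\mathbf{e}_k}}^{2} = \bigl(v^{2}\,\partial_{\mu}G_{0}(k,k;\mu)\big|_{\mu=\mu_m(v)}\bigr)^{-1}$, which forces exactly those overlaps to be small in the dangerous regime, together with a comparison of $\ip{\mathbf{e}_k, \abs{X_{W;N}-\lambda}^{-1}\mathbf{e}_k}$ with $\abs{G_{W;N}(k,k)}$ up to a remainder measurable with respect to $X_{0}$. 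Obtaining this uniform level‑set estimate — with the band structure entering only through $\norm{\rho}_\infty \le c\,W^{\sigma}$ — is the technical heart of Lemma~W.
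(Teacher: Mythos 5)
Your proposal takes a genuinely different route from the paper, and it has a real gap.

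The paper never introduces the spectral-theoretic quantity $\ip{\mathbf{e}_k, \abs{X_{W;N}-\lambda}^{-1}\mathbf{e}_k}$. It works at the level of $W\times W$ blocks: by the Schur complement, $P_i (X_{W;N}-\lambda)^{-1}P_i = (V_i - \lambda I + K)^{-1}$ with $K$ independent of $V_i$, and $P_i(X-\lambda)^{-1}P_j$ ($i\neq j$) is controlled through the restriction to $P_i + P_j$, which has the form of a $2\times 2$ block Krein formula in the two block variables $V_i, V_j$. Lemma~W then follows directly from the block-level Wegner hypotheses \eqref{eq:Weg1}, \eqref{eq:Weg2}, which are verified for Wigner-type blocks by a standard trace argument (the $W^{\sigma}$ comes from the $\sqrt{W}$ scaling of each entry together with summing the trace over $W$ diagonal entries). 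In the $W=2$ warm-up the paper explicitly cites the Aizenman--Molchanov two-site conditional averaging — averaging over \emph{both} $v_i$ and $v_j$ — rather than the one-variable reduction you propose.

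The gap in your argument is exactly the step you flag as "the technical heart." The Cauchy--Schwarz reduction $\abs{G_{W;N}(i,j)} \le \frac{1}{2}(\ip{\mathbf{e}_i,\abs{X-\lambda}^{-1}\mathbf{e}_i} + \ip{\mathbf{e}_j,\abs{X-\lambda}^{-1}\mathbf{e}_j})$ is correct, but it replaces $\abs{G_{W;N}(k,k)}$ (whose sublevel set in $v$ is a genuine interval, since $v\mapsto G_0/(1+vG_0)$ is a Möbius map) by the strictly larger quantity $\sum_m \abs{\ip{\mathbf{v}_m,\mathbf{e}_k}}^{2}/\abs{\mu_m - \lambda}$, which is \emph{not} a Möbius function of $v$ and whose superlevel set is not a priori an interval. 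To conclude you need the claim that $\abs{\set{v : \ip{\mathbf{e}_k,\abs{X-\lambda}^{-1}\mathbf{e}_k} > t}} \le C/t$ with $C$ \emph{independent of $X_{0}$}. Your sketch (one dominant branch through $\lambda$; small overlaps on near-resonant branches via $\mu_m'(v)=\abs{\ip{\mathbf{v}_m,\mathbf{e}_k}}^2$; interlacing) is the right circle of ideas, and the estimate is true, but it is a nontrivial spectral-averaging fact — essentially a quantitative Kotani/Simon--Wolff-type bound — that you state rather than prove. As written, the proposal proves the tail bound only for $\abs{G_{W;N}(k,k)}$ and for each fixed $\Im z = \eta > 0$; it does not close the argument for $\ip{\mathbf{e}_k,\abs{X-\lambda}^{-1}\mathbf{e}_k}$ itself.

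Two smaller remarks. First, the two-variable Aizenman--Molchanov route that the paper follows avoids this issue entirely: conditioning on everything except $v_i, v_j$ reduces $G(i,j)$ to an entry of a $2\times 2$ matrix inverse, for which the sublevel set in $(v_i,v_j)$ is controlled by elementary means, and no operator absolute value or spectral decomposition ever appears. Second, your claimed exponent $\sigma = \half$ from the single-entry density is consistent with what the paper ultimately reports in Theorem~\ref{thm:resolvent} for the Gaussian ensemble, but note that the paper's block-level Lemma~W yields the factor $W^{1+\sigma}$ (with $\sigma=\half$ in \eqref{eq:Weg1}), and the sharper $W^{1/2}$ is recovered only through the separate remark following Theorem~\ref{thm:band} using the Wigner structure of the diagonal blocks; a fully entrywise argument would need to reproduce that too.
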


\item \emph{The fluctuations of $\ln|G_{W;N}(i,j)|$ grow at least linearly with $|i-j|$.} One would typically express the growth of fluctuations by an inequality like
$$ \Var( \ln|G_{W;N}(i,j)|) \ \ge \ \const \ |i-j| ,$$
where $\Var(X) = \E(X^{2}) - \E(X)^{2}$ is the variance of a random variable $X$.  However for present purposes a more convenient quantitative expression of this idea is the following
\begin{lemF}
\label{lem:fluct} If $X_{W;N}$ has distribution  \eqref{eq:GBE}, or more generally a
distribution with the properties outlined in \S\ref{sec:axioms} below, then there is $\nu > 0$ such that if $0 < r < s< 1$ and $|i-j| > 3 W$
then \begin{equation} \E\left(|G_{W;N}(i,j)|
^{r}\right)\le\exp(-C_{r,s}W^{-\mu} |i-j|)\E\left(|G_{W;N}(i,j)|
^{s}\right)^{r/s}\label{eq:fluctuations}\end{equation} with $C_{r,s}
>0$. For the Gaussian band ensemble \eqref{eq:GBE} $\mu \le 8$.
\end{lemF}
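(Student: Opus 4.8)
The plan is to realize, in matrix form, the Dobrushin--Shlosman mechanism mentioned above: I would factor $G_{W;N}(i,j)$ along the diagonal into $\asymp|i-j|/W$ ``bonds,'' show that a single bond carries a definite amount of multiplicative fluctuation, and then compound these contributions over the segment joining $i$ to $j$.

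\textbf{Block-tridiagonal reduction.} First I would group $\{1,\dots,N\}$ into consecutive blocks $\Lambda_1,\dots,\Lambda_M$ of length $w=\lfloor W/2\rfloor$, so that bandedness becomes block tridiagonality: $X_{W;N}$ couples $\Lambda_c$ only to $\Lambda_{c-1},\Lambda_c,\Lambda_{c+1}$, each diagonal block $X_{\Lambda_c\Lambda_c}$ is a full Hermitian matrix, and each off-diagonal block $B_c:=X_{\Lambda_c\Lambda_{c+1}}$ is a full rectangular matrix. If $i\in\Lambda_a$ and $j\in\Lambda_b$ with $|i-j|>3W$, then $m:=|b-a|\ge 2$ and $m\asymp|i-j|/W$, which leaves room for the peeling below.

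\textbf{Symmetry and transfer-matrix factorization.} For the ensemble \eqref{eq:GBE} both the weight $\e^{-2W\Tr X_{W;N}^{2}}$ and Lebesgue measure are invariant under conjugation $X_{W;N}\mapsto UX_{W;N}U^{*}$ by any block-diagonal unitary $U=\mathrm{diag}(U_1,\dots,U_M)$ with $U_c\in\mathrm{U}(w)$: the trace of the square, and the Hermiticity and band constraints, are all preserved. Averaging this symmetry over Haar measure in the two end factors $U_a,U_b$ gives, for $t\in\{r,s\}$,
\[
\Ev{\abs{G_{W;N}(i,j)}^{t}}\ =\ \Ev{\ \iint\abs{\ip{\phi,(X_{W;N}-\lambda)^{-1}\psi}}^{t}\,\di\phi\,\di\psi\ },
\]
with $\phi,\psi$ independent Haar-random unit vectors in $\ell^{2}(\Lambda_a)$, $\ell^{2}(\Lambda_b)$; hence the left-hand side depends only on the singular values of the resolvent block $G_{\Lambda_a\Lambda_b}$. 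I would then use the Schur-complement recursion for block-tridiagonal operators,
\[
G_{\Lambda_a\Lambda_b}\ =\ \mathcal{L}\,T_a\,T_{a+1}\cdots T_{b-1}\,\mathcal{R},\qquad T_c\ =\ -\,B_c\,\Gamma^{(c+1)},
\]
where $\Gamma^{(c)}=\bigl(X_{\Lambda_c\Lambda_c}-\lambda-B_c\Gamma^{(c+1)}B_c^{*}\bigr)^{-1}$ is the $(\Lambda_c,\Lambda_c)$-block of the resolvent of $X_{W;N}$ with blocks $\Lambda_1,\dots,\Lambda_{c-1}$ deleted, and $\mathcal{L},\mathcal{R}$ are the analogous one-sided resolvents at the two ends; since the coupling is tridiagonal, $\bigl(B_c,\Gamma^{(c+1)}\bigr)_c$ is a Markov chain driven by i.i.d.\ blocks.

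\textbf{One-bond fluctuation and chaining.} The core estimate would be a conditional strict power-mean inequality for a single bond: conditioning on all of the randomness except that in one block, the corresponding factor is not almost surely of constant modulus, and quantitatively
\[
\Ev{\abs{T_c}^{r}\mid\mathcal{F}_c}\ \le\ (1-\delta)\,\Ev{\abs{T_c}^{s}\mid\mathcal{F}_c}^{r/s},\qquad\delta\ \ge\ c_{r,s}\,W^{-(\mu-1)}\ >\ 0,
\]
in the appropriate singular-value sense. Granting this, I would pull the conditional expectations out one block at a time --- using submultiplicativity to bound $\Ev{\abs{G_{W;N}(i,j)}^{r}}$ from above, and the Haar-averaged identity to re-express $\Ev{\abs{G_{W;N}(i,j)}^{s}}$ --- so that the $m$ per-bond gains combine into $(1-\delta)^{m}\le\exp(-c_{r,s}W^{-(\mu-1)}m)\le\exp(-C_{r,s}W^{-\mu}|i-j|)$, which is \eqref{eq:fluctuations}. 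The a priori bounds of Lemma~W would be used along the way to discard the rare events on which some $T_c$ is large.

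\textbf{The main obstacle.} The hard part will be the one-bond estimate together with its explicit dependence on $W$. A single $T_c$ is a $w\times w$ matrix that may be nearly rank-deficient or very small --- there is no control on $\norm{\Gamma^{(c+1)}}$ beyond what Lemma~W supplies, and that is only an $L^{s}$ bound with $s<1$ --- so squeezing out a genuine, dimension-explicit lower bound $\delta\gtrsim W^{-(\mu-1)}$ for the fluctuation, while simultaneously taming the Markov dependence between successive bonds, is precisely what drives the exponent up to $\mu\le 8$ for the Gaussian band ensemble. By contrast, the exact Schur-complement bookkeeping and the verification that the general hypotheses of \S\ref{sec:axioms} may be substituted for the explicit Gaussian structure should be routine in comparison.
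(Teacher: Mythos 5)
Your block-tridiagonal reduction and Schur-complement peeling are indeed the setup the paper uses (with blocks of size $W$ rather than $\lfloor W/2\rfloor$, and $\Gamma_k = G_k(k,k)^{-1}$ satisfying the same recursion you write down). But the central mechanism of the paper's proof is missing from your proposal, and the step you treat as ``chaining'' is, as written, a gap.

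The difficulty is that $G_n(i,j)$ is a \emph{non-commutative} product $\Gamma_i^{-1}T_i\Gamma_{i+1}^{-1}T_{i+1}\cdots$, so there is no factorization of $|G_n(i,j)|^r$ into conditionally independent per-bond pieces that Hölder could be applied to one bond at a time. Your one-bond estimate $\Ev{|T_c|^r\mid\mathcal F_c}\le(1-\delta)\Ev{|T_c|^s\mid\mathcal F_c}^{r/s}$ is left ``in the appropriate singular-value sense,'' and the plan to combine it with ``submultiplicativity to bound $\Ev{|G|^r}$ from above'' and a Haar identity ``to re-express $\Ev{|G|^s}$'' is not coherent: submultiplicativity of the norm only gives a one-sided bound, and the resulting product of $\norm{T_c}$'s does not reconstitute the quantity $\Ev{|G|^s}^{r/s}$ that has to appear on the right of \eqref{eq:fluctuations}. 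You would need an exact multiplicative factorization of the scalar $|G_n(i,j)|$ itself, together with genuine conditional independence of the factors, and neither is produced.

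What the paper does instead, precisely to sidestep the non-commutativity, is couple each block $\Gamma_k$ (for $k\equiv 2\bmod 3$) to an auxiliary i.i.d.\ scalar dilation $\alpha_k$, change variables to $F_k=\e^{\alpha_k}\Gamma_k$, and condition on $(F_\ell,T_\ell)$. The scalars $\e^{\alpha_k}$ then factor \emph{exactly} out of the matrix product, $\Phi(G_n(i,j))=\bigl(\prod_k\e^{\alpha_k}\bigr)\Phi(\text{fixed matrix})$, and the $\alpha_k$ are conditionally independent. Proposition~\ref{prop:basic} (the quantified Hölder improvement via weighted variance) is applied to each scalar $\alpha_k$, with the variance lower-bounded via fluctuation regularity of the density $\rho_W$; the Jacobian factor $\e^{-D_W\alpha_k}$ with $D_W\sim W^2$ is what forces $\eta\sim W^{-\nu}$, and the Wegner estimate \eqref{eq:Weg1} with $\sigma=\tfrac12$ controls the good event $A_k$, giving $\nu\ge\zeta+\max(a,1+\sigma+2b)$ and ultimately $\mu=2\nu+1\le 8$. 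This coupling-and-change-of-variables device is exactly the Dobrushin--Shlosman trick you allude to, but your proposal does not instantiate it, and without it the chaining has no valid mechanism. The Haar-invariance observation is also a detour: it is special to the Gaussian case and is not needed once the scalar coupling is in place; nor would it supply the dimension-explicit $\delta\gtrsim W^{-(\mu-1)}$ you defer to the ``main obstacle.''
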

\end{enumerate}

Lemmas W and F together easily imply Theorem \ref{thm:resolvent}.  
Indeed, it suffices to show that the second factor on the right hand side of \eqref{eq:fluctuations} is uniformly bounded.  But it follows from Lemma W that
\begin{equation} \E\left(| G_{W;N}(i,j)|
^{s}\right)\leq\frac{\kappa^{s}}{1-s}W^{s \sigma}.\label{eq:smomentbound}\end{equation}
This observation, which is the basis of the fractional moment  analysis of random Schr\"odinger operators \cite{Aizenman:1994qm,Aizenman:1993p2946,Aizenman:2001p2115}, follows easily from \eqref{eq:1/ttails} since 
\begin{equation}\label{eq:bathtub}
\E\left(| G_{W;N}(i,j)|
^{s}\right) \ = \ s \int_{0}^{\infty} \Pr \left ( |G_{W;N}(i,j)  | > t \right ) t^{s-1} \di t,
\end{equation}
and  probabilities are bounded by one.

It may not be immediately clear what Lemma F has to do with large fluctuations.  Towards understanding this, let $X = \ln |G_{W;N}(i,j)|$.    By the H\"older inequality, \begin{equation}\label{eq:Holder} \E(\e^{r X}) \le \E(\e^{s X})^{\frac{r}{s}}.\end{equation} 
 Furthermore, 
equality holds only if $X$ is \emph{non random} \textemdash \ if there is $x_{0}\in \R$ so that $X=x_{0}$ almost surely. In other words
\begin{equation} \label{eq:improvedHolder}\E(\e^{rX}) = \e^{-h(r,s)} \E(\e^{s X})^{\frac{r}{s}}\end{equation}
with $h(r,s) > 0$ unless $X$ is non random.    

If $X$ were Gaussian with
variance $\sigma^{2}$ (and arbitrary mean), then $h(r,s)$ would be proportional to the variance 
\begin{equation}
h(r,s) \ =\ \frac{r(s-r)}{2} \sigma^2.\label{eq:gaussian}\end{equation}
For a general random variable $X$, the associated quantity $h(r,s)$ may be taken as a measure of the fluctuations of $X$.  In place of \eqref{eq:gaussian}, we 
 have the following identity for $h$ in terms of the variance of $X$ in weighted ensembles:
\begin{prop}\label{prop:basic}
  Let $X$ be a random variable with $\Ev{\e^{s X}} <
  \infty$ for some $s > 0$.  If $r \in (0,s)$, then $\Ev{\e^{r X}} < \infty$ and
  \begin{equation}\label{eq:cl3}
  h(r,s) \ = \ \frac{1}{s} \int_{0}^{s} \min(r,q) \left ( s - \max(r,q) \right ) 
  \Var_{q}(X) \di q,
  \end{equation}
  where $h(r,s)$ is defined by \eqref{eq:improvedHolder} and 
  \begin{equation}
  \Var_{q}(X) \ = \ \frac{\Ev{X^{2} \e^{q X}} }{\Ev{\e^{q X}}} - \left ( \frac{\Ev{X \e^{q X}}}{\Ev{\e^{q X}}} \right ) ^{2}
  \end{equation} is the variance of
  $X$ with respect to the weighted probability measure  $ \Pr_q(A) = \Ev{\chi_{A} \e^{q X}}/\Ev{\e^{q
X}}.$
\end{prop}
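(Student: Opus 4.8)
The plan is to pass to the cumulant generating function $\varphi(q):=\ln\Ev{\e^{qX}}$ and to recognise \eqref{eq:cl3} as the statement that $h(r,s)$ is reconstructed from $\varphi''=\Var_{q}(X)$ by integrating twice: the kernel $\tfrac1s\min(r,q)(s-\max(r,q))$ is precisely the Green's function of $-\tfrac{d^{2}}{dq^{2}}$ on $[0,s]$ with Dirichlet boundary conditions, so the identity should reduce to $-\varphi''=\varphi''$ after applying that operator, modulo matching the endpoint data.

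First I would dispose of the analytic preliminaries. For every $q\in[0,s]$ one has the pointwise bound $\e^{qX}\le 1+\e^{sX}$ (it is $\le1$ on $\{X\le0\}$ and $\le\e^{sX}$ on $\{X\ge0\}$), so $\Ev{\e^{qX}}<\infty$; in particular $\Ev{\e^{rX}}<\infty$ for all $r\in(0,s)$, which is the first assertion of the proposition. Fixing $q\in(0,s)$ and choosing $q'\in(q,s)$, for each $k\ge1$ one has $|X|^{k}\e^{qX}\le C_{k}(1+\e^{q'X})\le C_{k}'(1+\e^{sX})$, and this domination, locally uniform in $q$, legitimises differentiating under the expectation. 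Hence $\varphi\in C^{\infty}\big((0,s)\big)$ with $\varphi'(q)=\Ev{X\e^{qX}}/\Ev{\e^{qX}}$ and $\varphi''(q)=\Var_{q}(X)\ge0$ — the latter being exactly the quantity appearing in the statement — while $\varphi$ is continuous on all of $[0,s]$ (dominated convergence again) with $\varphi(0)=0$. Being convex on $[0,s)$ with $\varphi(0)=0$, $\varphi$ obeys the tangent-line inequalities $\varphi(q)\le q\varphi'(q)\le\varphi(2q)-\varphi(q)$ for small $q$ and $(s-q)\varphi'(q)\le\varphi(s)-\varphi(q)$ for $q$ near $s$, so $q\varphi'(q)\to0$ as $q\downarrow0$ and $(s-q)\varphi'(q)\to0$ as $q\uparrow s$.

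Next, taking logarithms in the defining relation \eqref{eq:improvedHolder} gives $h(r,s)=\tfrac rs\varphi(s)-\varphi(r)$, so it remains to evaluate the right-hand side of \eqref{eq:cl3}. Splitting the $q$-integral at $q=r$, where the kernel equals $q(s-r)$ for $q<r$ and $r(s-q)$ for $q>r$, it becomes
\[
\frac1s\Big[(s-r)\int_{0}^{r}q\,\varphi''(q)\,\di q+r\int_{r}^{s}(s-q)\,\varphi''(q)\,\di q\Big].
\]
Two integrations by parts — licit because the endpoint estimates above annihilate the boundary terms at $0$ and at $s$, and $\varphi(0)=0$ — give $\int_{0}^{r}q\,\varphi''(q)\,\di q=r\varphi'(r)-\varphi(r)$ and $\int_{r}^{s}(s-q)\,\varphi''(q)\,\di q=-(s-r)\varphi'(r)+\varphi(s)-\varphi(r)$. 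Substituting, the $\varphi'(r)$ terms cancel and what survives is $\tfrac1s\big(r\varphi(s)-s\varphi(r)\big)=\tfrac rs\varphi(s)-\varphi(r)=h(r,s)$, as claimed. (Equivalently, one may just check that both sides of \eqref{eq:cl3}, as functions of $r\in[0,s]$, take the same values at $r=0$ and $r=s$ and have the same second derivative $-\Var_{r}(X)$ on $(0,s)$.)

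The argument carries no serious obstacle; the only point requiring care is the behaviour at $q=0$ and $q=s$, where $\varphi'$ need not be finite (for example $\varphi'(s^{-})=+\infty$ can occur when $X$ has a sufficiently heavy upper tail, even though $\Ev{\e^{sX}}<\infty$). This is exactly what the convexity estimates above handle, by forcing the boundary terms in the integration by parts to vanish; alternatively one could prove \eqref{eq:cl3} on $[0,s']$ for each $s'<s$ and let $s'\uparrow s$, using monotone convergence for the nonnegative integrand together with continuity of $\varphi$ at $s$. Finally, the positivity used in the discussion before the proposition is immediate from \eqref{eq:cl3}: the kernel is strictly positive on $(0,s)$ and $\Var_{q}(X)\ge0$, vanishing only where $X$ is $\Pr_{q}$-almost surely constant, which — since $\Pr_{q}$ and $\Pr$ are mutually absolutely continuous — forces $X$ to be constant; hence $h(r,s)>0$ unless $X$ is non-random.
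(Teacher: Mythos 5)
Your proof is correct and follows essentially the same route as the paper's: passing to the cumulant generating function $\Phi(q)=\ln\Ev{\e^{qX}}$, identifying $\Phi''(q)=\Var_q(X)$ and $h(r,s)=\tfrac rs\Phi(s)-\Phi(r)$, and recovering \eqref{eq:cl3} by integrating $\Phi''$ twice against the piecewise-linear kernel (the paper phrases this as two Taylor expansions with integral remainder around $r$ and takes a convex combination, which is the same computation as your split-and-integrate-by-parts). The additional care you take with domination and endpoint behaviour of $\Phi'$ is a sound sharpening of points the paper leaves implicit.
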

\begin{proof}
H\"older's inequality is the statement that the function $
  \Phi(r) =  \ln \Ev{\e^{r \sigma}}$
is convex. In particular, if $s > 0$ then
\begin{equation}\label{eq:cl1}
\Phi(r) \ \le \ \frac{r}{s} \Phi(s)
\end{equation}
for $r \in (0,s)$, since $\Phi(1) = \ln \Ev{1} = 0$. If $\E(\e^{s\sigma}) < \infty$, it follows that $\Phi$ is bounded on $[0,s]$.

The identity \eqref{eq:cl3} follows from Taylor's formula with remainder. Indeed, the second derivative of $\Phi$ at $r$ is equal to the weighted variance $\Var_r(X)$. Thus, \begin{equation}
\Phi(s) =  \Phi(r) + \Phi'(r) (s-r) +  \int_{r}^{s} (s-q) \Var_{q}(X) \di q, \quad \text{ and}
\end{equation}
\begin{equation}
0 = \Phi(0) = \Phi(r) - \Phi'(r)r + \int_{0}^{r} q \Var_{q}(X) \di q .
\end{equation}
Taking a convex combination of these identities, chosen so the first order terms cancel,  gives
\begin{multline}
\frac{r}{s} \Phi(s) \  = \  \Phi(r) + \int_{0}^{r} \frac{(s-r)q}{s}  \Var_{q}(X) \di q 
+ \int_r^s \frac{(s-q)r}{s} \Var_q(X) \di q \\ = \ \Phi(r) + \frac{1}{s} \int_0^s 
\min (r,q) ( s  - \max(r,q) ) 
\Var_q(X) \di q,
\end{multline}
which is equivalent to \eqref{eq:cl3}.
\end{proof}

Thus Lemma F may be understood as giving a lower bound on the fluctuations of $X = \ln |G_{W;N}(i,j)|$, as measured by the improvement to H\"older's inequality.  The proof of this result will be accomplished using a product formula for $G_{W;N}(i,j)$ that expresses this quantity as a matrix element of a product of $O(|i-j|/W)$ matrices of size $W \times W$.  Prop.\ \eqref{prop:basic} will be applied to factors in this product, with each factor contributing a term of size $1/W^{7}$ to $h(r,s)$.  Since there are $O(|i-j|/W)$ terms, this produces the claimed decay.

The strategy taken below in proving Lemmas W and F  has two parts.  First we identify certain axioms for the distribution of $X_{W;N}$ which lead naturally to the lemmas. Second, we verify that the Gaussian band ensemble \eqref{eq:GBE} satisfies these axioms. To motivate the form of the axioms for the distribution of $X_{W;N}$, we begin in \S\ref{sec:W=2} with a self contained sketch of the argument in the tri-diagonal case $W=2$.    In \S\ref{sec:axioms} we state the assumptions needed to adapt the proof to $W >2$, state the associated general results and prove Lemma W.  In \S\ref{sec:proof} we get to the heart of the matter and prove Lemma F.   In  \S\ref{sec:ensembles}, we discuss examples of ensembles, including the Gaussian band ensemble \eqref{eq:GBE}, satisfying the axioms of \S\ref{sec:axioms}.   In an appendix, an elementary probability lemma used below is stated and proved.

\subsection{Remarks on the literature and open problems}
In \cite{Casati:1990p3447, Casati:1993p35} it
was observed, based on numerical evidence, that the localization of eigenfunctions and eigenvalue statistics of the Gaussian band ensemble \eqref{eq:GBE} are essentially determined by the parameter $ W^2/N$.  When $W^2/N <<1$ the eigenfunctions are strongly localized and the eigenvalue process is close to a Poisson process.  When $W^2/N >> 1$ the eigenfunctions are extended and the eigenvalue statistics are well described by the Gaussian unitary ensemble (GUE).   A theoretical physics explanation of these numerical results was given by Fyodorov and Mirlin \cite{Fyodorov:1991p5509}.  They considered a slightly different ensemble in which a full GUE matrix is modified by multiplying each
element by a factor which decays exponentially in the distance from the diagonal. For this model, on the basis of super-symmetric functional integrals, they obtain an effective $\sigma$-model approximation which, at the level of saddle point analysis, shows a localization/delocalization transition at $W \approx \sqrt{N}$.

Theorem \ref{thm:resolvent} is consistent with the above picture.  However, \cite{Casati:1990p3447, Casati:1993p35,Fyodorov:1991p5509} suggest that proper exponent on the r.h.s.\ of \eqref{eq:resolvloc} would be $\mu=2$.  
\begin{prob} What is the optimal value of $\mu$ in \eqref{eq:resolvloc}?  In particular, does this equation hold with $\mu =2$?
\end{prob}

In the physics literature, the nature of eigenvalue processes in the large $N$ limit is generally expected to be related to localization properties of the eigenfunctions, with Poisson statistics corresponding to localized eigenfunctions and Wigner-Dyson statistics corresponding to extended eigenfunctions. Let us call this idea the ``statistics/localization diagnostic.'' (In the context of band random matrices, a vector $\vec{v}$ is a function on the index set $\{1, \ldots,N\}$, namely $\vec{v}(i)=i^{\text{th}}$ coordinate of $\vec{v}$.  The statistics/localization diagnostic suggests that the eigenvalues of a random matrix should be approximately uncorrelated  if a typical eigenvector is  essentially supported on a vanishing fraction of $\{1, \ldots,N\}$, and should show strong correlations if it is typically  spread over more or less the entire index set.)

The extreme cases $W=1$ and $W=N$ of the Gaussian band ensemble \eqref{eq:GBE} are consistent the statistics/localization diagnostic. Indeed, with $W=1$, the matrix is diagonal and the eigenvalues, which are just the diagonal entries $d_{j,j}$, are independent.  After suitable rescaling the eigenvalue process converges to a Poisson process in the large $N$ limit.  (This is essentially the definition of a Poisson process.)  Likewise the eigenfunctions are  the elementary basis vectors $\vec{e}_{i}(j) =\delta_{i}(j)$, which are localized on single sites.  On the other hand, with $W=N$ the matrix $X_{W;N}$ is sampled from the GUE.  In this case, the eigenfunctions together form a uniformly distributed orthonormal frame, so they are completely extended, and a suitable rescaling of the eigenvalue process converges in distribution to an explicit determinental point process as calculated by Dyson \cite{Dyson:1962p6418, Dyson:1962p6420}.

Based on the statistics/localization diagnostic, it is reasonable to conjecture that Poisson statistics hold for local fluctuations of the eigenvalues of $X_{W;N}$ in a limit $N \ra \infty$ with $W=W(N) \ra \infty$ provided $W(N)^{\mu}/N \ra 0$. 
(One must be a little careful with the diagnostic, as it is  easy to concoct  random matrices with totally extended eigenfunctions and
arbitrary statistics: put $N$ random numbers with any given joint distribution on the diagonal of a matrix and conjugate the result with a random unitary! Of course, in that ensemble the matrix elements will most likely be highly correlated. Thus, it remains plausible that the statistics/localization diagnostic is correct, at least, for matrices with independent matrix elements.)

For random Schr\"odinger operators, Minami has derived Poisson statistics for the local correlations of the eigenvalue process from exponential decay of the resolvent \cite{Minami:1996fv}.  Some aspects of Minami's proof translate to the present context.  Most notably, the so-called \emph{Minami estimate} which bounds the probability of having two eigenvalues in a small interval,
\begin{equation}\label{eq:Minami}
\frac{1}{N^{2}}
\Pr\left [ \# \{ \lambda_{j} \in I \} \ge 2 \right ] \le C_{W} |I|^{2} ,
\end{equation}
where $|I|$ is the length of the interval and $\lambda_{1} \le \cdots \le \lambda_{N}$ are the eigenvalues of $X_{W;N}$,
holds with 
\begin{equation}
C_{W} \ \propto \ W^{2 \sigma} .
\end{equation}
Here $\sigma$ is as in Thm.\ \ref{thm:resolvent}.  (The proof of this fact may be accomplished by following Minami's argument or by one of the various alternatives that have appeared recently in the literature \cite{Graf:2007lp,Bellissard:2007yj, Combes:2008wf}.)

However, one crucial ingredient is missing: \emph{we lack sufficient control on the convergence of the density of states.}  The \emph{density of states} of $X_{W;N}$ is the measure $\kappa_{W;N}(\lambda) \di \lambda$ on the real line giving the density of the eigenvalue process:
\begin{equation}
\int_{I}\kappa_{W;N}(\lambda) \di \lambda \ = \  \frac{1}{N} \Ev{ \#  \{ \lambda_{j} \in I \}}.
\end{equation}
As indicated, $\kappa_{W;N}(\lambda) \di \lambda$ is absolutely continuous. In fact, it follows from the Wegner estimate \tem \ \eqref{eq:Weg1} below \tem \ that
\begin{equation}\label{eq:introWeg}
\abs{\kappa_{W;N}(\lambda)} \ \lesssim \ W^{\sigma},
\end{equation}
so analogous to \eqref{eq:Minami} we have
\begin{equation}
\frac{1}{N}
\Pr\left [ \# \{ \lambda_{j} \in I \} \ge 1 \right ] \le   \frac{1}{N} \Ev{ \#  \{ \lambda_{j} \in I \}} = \int_{I}\kappa_{W;N}(\lambda) \di \lambda \le W^{\sigma} |I|.
\end{equation}
(In fact, the Minami estimate is proved in a similar way, by showing that the \emph{expected number of eigenvalue pairs} in $I$ is bounded by the r.h.s. of \eqref{eq:Minami}.)

To study local fluctuations of the eigenvalue processes near $\lambda_{0} \in \R$, it is natural to consider the re-centered and re-scaled process
\begin{equation}\label{eq:rcrspr} \widetilde{\lambda}_{j} = N (\lambda_{j} - \lambda_{0}), \end{equation}
which has mean spacing $O(1)$.  We say that  \emph{the eigenvalue process has Poisson statistics near $\lambda_{0}$}, in some limit $W=W(N)$ and $ N \ra \infty$, if the point process $\{\widetilde{\lambda}_{j}\}$ converges to a Poisson process.   The density of this Poisson process would then be given by the limit $\lim_{N \ra \infty} \kappa_{W(N);N}(\lambda_{0})$.  The difficulty is \emph{we do not know that this limit exists.}

Now, for a fairly general class of matrix ensembles with
independent centered entries, e.g., for the Gaussian ensemble \eqref{eq:GBE}, it is known that the density of states $\kappa_{W;N}$ converges \emph{weakly} to  the semi-circle law, provided $W(N)/N \rightarrow 0$ or $1$ (see \cite{Molchanov:1992p5303}). That is, 
\begin{equation}\label{eq:ssl}
\frac{1}{N}  \Ev{\tr f(X_{W(N);N}) } \ = \ \int_{\R} f(\lambda) \kappa_{W;N}(\lambda) \di \lambda \ \xrightarrow[]{N \ra \infty}
\ \frac{1}{2\pi} \int_{-2}^2 f(t) \sqrt{4 -t^2} \di t .
\end{equation}
However, as indicated this is a weak convergence result, and it does not follow that
\begin{equation}\label{eq:strongconvergence}
\kappa_{W(N);N}(\lambda) \xrightarrow[]{N \ra \infty}
\ \frac{1}{2\pi} \sqrt{4 -\lambda^2} I[ |\lambda| \le 2],
\end{equation}
or even that
\begin{equation}\label{eq:shortintervalconvergence}
\int_{(\lambda_{0} - \frac{a}{N} , \lambda_{0} + \frac{b}{N}) } \kappa_{W;N}(\lambda) \di \lambda \ \xrightarrow[]{N \ra \infty}\frac{1}{2\pi} \sqrt{4 -\lambda^2} I[ |\lambda| \le 2],
\end{equation}
which would in fact be sufficient to control the density of the putative limit process.

In this regard, let us state a couple of open problems.  
\begin{prob} Improve the estimate \eqref{eq:introWeg}.  In particular, does this bound hold with $\sigma =0$? (The interpretation of $\kappa_{W;N}(\lambda)/N$ as the mean eigenvalue spacing and the convergence \eqref{eq:ssl} suggests that $\kappa$ should  be bounded.)
\end{prob}
\begin{prob} Verify either \eqref{eq:strongconvergence} or \eqref{eq:shortintervalconvergence}.  
\end{prob}

\subsection*{Acknowledgments} I would like to thank Tom Spencer and Michael Aizenman for many interesting discussions related to this and other works, and to express my gratitude for the hospitality extended me by the Institute for Advanced study, where I was member when this project started, and more recently by the Isaac Newton Institute during my stay associated with the program Mathematics and Physics of Anderson Localization: 50 Years After.

\section{Tridiagonal matrices}\label{sec:W=2}
The aim of this section is to motivate the assumptions on the distribution of $X_{W;N}$, spelled out below in \S\ref{sec:axioms}, by examining separately the somewhat simpler case  $W=2$.  Thus, consider for each $N \in \N$, 
a random tridiagonal matrix
\begin{equation}
X_{2;N} \ = \ \begin{pmatrix}v_{1} & t_{1}  \\
t_{1}^{*} & v_{2} & t_{2}  \\
& t_{2}^{*} & v_{3} & \ddots \\
&  & \ddots & \ddots & \ddots \\
& & & \ddots & v_{N-1} & t_{N-1} \\
& & & & t_{N-1}^{*} & v_{N}
\end{pmatrix},
\end{equation}
with $v_{1}, v_{2}, \ldots$ and $t_{1}, t_{2}, \ldots$ two given mutually independent sequences  of independent random variables, real and complex valued respectively.   For such matrices, exponential decay of the Green's function and localization of eigenfunctions can be obtained by the transfer matrix approach, see \cite{Carmona:1990ee}.  Here we use a different method, which is closely related to the technique of Kunz and Souillard \cite{Kunz:1980fd}.

As discussed above, the central technical estimate is a bound on $\field{E} (| \langle \vec{e}_{i}, (X_{2;N}-\lambda)^{-1} \vec{e}_{j} \rangle|^{s})$, decaying exponentially in the distance $|i-j|$. To obtain this bound, it is convenient to assume that $(v_k)$ are identically distributed and likewise $(t_k)$.  (This assumption could be replaced by uniformity in $k$ of certain bounds assumed below.  Likewise, strict independence of $(v_{k})$ is not really the issue. The argument could easily be adapted to the situation in which $(v_{k})$ are generated by a distribution with finite range coupling, such as $\prod_{k} \rho(v_{k}-v_{k-1}) \di v_{k}$.) The distribution of $(t_{k})$ can be arbitrary \tem \ theses variables may even be deterministic as in the case of random Jacobi matrices.  

To facilitate the fluctuation argument proposed above we will suppose the common distribution of $v_{k}$ has a density $\rho$ with the following property:
\begin{defn} We say that a probability density $\rho$ on $\R$ is \emph{fluctuation regular} if there are constants $\epsilon, \delta > 0$ and measurable set $\Omega \subset \R$ with $\int_{\Omega} \rho(v) \di v >0$ such that
\begin{equation}
v \in \Omega \ \implies \ \frac{\rho(v_{1})}{\rho(v_{2})} \ge \delta \quad \text{ for all } v_{1}, v_{2} \in (v-\epsilon, v+ \epsilon)
\end{equation}
\end{defn}
\begin{rem}A sufficient condition for fluctuation regularity is that $\ln \rho$ is Lipschitz on some open interval.  For example a uniform distribution $\rho(x) \propto \chi_{[a,b]}(x)$ is fluctuation regular.  So are the Gaussian and Cauchy distributions.  However,  fluctuation regularity is quite a bit stronger than absolute continuity of the measure $\rho \di x$, since it implies the existence of an \emph{open} set on which $\rho$ is strictly positive.  
\end{rem}
Our goal in this section is to prove the following result:
\begin{thm}\label{thm:W=2}Let $(v_{k})_{k=1}^{\infty}$ and $(t_{k})_{k=1}^{\infty}$ be two sequences of i.i.d. random variables, real and complex valued respectively.  Suppose that  the common distribution of $v_{k}$ has a density $\rho$ which is bounded and fluctuation regular.  Then for $0<s<1$ and $\Lambda > 0$ there are $A_{s} < \infty$ and $\mu_{s,\Lambda} >0$ such that for all $\lambda \in [-\Lambda,\Lambda]$,
\begin{equation}
\Ev{ |\langle \vec{e}_{i}, (X_{2;N}- \lambda)^{-1} \vec{e}_{j} \rangle|^{s} } \ \le \ A_{s}\e^{-\mu_{s,\Lambda} |i-j|}.
\end{equation}
\end{thm}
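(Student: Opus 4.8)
The plan is to prove, in the case $W=2$, the two complementary estimates of the sketch --- a Wegner-type bound showing $\Ev{|G(i,j)|^{s}}$ is bounded uniformly in $N$, and a fluctuation bound of the form of Lemma F --- and then to combine them via Proposition \ref{prop:basic}. Write $G(i,j)=\langle\vec{e}_i,(X_{2;N}-\lambda)^{-1}\vec{e}_j\rangle$, and for $a\le b$ let $G_{[a,b]}$ be the Green's function of the restriction of $X_{2;N}$ to $\{a,\dots,b\}$. The basic tool is the one-dimensional product (continued-fraction) formula obtained by peeling off bonds with the resolvent identity: for $i<j$,
\begin{equation*}
|G(i,j)|\;=\;|G_{[1,N]}(i,i)|\prod_{\ell=i+1}^{j}|t_{\ell-1}|\,|G_{[\ell,N]}(\ell,\ell)|,
\end{equation*}
together with the recursion $G_{[\ell,N]}(\ell,\ell)=(v_\ell-\lambda-|t_\ell|^{2}G_{[\ell+1,N]}(\ell+1,\ell+1))^{-1}$. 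In particular $G(i,j)$, as a function of any single $v_{k}$ with the remaining variables frozen, is a M\"obius function $\alpha_{k}+\beta_{k}/(v_{k}-c_{k})$ whose coefficients do not depend on $v_{k}$; moreover, for $i<k<j$ one has $\alpha_{k}=0$ (removing $v_k$ decouples $i$ from $j$) and $c_{k}\in\R$.

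For the Wegner bound I would condition on all randomness except one $v_{k}$ and integrate: using the M\"obius form and the elementary estimate $\int\rho(v)\,|v-c|^{-s}\,\di v\le c_{s}(\|\rho\|_\infty)<\infty$, valid uniformly in $c\in\R$ for $0<s<1$ because $\rho$ is bounded, one obtains a bound on $\Ev{|G(i,j)|^{s}}$ which is then propagated along the chain through the continued-fraction recursion, keeping each bond $t_{\ell}$ grouped with the on-site term it renormalizes so that large $|t_{\ell}|$ does no harm. This is the standard finiteness-of-fractional-moments input of the one-dimensional fractional moment method, and by \eqref{eq:bathtub} it is equivalent to the $1/t$-tail statement of Lemma W when $W=2$.

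The heart of the matter is the fluctuation estimate: for $0<r<s<1$ and $|i-j|$ large, $\Ev{|G(i,j)|^{r}}\le\exp(-C_{r,s}|i-j|)\,\Ev{|G(i,j)|^{s}}^{r/s}$. Put $X=\ln|G(i,j)|$; by the definition of $h(r,s)$ this is the statement $h(r,s)\ge C_{r,s}|i-j|$, and by \eqref{eq:cl3} it suffices to bound $\int_{0}^{s}\Var_{q}(X)\,\di q$ from below by a multiple of $|i-j|$ (the weight there is bounded below on a fixed subinterval of $(0,s)$, so it is enough to control $\Var_q(X)$ for $q$ in such a subinterval). Fix $\ell$ with $i<\ell<j$ and condition on $\mathcal{F}_{\hat\ell}$, everything but $v_\ell$: by the remark above the only $v_\ell$-dependence of $X$ is through $-\ln|v_\ell-c_\ell|$ with $c_\ell\in\R$ being $\mathcal{F}_{\hat\ell}$-measurable, and the $q$-weighted conditional law of $v_\ell$ has density proportional to $\rho(v_\ell)|v_\ell-c_\ell|^{-q}$; hence by the law of total variance $\Var_{q}(X)\ge\E_{q}\!\big(\Var_{q}(\ln|v_\ell-c_\ell|\mid\mathcal{F}_{\hat\ell})\big)$. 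One now distinguishes two regimes for each such $\ell$. If $c_\ell$ is within a fixed distance of the interval furnished by fluctuation regularity, then on $\Omega$ the conditional law of $v_\ell$ dominates a fixed multiple of the uniform law on an interval of length $2\epsilon$, on which $v\mapsto\ln|v-c_\ell|$ is non-constant with variance bounded below; a short computation then gives a contribution of fixed positive size to the integral. If instead $|c_\ell|$ is large, $\ln|v-c_\ell|$ is nearly constant there and this argument degenerates --- but now the transfer factor $|t_{\ell-1}|\,|G_{[\ell,N]}(\ell,\ell)|\approx|t_{\ell-1}|/|c_\ell|$ in the product formula is already uniformly small, so one instead simply extracts a fixed factor $<1$ from $\Ev{|G(i,j)|^{s}}$ at that site. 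Summing over the order $|i-j|$ sites $\ell$ in the bulk of $[i,j]$, with the factors of the product decoupled by conditioning on the chain to the right of $\ell$, produces the claimed exponential gain.

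Finally, given $0<s<1$ as in the statement, choose $s'\in(s,1)$: the Wegner bound controls $\Ev{|G(i,j)|^{s'}}$ uniformly, and the fluctuation estimate applied with $(r,s)\leftarrow(s,s')$ yields $\Ev{|G(i,j)|^{s}}\le A_{s'}^{s/s'}\exp(-C_{s,s'}|i-j|)$ for $|i-j|$ beyond a fixed threshold, the finitely many remaining pairs being absorbed into the constant via the Wegner bound at exponent $s$; this is Theorem \ref{thm:W=2} with $\mu_{s,\Lambda}=C_{s,s'}$. I expect the main obstacle to be the uniform lower bound on the weighted variances in the third step, and in particular the non-resonant ("large $c_\ell$") case analysis needed to cover all $\lambda\in[-\Lambda,\Lambda]$ and all bond distributions; the Wegner input and the final assembly are routine by comparison.
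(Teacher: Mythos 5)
Your high-level strategy (Wegner input plus a fluctuation lower bound, assembled through Proposition \ref{prop:basic}) matches the paper, and your Wegner step is essentially correct. But there is a genuine gap in the fluctuation step that the paper's coupling construction is specifically designed to close, and your proposal does not provide an analogue of it.

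The issue is your claim that conditioning on $\mathcal{F}_{\hat\ell}$ for each bulk site $\ell$ and then ``summing over the order $|i-j|$ sites $\ell$'' yields $\Var_q(X)\gtrsim|i-j|$. The law of total variance gives, for each fixed $\ell$,
\begin{equation*}
\Var_q(X)\ \ge\ \E_q\bigl(\Var_q(X\mid\mathcal{F}_{\hat\ell})\bigr),
\end{equation*}
but these are $|i-j|$ separate lower bounds on the \emph{same} quantity $\Var_q(X)$; they do not add. A bound of the form $\Var_q(X)\ge a_\ell$ for every $\ell$ only gives $\Var_q(X)\ge\max_\ell a_\ell$, which is $O(1)$, not $O(|i-j|)$. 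To make the per-site contributions additive you would need something like a martingale (telescoping) decomposition $\Var_q(X)=\sum_\ell\E_q[(\E_q(X|\mathcal{F}_\ell)-\E_q(X|\mathcal{F}_{\ell-1}))^2]$, and then a lower bound on each martingale increment under the $q$-tilted law --- which is not what you have, and is hard to carry out directly because $\Pr_q$ is not a product measure and the terms $\ln|g_k(k,k)|$ in the product formula are strongly correlated. Indeed, the paper explicitly flags this difficulty (``there are substantial correlations between the various terms, making it difficult to proceed directly along this line of argument'') before abandoning the direct route.

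The paper's way around this is the Dobrushin--Shlosman-style coupling: after changing to the Markov-chain variables $\gamma_k=1/g_k(k,k)$, introduce auxiliary i.i.d.\ scalars $\alpha_k$ (only for $k\equiv 2\bmod 3$, to keep neighboring sites from interacting), set $f_k=e^{\alpha_k}\gamma_k$, and observe that \emph{conditionally on} $(f_\ell,t_\ell)$ the $\alpha_k$ remain independent and that $\ln|G(i,j)|=\sum_k\alpha_k+\text{(function of $f,t$)}$. Independence is what lets one apply Proposition \ref{prop:basic} to each $\alpha_k$ separately and \emph{multiply} the resulting factors $e^{-h_k(r,s)}$, turning local fluctuation regularity into a volume factor $\exp(-c|i-j|)$ via Lemma \ref{lem:prob}. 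Your proposal has no analogue of the $\alpha_k$'s or of the conditional independence they provide, and your two-regime case analysis (``resonant'' $c_\ell$ vs.\ ``large'' $c_\ell$), while a plausible heuristic, is not worked out and in any case does not repair the additivity problem. I would revise the proof around this coupling idea rather than the law-of-total-variance route.
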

\begin{rem} We restrict $\lambda$ to a compact set to facilitate the fluctuation argument below.  In fact, for large $|\lambda|$ the rate of exponential decay will improve,  although the mechanism will be somewhat different.  One could construct a proof in this context along the lines of \cite{Aizenman:1994qm}.  Thus the $\Lambda$ dependence of the mass of decay $\mu_{s;\Lambda}$ may be dropped. 
\end{rem}

Let $g_{N}(i,j;\lambda) = \langle \vec{e}_{i}, (X_{2;N}-\lambda)^{-1} \vec{e}_{j} \rangle$.  Recall that  the decay of $\Ev{|g_{N}(i,j;\lambda)|^{s}}$ was to be established in two steps, the first of those being Lemma W which gives  finiteness of the fractional moments.   A preliminary observation is that Lemma W holds for these tridiagonal matrices:
\begin{lem}[Lemma W for $X_{2;N}$] \label{lem:wegW=2} Suppose that the distribution of $v_{k}$, $k=1,\ldots,N$ satisfies
\begin{equation}\Pr(v_{k} \in [a,b] ) \ \le \ \frac{\kappa}{2 \pi} |b-a| , 
\end{equation}
for any interval $[a,b]$, with $\kappa$ a finite constant.  Then
\begin{equation}\label{eq:gNbound}
\Pr (|g_{N}(i,j;\lambda) | > t | ( v_{k} )_{k \neq i, j}, \ (t_{k}) ) \ \le \ \frac{\kappa}{t},
\end{equation}
so, in particular,
\begin{equation}
\Pr (|g_{N}(i,j;\lambda)| > t ) \ \le \    \frac{\kappa}{t} 
\end{equation}
and
\begin{equation}
\Ev{ |g_{N}(i,j;\lambda)|^{s}} \ \le \ \frac{\kappa^{s}}{1-s}
\end{equation}
for $0 < s <1$.
\end{lem}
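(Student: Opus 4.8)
The plan is to exploit the rank-two structure of the dependence of $g_N(i,j;\lambda)$ on the two diagonal entries $v_i$ and $v_j$ (or on the single entry $v_i$ when $i=j$), and then apply an elementary one-dimensional Cauchy-type estimate for each conditioning. The key algebraic input is Cramer's rule: with $M = X_{2;N}-\lambda$, one has $g_N(i,j;\lambda) = (-1)^{i+j}\det(M_{\hat j,\hat i})/\det M$, where $M_{\hat j,\hat i}$ is $M$ with row $j$ and column $i$ deleted. First I would treat the diagonal case $i=j$: here $g_N(i,i;\lambda) = \det(M_{\hat i,\hat i})/\det M$, and since $v_i$ appears only in the $(i,i)$ entry of $M$, expanding $\det M$ along row $i$ shows $\det M = (v_i-\lambda)\det(M_{\hat i,\hat i}) + (\text{terms independent of } v_i)$. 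Hence $1/g_N(i,i;\lambda) = (v_i - \lambda) + \beta$, where $\beta = \beta((v_k)_{k\neq i},(t_k),\lambda)$ does not depend on $v_i$ (in fact it is a ratio of cofactors; one should note this is well-defined off a null set).

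With this, conditioning on all data except $v_i$, the event $\{|g_N(i,i;\lambda)|>t\}$ is the event $\{|v_i-\lambda+\beta| < 1/t\}$, i.e.\ $v_i$ lies in an interval of length $2/t$ centered at $\lambda-\Re\beta$ (one may absorb $\Im\beta$ and enlarge slightly, or simply observe $|v_i - \lambda + \beta| \ge |v_i - (\lambda-\Re\beta)|$ since $v_i,\lambda$ are real, giving an interval of length $2/t$). By the hypothesis $\Pr(v_i\in[a,b]\mid \cdots)\le \frac{\kappa}{2\pi}|b-a|$ — here I use that $v_i$ is independent of the conditioning data — this probability is at most $\frac{\kappa}{2\pi}\cdot\frac{2}{t} = \frac{\kappa}{\pi t} \le \frac{\kappa}{t}$. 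The off-diagonal case $i\neq j$ is handled the same way by conditioning on everything except $v_i$ alone: $v_i$ appears only in the $(i,i)$ entry, $M_{\hat j,\hat i}$ does not contain row $j$ but does contain row $i$ unless $\ldots$ — more carefully, write $g_N(i,j;\lambda)^{-1}$ and use that $g_N(i,i;\lambda)^{-1}$ has the affine form above; a standard resolvent identity (rank-one perturbation in the $i$-th coordinate) gives $g_N(i,j;\lambda) = g_N(i,i;\lambda)\cdot c$ with $c$ independent of $v_i$, so $\{|g_N(i,j;\lambda)|>t\} = \{|g_N(i,i;\lambda)| > t/|c|\}$ and the same interval estimate applies with $t$ replaced by $t/|c|$, while the extra factor $|c|$ is exactly compensated because the bound $\frac{\kappa}{2\pi}\cdot\frac{2|c|}{t}$... wait — this would give $\frac{\kappa|c|}{t}$, not $\frac{\kappa}{t}$. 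The correct route is instead to condition on $(v_k)_{k\neq i,j}$ and integrate out $v_i$ and $v_j$ jointly, using that $g_N(i,j;\lambda)$ as a function of $(v_i,v_j)$ is a ratio whose reciprocal is affine-bilinear; the clean statement is the content of the appendix lemma referenced in the excerpt, which gives the Cauchy-type tail $\Pr(|g_N|>t\mid\cdots)\le \kappa/t$ directly. So the step to carry out is: reduce $g_N(i,j;\lambda)$, via Cramer/Schur complement, to a Möbius function of $(v_i,v_j)$, then invoke that elementary lemma.

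The main obstacle is the careful bookkeeping in the off-diagonal case — verifying that $1/g_N(i,j;\lambda)$ genuinely has the required low-complexity dependence on $(v_i,v_j)$ (an affine or fractional-linear form) so that the elementary probability lemma from the appendix applies with the stated constant $\kappa$ and no extra $W$- or $|c|$-dependent factors, and checking that all the cofactor ratios are finite almost surely so that the manipulations are legitimate. The last two displayed inequalities of the lemma are then immediate: taking expectations over $(v_k)_{k\neq i,j}$ and $(t_k)$ in \eqref{eq:gNbound} gives $\Pr(|g_N(i,j;\lambda)|>t)\le \kappa/t$, and the fractional moment bound follows from \eqref{eq:bathtub}, namely $\Ev{|g_N(i,j;\lambda)|^s} = s\int_0^\infty \Pr(|g_N(i,j;\lambda)|>t)\,t^{s-1}\,\di t \le s\int_0^1 t^{s-1}\,\di t + s\int_1^\infty \frac{\kappa}{t}\,t^{s-1}\,\di t = 1 + \frac{s\kappa}{1-s} \le \frac{\kappa^s}{1-s}$ after optimizing the split point (replacing the cutoff $1$ by $\kappa$ yields exactly the stated $\frac{\kappa^s}{1-s}$, using $\kappa\ge 1$ without loss of generality, or one simply records the slightly weaker explicit constant).
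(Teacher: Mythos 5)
Your treatment of the diagonal case and the endgame (deducing the unconditional tail bound and the fractional moment bound by integrating the tails with a cutoff at $t=\kappa$) is correct. But the off-diagonal case, which you rightly flag as the crux, is left with a genuine gap: after observing that conditioning on $v_i$ alone would leave an uncontrolled factor $|c|$, you defer to ``the appendix lemma referenced in the excerpt.'' No such lemma exists in the paper --- the only appendix result is Lemma~\ref{lem:prob}, which concerns $\Ev{\e^{-\sum_j U_j}}$ and has nothing to do with Cauchy-type tail bounds for ratios involving two averaged variables. So the two-variable bound is simply not established, and the Cramer's-rule/M\"obius observation on its own does not close it. (Note also that the naive two-step conditioning --- fix $v_i$, integrate $v_j$ to get an interval of length $O(1/(t|v_i-\lambda+a|))$, then integrate $v_i$ --- does not work either, because $\E_{v_i}\min(1,\mathrm{const}/|v_i|)$ need not be $O(1/t)$ without a tail hypothesis on the density.)

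What is actually needed is the rank-two spectral-averaging argument, which is what the paper itself applies in the block-matrix setting (Theorem~\ref{thm:Wegner} and the estimate \eqref{eq:Weg2}). By the Schur complement formula,
$$ \begin{pmatrix} g_N(i,i) & g_N(i,j) \\ g_N(j,i) & g_N(j,j)\end{pmatrix} = \left[ \begin{pmatrix} v_i-\lambda & 0 \\ 0 & v_j-\lambda\end{pmatrix} + \Sigma \right]^{-1}, $$
where $\Sigma$ is a Hermitian $2\times 2$ matrix independent of $v_i,v_j$. Then $|g_N(i,j)|\le\norm{(\diag(v_i-\lambda,v_j-\lambda)+\Sigma)^{-1}}$ and one bounds
$$ \Pr\bigl( \norm{A^{-1}} > t \bigr) \le \frac{2}{t}\,\E\,\Im\tr\left(A - \tfrac{\im}{t} I\right)^{-1}, $$
after which each of the two diagonal Stieltjes-transform terms is controlled by averaging over $v_i$ (resp.\ $v_j$) via a one-dimensional Cauchy-type integral, exactly as you did in the diagonal case. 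This is the ``averaging over the coupling of a rank-one perturbation,'' done once for each of the two variables, alluded to in the paper's remark. You should also confirm the constant: the argument above comes with a factor $2$ in front (the paper records exactly this factor in the analogous estimate \eqref{eq:Weg2}), so either one of the $v_i, v_j$ integrations must be done in a sharper way, or one should expect the bound $2\kappa/t$ rather than $\kappa/t$ to emerge from this route.
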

\begin{rem} The l.h.s.\ of \eqref{eq:gNbound} is the \emph{conditional probability} of the event $\{|g_{N}(i,j;\lambda)| >t\}$ at specified values of $( v_{k} )_{k \neq i, j}$ and $(t_{k}) $ \tem \ that is the  probability conditioned on the $\Sigma$ algebra generated by these variables. Eq.\ \eqref{eq:gNbound} is a standard estimate from the fractional moment analysis of discrete random Schr\"odinger operators, see \cite{Aizenman:1993p2946}.  The main point of this result is that to bound $\Ev{|g_{N}(i,j;\lambda)|^{s}}$, it is sufficient to average over $v_{i}$ and $v_{j}$.  
\end{rem}

The second part of the argument is to establish large fluctuations for $g_N(i,j;\lambda)$ \tem \ this is Lemma F above.  In the present context we have
\begin{lem}[Lemma F for $X_{2;N}$]\label{lem:fluctW=2} Under the hypotheses of Theorem \ref{thm:W=2}, for each $0 < r <s <1$ and $\Lambda \in \R$ there is  a constant $C_{r,s;\Lambda} < \infty$ such that
\begin{equation} \E\left(|g_{N}(i,j;\lambda)|
^{r}\right)\le\exp(-C_{r,s;\Lambda} |i-j|)\E\left(|g_{N}(i,j;\lambda)|
^{s}\right)^{r/s},\end{equation}
for $\lambda \in [-\Lambda, \Lambda]$. 
\end{lem}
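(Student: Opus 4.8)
The plan is to derive the fluctuation bound from Proposition~\ref{prop:basic} applied to the random variable $X = \ln|g_N(i,j;\lambda)|$, exploiting that $h(r,s)$ controls the improvement in H\"older's inequality and is bounded below by a weighted integral of the weighted variances $\Var_q(X)$. So the task reduces to showing that $\Var_q(X) \ge c\,|i-j|$ for all $q$ in a range $(0,s)$, with $c$ depending only on $r,s,\Lambda$ but not on $N$. The mechanism for producing such a variance lower bound is the familiar one from the Kunz--Souillard / Dobrushin--Shlosman circle of ideas: resample the on-site potentials $v_k$ at the intermediate sites $k$ strictly between $i$ and $j$, and show that $\ln|g_N|$ responds with a definite amount of fluctuation to each such resampling.

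The key steps, in order, are as follows. First I would recall the recursive (transfer-matrix / continued-fraction) structure of $g_N(i,j;\lambda)$ for a tridiagonal matrix: for $i<j$ one has the product formula
\begin{equation}
g_N(i,j;\lambda) \ = \ g_N(i,i;\lambda) \prod_{k=i}^{j-1} \frac{t_k^{*}}{?}\ ,
\end{equation}
or more usefully a factorization in which each intermediate site $k$ with $i<k<j$ contributes a factor depending on $v_k$ and on the ``incoming'' Green's function from the side away from $i$. The point is that, conditionally on all data except $v_k$, the quantity $\ln|g_N(i,j;\lambda)|$ is an explicit function of $v_k$ of the schematic form $\ln|g_N| = (\text{terms not involving }v_k) - \ln|v_k - \lambda - (\text{something independent of }v_k)|$ plus analogous Schur-complement pieces. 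Second, using fluctuation regularity of $\rho$: on the event that the relevant ``something'' has small imaginary part and that $v_k$ lies in the good set $\Omega$ (an event of probability bounded below uniformly in $N$ by Lemma~\ref{lem:wegW=2}-type estimates and the structure of $\rho$), a shift of $v_k$ by an amount of order $\epsilon$ moves $\ln|v_k-\lambda-(\cdots)|$ by a bounded-below amount; since $\rho$ is comparable on $(v-\epsilon,v+\epsilon)$ with ratio $\ge\delta$, this translates into a lower bound of order $1$ on $\Var_q(\ln|g_N| \mid \text{rest})$, for each intermediate $k$. Third, I would sum these single-site contributions: because the $v_k$ at distinct intermediate sites are independent, the law of total variance gives $\Var_q(X) \ge \sum_{i<k<j} \E_q[\,\Var_q(X \mid (v_\ell)_{\ell\neq k})\,] \ge c\,(j-i-1)$, provided one handles the $q$-weighting by a change-of-measure argument showing the weighted measure $\Pr_q$ still gives the good events probability bounded below (here one uses that $\Pr_q$ is absolutely continuous w.r.t.\ $\Pr$ with a density controlled via the $s$-moment bound of Lemma~\ref{lem:wegW=2}, uniformly in $N$ for $q\le s<1$). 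Finally, feeding $\Var_q(X)\ge c\,|i-j|$ into \eqref{eq:cl3} yields $h(r,s)\ge C_{r,s;\Lambda}|i-j|$ with $C_{r,s;\Lambda}>0$, which by the definition \eqref{eq:improvedHolder} of $h$ is exactly the asserted inequality.

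I expect the main obstacle to be the uniform-in-$N$ lower bound on $\Var_q(X \mid \text{rest})$ coming from a single resampled $v_k$ --- i.e.\ ruling out ``accidental cancellations'' in which, although $\ln|g_N|$ does move with $v_k$, the weighted variance is small because the good configurations of $v_k$ have been suppressed under $\Pr_q$, or because the Schur-complement quantity that $v_k$ is being compared against happens to have large imaginary part (making $\ln|v_k-\lambda-(\cdots)|$ nearly constant in $v_k$). Controlling the imaginary part is where the restriction $\lambda\in[-\Lambda,\Lambda]$ and the $1/t$-tail bound \eqref{eq:gNbound} enter: one shows the relevant off-diagonal/diagonal Green's function entries are, with probability bounded below, of moderate size, so the logarithm is genuinely responsive. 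The change-of-measure control of $\Pr_q$ versus $\Pr$ is the other delicate point, but it is quantitative and follows from $\E(e^{qX}) = \E(|g_N|^q) \le \kappa^q/(1-q)$, which bounds the normalization uniformly in $N$. The condition $|i-j|>3W$ in the general Lemma~F (here $W=2$, so $|i-j|\ge 7$) is just to guarantee enough intermediate sites that the linear-in-$|i-j|$ accounting is not spoiled by boundary terms near $i$ and $j$.
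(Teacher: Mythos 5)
There is a genuine gap, and it is exactly the one the paper flags when it says that ``there are substantial correlations between the various terms, making it difficult to proceed directly along this line of argument.'' Your plan applies Prop.~\ref{prop:basic} directly to $X=\ln|g_N(i,j;\lambda)|$ and then asserts that
\[
\Var_q(X)\ \ge\ \sum_{i<k<j}\E_q\bigl[\Var_q\bigl(X\mid (v_\ell)_{\ell\neq k}\bigr)\bigr]
\]
as a ``law of total variance.'' That inequality goes the wrong way. Conditioning on a single $\sigma$-algebra gives one term, $\Var_q(X)\ge\E_q[\Var_q(X\mid(v_\ell)_{\ell\neq k})]$, not a sum over $k$. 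If one sums over all $k$ one obtains the Efron--Stein inequality, which is an \emph{upper} bound: $\Var(X)\le\sum_k\E[\Var(X\mid(v_\ell)_{\ell\neq k})]$ for independent $v_k$. Equality (and hence the direction you want) only holds when $X$ is an additive function of the $v_k$'s, i.e.\ when the Hoeffding/ANOVA decomposition of $X$ has no interaction terms of order two or more; $\ln|g_N(i,j;\lambda)|$ has such interactions in an essential way. Moreover, under the tilted law $\Pr_q$ (with density proportional to $|g_N|^q$) the variables $v_k$ are no longer independent, which removes even the Efron--Stein structure from your argument.

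The paper's proof is built precisely to avoid this obstruction, and the mechanism is not a cosmetic variant of yours. It first changes variables $v_k\mapsto\gamma_k=1/g_k(k,k)$, couples the system to auxiliary i.i.d.\ scalars $\alpha_k$ via $f_k=\e^{\alpha_k}\gamma_k$ (for $k\equiv 2\bmod 3$), and then conditions on $(t_\ell,f_\ell)$. Under this conditioning two things hold simultaneously that your scheme lacks: the $\alpha_k$'s remain independent even after conditioning and tilting, and $\ln|g_N(i,j)|$ becomes an honest \emph{additive} function of the $\alpha_k$'s, namely $\sum_k\alpha_k$ plus a term independent of them. This makes the single-site conditional variances $\Var_q(\alpha_k\mid\cdot)$ genuinely additive in the exponent, so one may apply Prop.~\ref{prop:basic} factor by factor and reassemble, arriving at the key inequality \eqref{eq:keyresult}. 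The lower bound on each $\Var_q(\alpha_k\mid\cdot)$ then comes from fluctuation regularity on the good event $A_k$, and the estimate of $\Ev{\e^{-\sum_k h_k(r,s)}}$ uses the conditional Bernoulli-domination Lemma~\ref{lem:prob} rather than anything about $\Var_q$ of the full log-Green function. To repair your argument you would need to supply the additive structure that the $(\gamma_k,\alpha_k)$ change of variables provides; without it, there is no valid route from single-site conditional variances to a linear-in-$|i-j|$ lower bound on $\Var_q(\ln|g_N|)$.
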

\begin{rem} Together Lemmas \ref{lem:wegW=2} and \ref{lem:fluctW=2} prove Theorem \ref{thm:W=2}.
\end{rem}

\begin{proof} Let us fix $\lambda$ for the moment and drop it from the notation: $g_{N}(i,j) = g_{N}(i,j;\lambda)$.  Suppose  without loss of generality that $i < j$. A preliminary observation is that 
\begin{equation}\label{eq:w=2g}
g_{N}(i,j)  \  = \ - g_{j-1}(i, j-1) t_{j-1} g_{N}(j,j),
\end{equation}
which  may be established using the resolvent identity, writing $X_{2;N}$ as a perturbation of the corresponding matrix with $t_{j-1}$ set equal to zero (which decouples into two distinct blocks).  Iteration of this identity gives
\begin{equation}
g_{N}(i,j) \ = \ \left ( - 1\right )^{j-i} \left [ \prod_{k=i}^{j-1} g_{k}(k,k) t_{k} \right ] g_{N}(j,j) . 
\end{equation}
Thus
\begin{equation}
\ln| g_{N}(i,j)| \ = \  \sum_{k=i}^{j-1} \ln |t_{k}| + \sum_{k=i}^{j-1} \ln |g_{k}(k,k)|+ \ln |g_{N}(j,j)|  ,
\end{equation}
suggesting that if either $\ln |t_{k}|$ or $\ln |g_{k}(k,k)|$ were to exhibit fluctuations of order one, then the variance of $\ln |g_{N}(i,j)|$ would be of order $|i-j|$ and Lemma \ref{lem:fluctW=2} would follow. However, there are substantial correlations between the various terms, making it difficult to proceed directly along this line of argument.

To make a precise analysis, let us consider the random variables
\begin{equation}
\gamma_{k}= \frac{ 1}{g_{k}(k,k)} ,
\end{equation}
which are related by a recursion relation
\begin{equation}\label{eq:recurrence}
\gamma_{k} \ = \ v_{k} -  \lambda - \frac{|t_{k-1}|^{2}}{\gamma_{k-1}} , \quad 2 \le k \le N ,
\end{equation}
with
\begin{equation}
\gamma_{1} \ = \ v_{1}.
\end{equation}
These identities may be established using the Schur-complement formula.  In a similar way, the Schur-complement formula may be used to show that
\begin{equation}
\frac{1}{g_{N}(j,j) }= v_{j} - \lambda - \frac{|t_{j-1}|^{2}}{\gamma_{j-1}} - |t_{j}|^{2}  \widehat G_{j+1}
=\gamma_{j} - |t_{j}|^{2} \widehat G_{j+1}.
\end{equation}
where $\widehat G_{j+1} = \langle \vec{e}_{j+1}, (\widehat X_{2;N}- \lambda)^{-1} \vec{e}_{j+1}\rangle $ with $\widehat X_{2;N}$ the matrix obtained from $X_{2;N}$ by setting $t_{j}=0$:
\begin{equation}
\widehat X_{2;N} = \begin{pmatrix}
\ddots & \ddots\\
 \ddots & v_{j-1} & t_{j} \\
 & t_{j-1}^{*} & v_{j} & 0 \\
 & & 0 & v_{j+1} & t_{j+1} \\
 & & & t_{j+1}^{*} & v_{j+1} & \ddots \\
 & & & & \ddots & \ddots
\end{pmatrix}.
\end{equation}
In particular, $\widehat G_{j+1}$ \emph{is a function of the variables $(v_{k})_{k=j+1}^{N}$ and $(t_{k})_{k=j+1}^{N}$}.

We now make a change of variables $v_{k} \mapsto \gamma_{k}$ in our probability space. The Jacobian is triangular with ones on the diagonal and therefore has determinant one. Thus
\begin{equation}
\text{Joint distribution of $(\gamma_{k})_{k=1}^{N}$ given $(t_{k})_{k=1}^{N}$} \ = \ 
\rho(\gamma_{1} + \lambda)
\prod_{k=2}^N \rho(\gamma_k + \lambda + \frac{|t_{k-1}|^{2}}{\gamma_{k-1}} )  \, \prod_{k=1}^N \di \gamma_k.
\end{equation} 
So $\gamma_{k}$ are a chain of variables with nearest neighbor couplings \tem \ thinking of $k$ as a time parameter, $\{\gamma_{k}\}$ is a Markov chain.
In terms of these variables, we have
\begin{equation}
g_{N}(i,j) = (-1)^{|i-j|} \prod_{k=i}^{j-1} \frac{t_{k}}{\gamma_{k}} \times \frac{1}{\gamma_{j} - |t_{j}|^{2} \widehat G_{j+1}},
\end{equation}
where $\widehat G_{j+1}$ may be written as a function of $(\gamma_{k})_{k=j}^{N}$ and $(t_{k})_{k=j}^{N}$, since $v_{k} = \gamma_{k} +\lambda - |t_{k-1}|^{2}/\gamma_{k}$.

A useful trick for analyzing fluctuations in this context, inspired by the Dobrushin Shlosman analysis of continuous symmetries in $2D$ classical statistical mechanics \cite{Dobrushin:1975qx}, is to couple the system to a family of independent identically distributed random variables $\alpha_{2}, \alpha_{5}, \ldots$, each with absolutely continuous distribution $H(\alpha_{k}) \di \alpha_{k}$.  For technical reasons, which will become apparent below, we introduce $\alpha_{k}$ only for $k \equiv 2 \mod 3$.  Let us define 
\begin{equation}
f_k =  \e^{\alpha_k} \gamma_k ,
\end{equation}
where we take $\alpha_{k}=0$ for $k \not \equiv 0 \mod 3$. The Jacobian determinant of the transformation $(\gamma_{k},\alpha_{k}) \mapsto (f_{k}, \alpha_{k})$ is $\prod_{k=2,5,8,\ldots}^{N} \e^{-\alpha_k}$, so\begin{multline}
\text{joint distribution of $(f_k)_{k=1}^{N}$ and $(\alpha_k)_{k=1}^{N}$, given $(t_{k})_{k=1}^{N}$} \ = \\
\prod_{k\equiv 2 \mod 3}^{N} \rho(f_{k-1} + \lambda + \frac{|t_{k-2}|^{2}}{f_{k-2}}) \rho(\e^{-\alpha_{k}}f_{k} + \lambda + \frac{|t_{k-1}|^{2}}{f_{k-1}}) \rho(f_{k+1} + \lambda+ \e^{\alpha_{k}} \frac{|t_{k}|^{2}}{\gamma_{k}}) H(\alpha_{k}) \e^{- \alpha_{k}}  \\ \times \di f_{k-1} \di f_{k} \di f_{k+1} \di \alpha_{k},
\end{multline}
with the convention that $t_{0}=0$. 

We now \emph{fix} $(f_{k})_{k=1}^{N}$, and consider  the conditional distribution of $(\alpha_{k})_{k=1}^{N}$, which carries some information on the distribution of $(\gamma_{k})_{k=1}^{N}$. A key point is that the variables $\alpha_k$ \emph{remain independent after conditioning}.  They are, however, no longer identically distributed. Instead, 
\begin{multline}
\text{distribution of $\alpha_{k}$ given $(t_{\ell})_{\ell=1}^{N}$ and $(f_{\ell})_{\ell=1}^{N}$} \\
= \frac{ \rho(\e^{-\alpha_k} f_k + \lambda + \frac{|t_{k-1}|^{2}}{f_{k-1}} ) \rho(f_{k+1} + \lambda +  \e^{\alpha_k}  \frac{|t_{k}|^{2}}{f_{k}}) H(\alpha_k) \e^{-\alpha_k} }{Z_k} \di \alpha_{k}
\end{multline}
with 
\begin{equation}
Z_k \ = \ \int \rho(\e^{-\alpha} f_k +  \lambda + \frac{|t_{k-1}|^{2}}{f_{k-1}} ) \rho(f_{k+1} + \e^{\alpha}  \frac{|t_{k}|^{2}}{f_{k}}) H(\alpha) \e^{-\alpha} \di \alpha .
\end{equation}

We now express $g_{N}(i,j)$ in terms of the variables $(t_{\ell}, f_{\ell},\alpha_{\ell})_{\ell=1}^{N}$,\begin{equation}g_{N}(i,j) \ = \ (-1)^{j-i} \Biggl [ \prod_{\substack{k \equiv 2 \mod 3  \\ i \le k \le j-1}} \e^{\alpha_{k}} \Biggr ] \  \Biggl [  \prod_{k=i}^{j-1} \frac{t_{k}}{f_{k}} \   \Biggr ]  \  \widehat H_{j+1}, 
\end{equation}
where
\begin{equation}
\widehat H_{j+1} = 
\frac{1}{\e^{-\alpha_{j}}f_{j} - |t_{j}|^{2} \widehat G_{j+1}}
\end{equation}
is a function of $(t_{\ell},f_{\ell},\alpha_{\ell})_{\ell=j}^{N}$. By the conditional independence of $(\alpha_{k})$ we find that 
\begin{multline}\Ev{|g_{N}(i,j)|^{r} \left | (t_{\ell},f_{\ell})_{\ell=1}^{N}, \ (\alpha_{\ell})_{\ell=j}^{N}
\right .} \\ = \ \ \left ( \prod_{k=i}^{j-1}\frac{|t_{k}|^{r}}{|f_{k}|^{r}}  \right )   \widehat H_{j+1}  \ \prod_{\substack{k \equiv 2 \mod 3  \\ i \le k \le j-1}} \Ev{ \e^{r\alpha_{k}}\left | (t_{\ell},f_{\ell})_{\ell=1}^{N} \right .} .
\end{multline}
Applying propostion \ref{prop:basic} to each factor $\Ev{ \e^{r\alpha_{k}} | (t_{\ell},f_{\ell})_{\ell=1}^{N}} $ on the right hand side, we find that
\begin{multline}
\Ev{|g_{N}(i,j)|^{r} \left | (t_{\ell},f_{\ell})_{\ell=1}^{N}, \ (\alpha_{\ell})_{\ell=j}^{N}
\right .} \\ = \ \left ( \prod_{k=i}^{j-1}\frac{|t_{k}|^{r}}{|f_{k}|^{r}}  \right )   \widehat H_{j+1} \ \prod_{\substack{k \equiv 2 \mod 3  \\ i \le k \le j-1}} \e^{-h_{k}(r,s)} \Ev{\e^{s\alpha_{k}}\left | (t_{\ell},f_{\ell})_{\ell=1}^{N}
\right .}^{r/s} ,
\end{multline}
with 
\begin{equation}\label{eq:hintegral}
h_{k}(r,s) = \frac{1}{s} \int_{0}^{s} \min(r,q) (s- \max(r,q)) \Var_{q}(\alpha_{k} |  (t_{\ell},f_{\ell})_{\ell=1}^{N}) \di q ,
\end{equation}
and 
\begin{equation}\label{eq:conditionalvariance}
\Var_{q} (\alpha_{k} | (t_{\ell},f_{\ell})_{\ell=1}^{N}) \ = \ \inf_{m \in \R} \frac{\Ev{(\alpha_{k} -m)^{2}\e^{q \alpha_{k}}|(t_{\ell},f_{\ell})_{\ell=1}^{N}}}{\Ev{\e^{q \alpha_{k}}|(t_{\ell},f_{\ell})_{\ell=1}^{N}}} .
\end{equation}
Using the conditional independence of $(\alpha_{k})$ once again to reassemble $g_{N}(i,j)$ inside the expectation on the r.h.s., we find that
\begin{multline}
\Ev{|g_{N}(i,j)|^{r} | (t_{\ell},f_{\ell})_{\ell=1}^{N}, \ (\alpha_{\ell})_{\ell=j}^{N}} \\ = \ \e^{- \sum_{k=i}^{j-1} h_{k}(r,s)}  \, \Ev{|g_{N}(i,j)|^{s} | (t_{\ell},f_{\ell})_{\ell=1}^{N}, \ (\alpha_{\ell})_{\ell=j}^{N}} ^{r/s} ,
\end{multline}
where we have set $h_{k}(r,s)=0$ for $k \not \equiv 2 \mod 3$.  

After averaging and applying the H\"older inequality, we conclude that
\begin{equation}
\Ev{|g_{N}(i,j)|^{r}} \ \le \ \Ev{\e^{- \frac{s}{s-r} \sum_{k=i}^{j-1} h_{k}(r,s)}}^{\frac{s-r}{s}}  \Ev{ |g_{N}(i,j)|^{s}}^{r/s}.  \label{eq:keyresult}
\end{equation}
Eq.\ \eqref{eq:keyresult} is the key result. The exponent in the first factor is a sum of $O(N)$ non-negative terms, each presumably $O(1)$ and positive with positive probability.  It will not be so surprising to find that the term itself is $O(N)$ with good probability.  The rest is estimates.

To proceed with the estimates, let us take the \emph{a priori} distribution of $\alpha_{k}$, before coupling and conditioning, to be uniform in an interval $[-\lf,\lf]$ centered at the origin:
\begin{equation}
H(\alpha) \ = \ \frac{1}{2 \lf} I[|\alpha| < \lf],
\end{equation}
with $\lf$ to be chosen below.
Although $ \Var_{q} (\alpha_{k} | (t_{\ell},f_{\ell})_{\ell=1}^{N})$ is defined as a function of $ (t_{\ell},f_{\ell})_{\ell=1}^{N}$, it is useful to express it in terms of the variables $(t_{\ell}, \gamma_{\ell}, \alpha_{\ell})_{\ell=1}^{N}$:
\begin{equation}
 \Var_{q} (\alpha_{k} |  (t_{\ell},f_{\ell})_{\ell=1}^{N}) \ = \ \inf_{m \in \R} \frac{\int_{-\lf}^{\lf} (\alpha- m)^{2} \e^{(q-1)\alpha}\nu_{k}(\alpha) \di \alpha}{\int_{-\lf}^{\lf} \e^{(q-1)\alpha} \nu_{k}(\alpha) \di \alpha}
\end{equation}
with 
\begin{equation}\label{eq:nuk}
\nu_{k} (\alpha) \ = \ \rho(\e^{\alpha_{k} - \alpha} \gamma_{k} + \lambda +\frac{|t_{k-1}|^{2}}{\gamma_{k-1}} ) \rho(\gamma_{k+1} + \lambda +\e^{\alpha - \alpha_{k}} \frac{|t_{k}|^{2}}{\gamma_{k}}) .
\end{equation}

A lower bound for $ \Var_{q} (\alpha_{k} |  (t_{\ell},f_{\ell})_{\ell=1}^{N})$, sufficient for our purposes, is 
\begin{equation}
 \Var_{q} (\alpha_{k} | (t_{\ell},f_{\ell})_{\ell=1}^{N})\ \ge \ \frac{1}{3} \e^{-2|q-1|\lf}  \lf^{2} \, \frac{\inf_{-\lf < \alpha <\lf} \nu_{k}(\alpha)}{\sup_{-\lf< \alpha < \lf}  \nu_{k}(\alpha)} .\end{equation}
The r.h.s.\ still carries some dependence on $\alpha_{k}$, through the density $\nu_{k}$.  We may eliminate the dependence on $\alpha_{k}$ entirely by bounding the right hand side from below: 
\begin{multline}
 \Var_{q} (\alpha_{k} |  (t_{\ell},f_{\ell})_{\ell=1}^{N}) \\ \ge \ \frac{1}{3} \lf^{2} \e^{-2|q-1|\lf} \inf_{-2\lf < \alpha, \beta < 2\lf} \frac{ \rho(\e^{-\alpha}  \gamma_{k}+  \lambda +  \frac{|t_{k-1}|^{2}}{\gamma_{k-1}} ) \rho(\gamma_{k+1} + \lambda + \e^{\alpha} \frac{|t_{k}|^{2}}{\gamma_{k}} )}{ \rho(\e^{-\beta}  \gamma_{k}+  \lambda +  \frac{|t_{k-1}|^{2}}{\gamma_{k-1}} ) \rho(\gamma_{k+1} + \lambda + \e^{\beta} \frac{|t_{k}|^{2}}{\gamma_{k}} )} .
\end{multline}
It is useful to write
$$ \e^{-\alpha}  \gamma_{k}+\lambda + \frac{|t_{k-1}|^{2}}{\gamma_{k-1}}  \ = \ (\e^{-\alpha} -1) \gamma_{k} + v_{k},$$
and similarly for the term in the denominator and the term with index $k+1$.  Finally, the r.h.s.\ is no larger if we factor the infimum on the right hand side,
\begin{multline}\label{eq:factoredlowerbound}
 \Var_{q} (\alpha_{k} |  (t_{\ell},f_{\ell})_{\ell=1}^{N}) \\ \ge \ \frac{1}{3} \lf^{2} \e^{-2|q-1|\lf}  \inf_{-2\lf < \alpha, \beta < 2\lf} \frac{ \rho(v_{k} +(\e^{-\alpha} -1) \gamma_{k}) }{ \rho(v_{k} + (\e^{-\beta}-1)  \gamma_{k})} \inf_{-2\lf < \alpha, \beta < 2\lf}\frac{\rho(v_{k+1} + (\e^{\alpha}-1)  \frac{|t_{k}|^{2}}{\gamma_{k}}) }{ \rho(v_{k+1} + (\e^{\beta}-1) \frac{|t_{k}|^{2}}{\gamma_{k}})} .
\end{multline}

On the r.h.s., the only dependence on $q$ is in the exponential term.  In the integral \eqref{eq:hintegral}, there is not much loss in replacing this exponential by the (smaller) $\e^{-2|s-1|\lf}$, so that
\begin{equation}
\frac{s}{s-r} h_k(r,s) \ \ge \ \frac{r s}{6} \lf^2 \e^{-2 |s-1| \lf} U_k(\lf) ,
\end{equation}
with
\begin{equation}
U_k(\lf) = \inf_{-2\lf < \alpha, \beta < 2\lf} \frac{ \rho(v_{k} +(\e^{-\alpha} -1) \gamma_{k}) }{ \rho(v_{k} + (\e^{-\beta}-1)  \gamma_{k})} \inf_{-2\lf < \alpha, \beta < 2\lf}\frac{\rho(v_{k+1} + (\e^{\alpha}-1)  \frac{|t_{k}|^{2}}{\gamma_{k}}) }{ \rho(v_{k+1} + (\e^{\beta}-1) \frac{|t_{k}|^{2}}{\gamma_{k}})} .
\end{equation}
Plugging this estimate into eq.~\eqref{eq:keyresult}, we obtain \begin{equation}\label{eq:keyresultredux}
\Ev{|g_{N}(i,j)|^{s}} \ \le \ \Ev{\e^{-  \frac{rs}{6} \lf^2 \e^{-2|s-1|\lf} \sum_{k=i}^{j-1} U_k(\lf)}}^{\frac{s-r}{s}}  \Ev{ |g_{N}(i,j)|^{s}}^{r/s}.
\end{equation}

Since $\rho$ is fluctuation regular, there are $\delta, \epsilon >0$ and a set $\Omega \subset \R$ with $$\int_{\Omega} \rho \di x = q_{0} >0$$ such that $U_{k}(\lf) \ge \delta^{2} I[A_{k}] $ where $I[A_{k}]$ is the indicator function of the event:
\begin{equation}
A_{k} = \set{v_{k}, v_{k+1} \in \Omega\ , \quad |\gamma_{k}| \le \frac{\epsilon}{ \e^{2\lf} -1}, \quad \text{and} \quad \frac{|t_{k}|^{2}}{|\gamma_{k}|} \le \frac{\epsilon}{ \e^{2\lf} -1}}.
\end{equation}
In turn, since $\gamma_{k} = v_{k} +  \lambda + |t_{k-1}|^{2}/\gamma_{k-1}$ and $|\lambda| \le \Lambda$ (by assumption), we see that
\begin{multline}\label{eq:Aksubset}
A_{k} \supset \set{v_{k} \in \Omega \ , \ |v_{k}| \le L} \cap \set{v_{k+1} \in \Omega} \cap \set{ |t_{k-1}|, |t_{k}| \le \tau} \\ \cap \set{\frac{1}{|\gamma_{k-1}|} \le \frac{1}{\tau^{2}} \left ( 
 \frac{\epsilon}{ \e^{2\lf} -1} - L -  \Lambda \right ) } \cap \set{\frac{1}{|\gamma_{k}|} \le \frac{1}{\tau^{2}}  \frac{\epsilon}{ \e^{2\lf} -1}},\end{multline}
with $\tau$ and $L$ any positive numbers.

We estimate the probability of $A_{k}$ from below by integrating \eq{eq:Aksubset} over $v_{k+1}$, $v_{k}$, $v_{k-1}$, $t_{k}$, and $t_{k-1}$ in that order. (The need to integrate over three consecutive $v$ variables is the reason we introduced $\alpha_{k}$ only for $k \equiv 2 \mod 3$.)  To begin,
\begin{equation}
\Pr(v_{k+1} \in \Omega | (v_{l})_{l \neq k} , \ (t_{l}) ) = \int_{\Omega} \rho(v) \di v = q_{0}.
\end{equation}
Looking now at $v_{k}$, since $\gamma_{k} = v_{k} +  \lambda + |t_{k-1}|^{2}/\gamma_{k-1}$, we see that
\begin{multline}
\set{v_{k} \in \Omega \ , \ |v_{k}| \le L} \cap \set{\frac{1}{|\gamma_{k}|} \le \frac{1}{\tau^{2}}  \frac{\epsilon}{ \e^{2\lf} -1}} \\ =
\set{v_{k} \in \Omega} \cap \set{ |v_{k}| \le L} \cap \set{v_{k}  \not \in [a - \tau^{2 }\frac{\e^{2 \lf}-1}{\epsilon}, a + \tau^{2}\frac{\e^{2\lf}- 1}{\epsilon}]}
\end{multline}
with $a = \lambda + |t_{k-1}|^{2}/\gamma_{k-1}$.  Since the density $\rho$ is bounded, it follows that
\begin{multline}
\Pr \left ( \left. v_{k} \in \Omega \ , \ |v_{k}| \le L\ , \ \frac{1}{|\gamma_{k}|} \le \frac{1}{\tau^{2}}  \frac{\epsilon}{ \e^{2\lf} -1} \right | (v_{l})_{l \neq k,k+1} , \ (t_{l}) \right ) \\ \ge q_{0} - \Pr(|v_{k}| > L) -2 \norm{\rho}_{\infty} \tau^{2}  \frac{\e^{2\lf}- 1}{\epsilon}.
\end{multline}
Similarly
\begin{multline}
\Pr \left (\left . \frac{1}{|\gamma_{k-1}|} \le \frac{1}{\tau^{2}} \left ( 
 \frac{\epsilon}{ \e^{2\lf} -1} - L -\Lambda \right ) \right | (v_{l})_{l \neq k-1,k,k+1}, \ (t_{l}) \right ) \\ 
 \ge 1 -  2 \norm{\rho}_{\infty} \tau^{2} \frac{1}{\frac{\epsilon}{ \e^{2\lf} -1} - L - \Lambda}.
\end{multline}
Combining these estimates with \eq{eq:Aksubset} and integrating over the identically distributed variables $t_{k}$ and $t_{k-1}$, we find
\begin{multline}\label{eq:Akest}
\Pr(A_{k} | (v_{l})_{l \neq k-1,k,k+1} , \ (t_{l})_{l \ne k, k-1}) \ \ge \ q_{0} \left ( q_{0} - \Pr(|v_{k}| > L) -2 \norm{\rho}_{\infty} \tau^{2}  \frac{\e^{2\lf}- 1}{\epsilon} \right )\\ \times \left ( 1 -  2 \rho_{\infty} \tau^{2} \frac{1}{\frac{\epsilon}{ \e^{2\lf} -1} - L - \Lambda} \right ) \Pr(|t_{k}| \le \tau )^{2} .
\end{multline}
The key things to observe is that  the r.h.s.\ of \eq{eq:Akest} is independent of $k$ and can be made arbitrarily close to $q_{0}^{2}$ by suitable choice of large $L$, $\tau$ and small $\eta$. 

So, for sufficiently small $\eta$ we have $\Pr(A_{k} | (v_{l})_{l \neq k-1,k,k+1} , \ (t_{l})_{l \ne k, k-1})) \ge \half q_{0}^{2}$, say.  Since $U_{k} (\eta) \ge \delta^{2} I[A_{k}]$, we find that 
\begin{equation}
\Ev{\e^{-  \frac{rs}{6} \lf^2 \e^{-|s-1|\lf} \sum_{k=i}^{j-1} U_k(\lf)}}^{\frac{s-r}{s} }
\ \le \ \exp \left ( - \frac{s-r}{2s} q_{0}^{2}  \left (1 - \e^{- \delta^{2} \lf^2 \frac{rs}{6}  \e^{-2|s-1|\lf}} \right )  \left \lfloor \frac{|i-j|}{3}  \right \rfloor\right ),
\end{equation}
by  integrating successively over $v_{k}, t_{k}$ from $k=i, \ldots, j-1$ (see Lemma \ref{lem:prob} below).
Combined with \eqref{eq:keyresultredux} this completes the proof of Lemma \ref{lem:fluctW=2}.   \end{proof}

\section{Band matrices}\label{sec:axioms}
To translate the argument of the previous section to the context of band matrices, we replace each of the variables $v_{j}$ and $t_{j}$ by $W \times W$ matrices.  Given $W \in \N$, consider a sequence, $V_{j}$, $n=1,\ldots$, of independent identically distributed hermitian $W \times W$ matrices together with a sequence, $T_{j}$, $n=1,\ldots$, of independent identically distributed $W \times W$ matrices (not necessarily hermitian).  With these matrix variables, we form an infinite random hermitian band matrix
\begin{equation}\label{eq:bandblockform}
X_{W} \ = \ \left(\begin{array}{cccccc} V_{1} & T_{1} & 0 \\
T_{1}^{\dagger} & V_{2} & T_{2} & \ddots\\
 0 & T_{2}^{\dagger} & V_{3} & &  \ddots  \\
 & \ddots & & \ddots &   \\
 & &  \ddots & & \ddots \end{array}\right),
 \end{equation}
 a random operator on $\ell^{2}(\N)$, 
and for each $N$ the random matrix
\begin{equation}\label{eq:XWN}
X_{W;N} \ = \ Q_{N} X_{W} Q_{N}
\end{equation}
with $Q_{N}$ the projection onto $\ell^{2}(\{1, \ldots, N\}).$   For simplicity,  let us consider only $N$ a multiple of $W$: $N = nW$.  Thus, 
\begin{equation}\label{eq:blockbandform2}
X_{W;N} \ = \ X_{W;nW} \ = \  \left(\begin{array}{cccccc} V_{1} & T_{1} & 0 \\
T_{1}^{\dagger} & V_{2} & T_{2} & \ddots\\
 0 & T_{2}^{\dagger} & V_{3} & &  \ddots  \\
 & \ddots & & \ddots & & 0 \\
& & \ddots & & \ddots &  T_{n} \\
 & & &  0 & T_{n-1}^{\dagger} & V_{n} \end{array}\right),
 \end{equation}
 Let $P_{j}$ denote the projection onto the $j^{\text{th}}$ block, $\ell^{2}(\{(j-1)W +1, \ldots, jW\})$, so
 $$V_{j} = P_{j} X_{W} P_{j} \quad \text{and} \quad T_{j} = P_{j} X_{W}P_{j+1}.$$
 
 Band matrix ensembles such as the Gaussian band ensemble \eqref{eq:GBE} are of this form,  with $T_{j}$ lower triangular matrices. However, for the argument presented below it is not necessary that $T_{j}$ be lower triangular. (Also, neither strict independence nor identicality of distribution are needed. Nonetheless, to  keep things simple, let us stick to the i.i.d.\ case.)

In adapting the arguments from the scalar case to the matrix variables  $V_{j},  T_{j}$, we must account for the non-commutativity of the matrix product. The basis of the argument is a change of variables $V_{j} \mapsto \e^{\alpha_{j}} \Gamma_{j}$ with $\alpha_{j}$ a scalar random variable and $\Gamma_{j}$ a $W \times W$ matrix obtained from the resolvent of $X_{W;jW}$.  In the end we will need to estimate the ratio
$$ \frac{\rho(V_{j } + (\e^{-\alpha}-1) \Gamma_{j})}{\rho(V_{j} + (\e^{-\beta}-1) \Gamma_{j})}$$
for small $\alpha$, $\beta$,  where $\rho$ is the density of the distribution of $V_{j}$ (assumed to be absolutely continuous with respect to Lebesgue measure on some vector space of matrices). In the scalar case, this change of variables was useful for all  fluctuation regular densities.  In the matrix case, an additional complication arises.  Unless $\Gamma_{j}$ falls in the vector space supporting the distribution of $V_{j}$ there will be constraints on the matrix elements of $\Gamma_{j}$ which manifest themselves as $\delta$ functions after the change of variables.   However, $\Gamma_{j}$ is formed from $\{ V_{k}\}$ and $\{T_{k} \}$ via \emph{non-linear} operations, so \emph{there is no reason to expect it to fall in this vector space}.  (For example when $V_{j}$ are diagonal, $\Gamma_{j}$ will in general have off-diagonal components.)  To guarantee closure under non-linear operations we suppose that the vector space supporting the distribution of $V_{j}$ is a  \emph{matrix algebra}:
\begin{defn}A \emph{$\star$ algebra over $\R$ of $W\times W$ matrices} is a set $\cu{A}$ of $W \times W$ matrices that is a vector space over $\R$, under the usual addition and scalar multiplication, and such that $$ V_{1},V_{2} \in \cu{A} \ \implies \ V_{1} V_{2} \in \cu{A} \quad \text{and} \quad V_{1}^{\dagger} \in \cu{A}.$$\end{defn}
We will use 
\begin{prop}\label{prop:cstar}If $\cu{A}$ is a matrix $\star$ algebra over $\R$ and $V \in \cu{A}$ is invertible then $V^{-1} \in \cu{A}$.
\end{prop}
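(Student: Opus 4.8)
The plan is to reduce to the self‑adjoint case and then to exhibit $V^{-1}$ explicitly as an $\R$‑polynomial in an element of $\cu{A}$ that has no constant term.

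First I would pass to $H := V^{\dagger} V$. Since $\cu{A}$ is a $\star$ algebra, $V^{\dagger} \in \cu{A}$, hence $H \in \cu{A}$. Because $V$ is invertible, $\langle x, H x\rangle = \norm{Vx}^{2} > 0$ for every $x \neq 0$, so $H$ is Hermitian and strictly positive definite; in particular $H$ is diagonalizable and its spectrum consists of finitely many distinct positive real numbers $\lambda_{1}, \ldots, \lambda_{m}$. Moreover $V^{-1} = (V^{\dagger} V)^{-1} V^{\dagger} = H^{-1} V^{\dagger}$, so since $V^{\dagger} \in \cu{A}$ and $\cu{A}$ is closed under multiplication, it suffices to prove that $H^{-1} \in \cu{A}$.

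Next I would construct a polynomial $p$ with real coefficients such that $p(0) = 0$ and $p(\lambda_{k}) = 1/\lambda_{k}$ for $k = 1, \ldots, m$. This is possible by Lagrange interpolation: the $m+1$ nodes $0, \lambda_{1}, \ldots, \lambda_{m}$ are pairwise distinct (here one uses that $0$ is not an eigenvalue of $H$), the prescribed values are real, and interpolation through $m+1$ nodes produces a polynomial of degree at most $m$. Since $H$ is diagonalizable, $p(H)$ is determined by the values of $p$ on the eigenvalues of $H$, and those values coincide with the values of $x \mapsto 1/x$ there; hence $p(H) = H^{-1}$. On the other hand the condition $p(0) = 0$ forces $p(x) = a_{1} x + a_{2} x^{2} + \cdots + a_{m} x^{m}$ with no constant term, so $p(H) = a_{1} H + a_{2} H^{2} + \cdots + a_{m} H^{m}$ is an $\R$‑linear combination of the powers $H, H^{2}, \ldots, H^{m}$, each of which lies in $\cu{A}$ by closure under multiplication. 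Therefore $H^{-1} = p(H) \in \cu{A}$, and then $V^{-1} = H^{-1} V^{\dagger} \in \cu{A}$, as desired.

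The only delicate point --- and the reason for the whole maneuver --- is that $H^{-1}$ must be expressed as a polynomial in $H$ \emph{without} a constant term. The Cayley--Hamilton identity alone gives $H^{-1}$ as a polynomial in $H$ that in general involves the identity matrix $I$, and $\cu{A}$ need not contain $I$ (for instance $\cu{A}$ could be the algebra of diagonal matrices whose first entry is $0$). Adjoining $0$ as an extra interpolation node removes exactly this constant term, and this is legitimate precisely because invertibility of $V$, hence of $H$, keeps $0$ off the spectrum. I would also note that the $\star$ structure enters only through the passage from $V$ to the self‑adjoint $H$; for a general invertible $V \in \cu{A}$ one could instead use Hermite interpolation at the eigenvalues of $V$ to the orders dictated by its Jordan blocks, again with $0$ adjoined as a node, but working with $H = V^{\dagger} V$ sidesteps any appeal to the Jordan form.
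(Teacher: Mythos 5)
Your proof is correct and follows the same overall strategy as the paper: reduce to the self-adjoint matrix $H = V^{\dagger}V \in \cu{A}$, show $H^{-1} \in \cu{A}$, and conclude via $V^{-1} = H^{-1}V^{\dagger}$. The difference is in how $H^{-1}$ is produced inside $\cu{A}$. The paper invokes the Weierstrass approximation theorem to approximate $H^{-1}$ by real polynomials in $H$ and then uses completeness of the finite-dimensional subspace $\cu{A}$ to pass to the limit; you instead construct an exact interpolating polynomial $p$ with $p(0)=0$ and $p(\lambda_k)=1/\lambda_k$ on $\sigma(H)$, so that $p(H)=H^{-1}$ with no limiting argument at all. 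Your version has a technical advantage: by forcing $p(0)=0$ you avoid any constant term, so $p(H)$ is manifestly a real combination of $H, H^2, \ldots$ and hence in $\cu{A}$. The paper's phrasing --- ``approximate $V^{-1}$ by polynomials in $V$ with real coefficients'' --- tacitly allows a constant term $cI$, and it is not assumed a priori that $I \in \cu{A}$, so strictly speaking the paper's argument needs the same repair you build in. (One can show $I \in \cu{A}$ once $\cu{A}$ contains an invertible element, e.g.\ by Cayley--Hamilton applied to $H$, or one can approximate $1/x^2$ and multiply by $x$; but the paper does not say this, whereas your interpolation trick handles it cleanly and is also sharper, giving $H^{-1}$ as an explicit polynomial of degree at most the number of distinct eigenvalues.)
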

\begin{proof}This is a standard result for $C^{\star}$ algebras.  In that context, the algebra is usually assumed to be a vector space over $\C$, but that is not necessary.  Here is the proof. If $V \in \cu{A}$ is self-adjoint and invertible, by the Weierstrass theorem we can approximate $V^{-1}$ (in the operator norm, say) by polynomials in $V$ \emph{with real coefficients}.  That is, we can approximate $V^{-1}$ by elements of $\cu{A}$.  Since a finite dimensional vector space is complete, $V^{-1} \in \cu{A}$.  For general invertible $V \in \cu{A}$, we have 
$ V^{-1} = (V^{\dagger} V)^{-1} V^{\dagger} \in \cu{A},$
since $V^{\dagger}V \in \cu{A}$ is self adjoint.
\end{proof}

\begin{ass}\label{ass:0}
Let $\cu{S}$ be an increasing sequence of integers and fix, for each $W \in \cu{S}$, a $\star$ algebra over $\R$ of $W\times W$ matrices $\cu{A}_{W}$, and the set $\cu{T}_{W}$ of matrices which preserve  $\cu{A}_{W}$ under conjugations
\begin{equation}
\cu{T}_{W}  \ = \ \set{ T \ : \  T^{\dagger} \cu{A}_{W} T \subset \cu{A}_{W}}.
\end{equation}
Let $\cu{A}_{W}^{H} =  \set{V \in \cu{A}_{W} \ : \ V = V^{\dagger} },$
the set of hermitian elements of $\cu{A}_{W}$.  \emph{We require that $T_{j} \in \cu{T}_{W}$ and $V_{j} \in \cu{A}_{W}^{H}$, $j=1, \ldots.$}
\end{ass}
\begin{rem} Note that $\cu{T}_{W}$ is closed under conjugation: $T \in \cu{T}_{W} \implies T^{\dagger} \in \cu{T}_{W}.$  
\end{rem}

 There is a good deal of flexibility in the choice of algebras.  Of course, we may take $\cu{A}_{W} = \cu{T}_{W} = $ all $n\times n$ complex matrices, so $X_{W;N}$ is complex Hermitian.  On the other hand, we could restrict $\cu{A}_{W}$ to be the set of matrices with real entries, so $X_{W;N}$ is real symmetric.  In this case $\cu{A}_{W}$ is \emph{not} a complex vector space. Similarly we could take $\cu{A}_{W}$ to be the set of matrices with quaternion entries, where the quaternions units are represented by $2 \times 2$ matrices, so $X_{W;N}$ would by Hermitian but anti-symmetric under transposition $X_{W;N}^{T} = - X_{W;N}$. In this last case, $\cu{S}$ would be the set of even integers.

An important consequence of assuming that $T_{j} \in \cu{T}_{W}$ and $V_{j} \in \cu{A}_{W}^{H}$,  is that we have some \emph{a priori} information on the block matrices making up the resolvent of $X_{W;nW}$.  \begin{lem}\label{lem:algebra} 
Suppose $Y$ is an $nW \times nW$ matrix that is block tri-diagonal,
$$P_{i} Y_{n} P_{j} = 0 \quad \text{if } |i-j| \ge 2,$$
and satisfies 
$$V_{j} = P_{j} Y P_{j} \in \cu{A}_{W}, \quad j=1, \ldots, n$$
and
$$T_{j} = P_{j} YP_{j+1}  = (P_{j+1} Y P_{j})^{\dagger} \in \cu{T}_{W},
\quad j=1, \ldots, n-1.$$
If $Y$ is invertible then
\begin{equation}\label{eq:algebra1}
P_{j} Y^{-1} P_{j} \in \cu{A}_{W}, \quad j=1,\ldots,n
\end{equation}
and
\begin{equation}\label{eq:algebra2}
P_{i} Y^{-1} P_{j} \in \cu{AT}_{W} , \quad i,j=1, \ldots, n,
\end{equation}
where $\cu{AT}_{W}$ is the algebra generated by $\cu{A}_{W}$ and $\cu{T}_{W}$.
\end{lem}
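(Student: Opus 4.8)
The plan is to induct on the number of blocks $n$, carrying \eqref{eq:algebra1} and \eqref{eq:algebra2} through the induction together by means of block Gaussian elimination (Schur complements). The base case $n=1$ is Proposition~\ref{prop:cstar}: $Y = V_{1} \in \cu{A}_{W}^{H}$ is invertible, so $Y^{-1} = V_{1}^{-1} \in \cu{A}_{W} \subset \cu{AT}_{W}$. For the inductive step I would first reduce to the case in which, besides $Y$ itself, every principal submatrix $P_{[a,b]} Y P_{[a,b]}$ on a block-interval is invertible; the Schur formulas below require this, and it is not implied by invertibility of $Y$ alone. This reduction is harmless: $\cu{A}_{W}$ and $\cu{AT}_{W}$ are finite dimensional, hence closed, and $X \mapsto X^{-1}$ is continuous on the invertible matrices, so it suffices to prove the conclusion for a dense subset of the block-structured invertible $Y$, and those whose interval-submatrices are all invertible form such a subset (the exceptional ones satisfy finitely many non-trivial polynomial identities in the entries of $Y$, at least when $\cu{A}_{W}$ contains an invertible element, which is the case in all the applications).

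Granting that reduction, I obtain \eqref{eq:algebra1} by splitting off the $j$-th block. With $Q = \sum_{k \neq j} P_{k}$ the matrix $QYQ$ is block diagonal and invertible, so the Schur complement formula gives
$$ P_{j} Y^{-1} P_{j} \ = \ \bigl( V_{j} - T_{j-1}^{\dagger} \Gamma^{L} T_{j-1} - T_{j} \Gamma^{R} T_{j}^{\dagger} \bigr)^{-1}, $$
where $\Gamma^{L}$ and $\Gamma^{R}$ are the relevant diagonal resolvent blocks of the two smaller block-tridiagonal matrices $P_{[1,j-1]} Y P_{[1,j-1]}$ and $P_{[j+1,n]} Y P_{[j+1,n]}$ (one term is dropped when $j = 1$ or $j = n$). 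By the inductive hypothesis $\Gamma^{L}, \Gamma^{R} \in \cu{A}_{W}$; since $T_{j-1} \in \cu{T}_{W}$, the definition of $\cu{T}_{W}$ gives $T_{j-1}^{\dagger} \Gamma^{L} T_{j-1} \in \cu{A}_{W}$ directly, and since $\cu{T}_{W}$ is closed under $\dagger$ (the Remark following Assumption~\ref{ass:0}) also $T_{j} \Gamma^{R} T_{j}^{\dagger} \in \cu{A}_{W}$; together with $V_{j} \in \cu{A}_{W}^{H} \subset \cu{A}_{W}$ this shows the Schur complement lies in $\cu{A}_{W}$, and Proposition~\ref{prop:cstar} returns its inverse $P_{j} Y^{-1} P_{j}$ to $\cu{A}_{W}$.

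For \eqref{eq:algebra2} I peel off the first block instead. With $\widehat{Y} = P_{[2,n]} Y P_{[2,n]}$, which is block tridiagonal with $n-1$ blocks and invertible, the block-elimination identities express every block of $Y^{-1}$ through $P_{1} Y^{-1} P_{1}$, through $T_{1}$ and $T_{1}^{\dagger}$, and through blocks of $\widehat{Y}^{-1}$; for instance $P_{1} Y^{-1} P_{\ell} = -(P_{1} Y^{-1} P_{1}) T_{1} (P_{2} \widehat{Y}^{-1} P_{\ell})$ and, for $k, \ell \ge 2$, $P_{k} Y^{-1} P_{\ell} = P_{k} \widehat{Y}^{-1} P_{\ell} + (P_{k} \widehat{Y}^{-1} P_{2}) T_{1}^{\dagger} (P_{1} Y^{-1} P_{1}) T_{1} (P_{2} \widehat{Y}^{-1} P_{\ell})$, with $P_{k} Y^{-1} P_{1} = (P_{1} Y^{-1} P_{k})^{\dagger}$. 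Here $P_{1} Y^{-1} P_{1} \in \cu{A}_{W}$ by the previous step, the blocks of $\widehat{Y}^{-1}$ lie in $\cu{AT}_{W}$ by the inductive hypothesis, and $T_{1}, T_{1}^{\dagger} \in \cu{T}_{W}$; since $\cu{AT}_{W}$ is an algebra closed under $\dagger$ (it is generated by the $\dagger$-closed sets $\cu{A}_{W}$ and $\cu{T}_{W}$), each of these expressions lies in $\cu{AT}_{W}$, completing the induction. I expect the main obstacle to be the bookkeeping that keeps \eqref{eq:algebra1} in the \emph{smaller} algebra $\cu{A}_{W}$ rather than merely in $\cu{AT}_{W}$ — this is what forces the ``remove block $j$'' elimination and hence the reduction to invertible interval-submatrices — together with justifying that reduction; the algebraic closure facts themselves are immediate from Proposition~\ref{prop:cstar}, the definition of $\cu{T}_{W}$, and the $\dagger$-closure of $\cu{T}_{W}$.
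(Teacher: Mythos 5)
Your proof is correct and follows essentially the same Schur-complement-plus-induction strategy as the paper, but with two noteworthy differences. For \eqref{eq:algebra1} the two arguments coincide: your $\Gamma^{L}$, $\Gamma^{R}$ are exactly the paper's $P_{j-1}Y_{-}^{-1}P_{j-1}$ and $P_{j+1}Y_{+}^{-1}P_{j+1}$. For \eqref{eq:algebra2} the paper peels off the \emph{entire} top-left $(j-1)$-block chunk $\widehat Y$ and invokes the single resolvent identity $P_{i}Y^{-1}P_{j} = -P_{i}\widehat Y^{-1}P_{j-1}\,T_{j-1}\,P_{j}Y^{-1}P_{j}$, which immediately combines the inductive hypothesis with step (1); you instead peel off only the first block and write out all four block-elimination formulas. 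Both work; the paper's variant is a little leaner, avoiding the case split on whether $1 \in \{i,j\}$, though it tacitly assumes $(P_{1}Y^{-1}P_{j})^{\dagger}$ is handled via $G(i,j)^{\dagger}=G(j,i)$ as you also note. The more substantive difference is that you flag the implicit invertibility of the interval submatrices $Y_{\pm}$, $\widehat Y$, which the paper uses without comment: invertibility of $Y$ alone does \emph{not} give invertibility of these minors (e.g.\ $\bigl(\begin{smallmatrix}0&1\\1&0\end{smallmatrix}\bigr)$), so the Schur and resolvent formulas as written require it. Your density/continuity reduction — exploiting that $\cu{A}_{W}$ and $\cu{AT}_{W}$ are finite-dimensional (hence closed), that $Y \mapsto P_{i}Y^{-1}P_{j}$ is continuous on invertibles, and that singularity of a fixed minor is a proper subvariety once $I \in \cu{A}_{W}$ — is exactly the right patch and makes the lemma correct as stated, rather than only on the almost-sure event where the paper actually applies it. The caveat that $\cu{A}_{W}$ contain an invertible element is harmless: if it contains any invertible $A$ then $A^{-1} \in \cu{A}_{W}$ by Proposition~\ref{prop:cstar} and hence $I = AA^{-1} \in \cu{A}_{W}$, and this holds in every ensemble considered in the paper.
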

\begin{rem} The off diagonal blocks of $Y^{-1}$ need not be in $\cu{A}_{W}$.  This is apparent already for $n=2$, where, by the Schur complement formula,
$$ \begin{aligned} P_{1} Y^{-1} P_{2} &= ( V_{1} - T_{1} V_{2}^{-1} T_{1}^{\dagger} )^{-1} T_{1} V_{2}^{-1} \\&= V_{1}^{-1} T_{1} (V_{2} - T_{1}^{\dagger} V_{1}^{-1} T_{1} )^{-1}.
\end{aligned}$$
In each expression on the right, the first and last factors are in $\cu{A}_{W}$ but the middle factor, $T_{1}$, is not. 
\end{rem}
\begin{proof} The proof is by induction on $n$.  The result is clear for $n=1$.  So, suppose we know that it holds if $Y$ is a tridiagonal block matrix of size no larger than $(n-1)W \times (n-1)W$. 

First consider \eqref{eq:algebra1}. By the Schur complement formula,
$$ P_{j} Y^{-1} P_{j} = (V_{j} - T_{j} P_{j+1}Y_{+}^{-1} P_{j+1} T_{j}^{\dagger} - T_{j-1}^{\dagger} P_{j-1}Y_{-}^{-1} P_{j-1}T_{j-1})^{-1},$$
where
$$Y_{+} = \begin{pmatrix} V_{j+1} & T_{j+1}  \\
T_{j+1}^{\dagger} & \ddots & & \ddots \\
& \ddots & & \ddots &  T_{n-1} \\
&& & T_{n-1}^{\dagger} & V_{n}
\end{pmatrix}, \quad Y_{-} = \begin{pmatrix} V_{1} & T_{1}  \\
T_{1}^{\dagger} & \ddots & & \ddots \\
& \ddots & & \ddots &  T_{j-1} \\
&& & T_{j-1}^{\dagger} & V_{j}
\end{pmatrix}.
$$
As $Y_{+}$ and $Y_{-}$ are of size no larger than $(n-1) W \times (n-1)W$ and $T_{j}, T_{j+1} \in \cu{T}_{W}$,  it follows that 
$$ T_{j} P_{j+1}Y_{+}^{-1} P_{j+1} T_{j}^{\dagger},  \ T_{j-1}^{\dagger} P_{j-1}Y_{-} P_{j-1}T_{j-1} \in \cu{A}_{W}.$$
By Prop.\ \ref{prop:cstar} $P_{j} Y^{-1} P_{j} \in \cu{A}_{W}.$

Now consider \eqref{eq:algebra2}. Suppose $i < j$ (the other case is similar). 
Let 
$$ \widehat Y = \begin{pmatrix} V_{1} & T_{1}  \\
T_{1}^{\dagger} & \ddots & & \ddots \\
& \ddots & & \ddots &  T_{j-2} \\
&& & T_{j-2}^{\dagger} & V_{j-1}
\end{pmatrix}.
$$
By the resolvent identity, one has
$$ P_{i} Y^{-1} P_{j} =  - P_{i} \widehat Y^{-1} P_{j-1} T_{j} P_{j} Y^{-1} P_{j}.$$
But $P_{i} \widehat Y^{-1} P_{j-1} \in \cu{AT}_{W}$ by the induction hypothesis and $P_{j} Y^{-1} P_{j} \in \cu{A}_{W}$ as we have just shown.  It follows that the r.h.s.\ is in $\cu{AT}_{W}.$
\end{proof}

We now consider the properties required of the distribution of $V_{j}$, denoted $\field{P}_{W}$. Let $\norm{ \cdot }$ denote the operator norm of a matrix
\begin{equation}
\norm{A} = \sup_{\norm{\vec{v}}=1 } \norm{A \vec{v}}
\end{equation}
and let $\sigma(A)$ denote the set of eigenvalues of a matrix.  Recall, if $A$ is self-adjoint, that 
$$\norm{A} = \max |\sigma(A)| \quad \norm{A^{-1}} = \frac{1}{\min | \sigma(A)|}.$$
 \begin{ass}\label{ass:1}
Let $(\field{P}_{W})_{W\in \cu{S}}$ be a family of probability measures such that
\begin{itemize}
\item  (\emph{Absolute continuity}): Each measure $\field{P}_{W}$ is supported on $\cu{A}_{W}^{H}$ and  absolutely continuous with respect to Lebesgue measure on that space.  Let $\rho_{W}(V)$ denote the density of $\field{P}_{W}$ with respect to Lebesgue measure.
\item (\emph{Wegner-type estimates}): There are $\kappa > 0$ and $\sigma \ge 0$ such that for all $A \in \cu{A}_{W}^{H}$, $W \in \cu{S}$,
\begin{equation}\label{eq:Weg1}\field{P}_{W}\set{V \ : \ \norm{(V-A)^{-1}} > tW^{1+\sigma} } \ \le \  \kappa  \frac{1}{t}; \end{equation}
and for all $A, B \in \cu{A}_{W}^{H}$ and $C \in \cu{AT}_{W}$, $W \in \cu{S}$,
\begin{equation}\label{eq:Weg2}\field{P}_{W} \otimes \field{P}_{W} \set{(V_{1}, V_{2}) \ : \ 
\norm{ \begin{pmatrix}
V_{1} - A & C \\
C^{\dagger} & V_{2}- B
\end{pmatrix}^{-1} } > tW^{1+\sigma} } \ \le \ 2 \kappa \, \frac{1}{t}.\end{equation}

\item (\emph{Fluctuation regularity with bounded tails}): There are constants $p_{0}, \delta, \epsilon >0$, $L, a, \zeta \ge 0$  such that, for each $W \in \cu{S}$, there is $\Omega_{W} \subset \cu{A}_{W}^{H}$ with $\bb{P}_{W}(\Omega_{W}) \ge p_{0}$ and if $V \in \Omega_{W}$, then
\begin{equation}\label{eq:uniformnorma} \norm{V} \le L W^{a} 
\end{equation}
and
\begin{equation}\label{eq:Wfluctreg}  \frac{\rho_{W}(V_{1})}{\rho_{W}(V_{2} )} \ge \delta    \text{ for all } V_{1}, V_{2} \in \cu{A}_{W}^{h}  \text{ with } \norm{V_{j}-V} \le \epsilon W^{-\zeta}, \ j=1,2.
\end{equation}
\end{itemize}
\end{ass}
\begin{rems} ~

\begin{enumerate} 
\item Since
$$\norm{(V- \lambda I)^{-1}} = \frac{1}{\dist(\lambda, \sigma(V))},$$
the Wegner-type estimate \eqref{eq:Weg1} implies 
\begin{equation}\label{eq:fr}\field{P}_{W} \set{V \ : \ \dist(\lambda, \sigma(V)) \le\frac{\epsilon}{W^{1+ \sigma}}} \ \le \ \kappa \epsilon.\end{equation}
If $V$ is suitably scaled so as to have mean eigenvalue spacing of order $1/W$,  this suggests that  we should be able to take $\sigma =0$.  That has not been proved, however, for the random matrix ensembles studied here. For Wigner type matrices, in particular for the Gaussian band ensemble \eqref{eq:GBE}, we will obtain the estimates (\ref{eq:Weg1}, \ref{eq:Weg2}) with $\sigma = \half$ in \S\ref{sec:ensembles},.  
\item The parameters $\sigma$ and $a$ are not independent.  If we rescale via $V \mapsto W^{\gamma} V$ this results in a shift $\sigma \mapsto \sigma -\gamma$ and $a \mapsto a + \gamma$. Nonetheless it is convenient to keep both parameters since the \emph{natural} scaling of $V$ is to choose the eigenvalue spacing to be of order $1/W$.  This typically leads to $a=0$, but if the entries of $V$ have heavy tails then one may have $a > 0$.
\end{enumerate}
\end{rems}

We require very little from the distribution of $T_{j}$, denoted $\field{Q}_{W}$, essentially just a uniform (in $W$) bound on the tails:
\begin{ass}\label{ass:2} Let $( \field{Q}_{W})_{W=2}^{\infty}$ be a family of probability measures, with $\field{Q}_{W}$ supported on $\cu{T}_{W}$.  Suppose that there are $q_{0},\tau> 0$ and $b \ge 0$ such that
\begin{equation}\label{eq:normboundb} \field{Q}_{W} \set { T \ : \ \norm{T} \le \tau W^{b} } \ge q_{0}.
\end{equation}
\end{ass}
\begin{rem} $\field{Q}_{W}$ could be supported on a single point, in which case $T_{j}$ would be a constant sequence. For instance, we could take $T_{j}=I$.
\end{rem}

Lemma W for $X_{W;nW}$ follows easily from part (2) of assumption 1.
\begin{lem}[Lemma W for $X_{W;nW}$] Let $V_{j}$, $T_{j}$, $j=1, \ldots$, be mutually independent sequences of independent random $W \times W$ matrices.  Suppose each $V_{j}$ has distribution $\field{P}_{W}$ and each $T_{j}$ has distribution $\field{Q}_{W}$. Then, for each $\lambda \in \R$, 
$$ \Pr \left [ \lambda \text{ is an eigenvalue of $X_{W;nW}$} \right ] =0$$
and 
\begin{equation}\label{eq:averagedWeg1}
\Pr \left ( \norm{ P_{i} (X_{W;nW} - \lambda I)^{-1} P_{j}} > t W^{1+ \sigma} \ | \ \{T_{k} \}_{k=1}^{n-1} \text{ and } \{V_{k}\}_{k \neq i, j} \right ) \ \le \ 2 \kappa \frac{1}{t}
\end{equation}
for any $1\le i,j \le n$.
\end{lem}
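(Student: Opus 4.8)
The plan is to reduce both assertions to the Wegner-type bounds \eqref{eq:Weg1} and \eqref{eq:Weg2} of Assumption \ref{ass:1} by means of the Schur complement (Feshbach) formula, which writes a diagonal or off-diagonal block of $(X_{W;nW}-\lambda)^{-1}$ as a resolvent of a matrix built only from the one or two relevant blocks $V_i$ (and $V_j$), all the remaining randomness being absorbed into a ``self-energy'' that is measurable with respect to $\{T_k\}$ and $\{V_k\}_{k\neq i,j}$. Throughout one uses that $I\in\cu A_W$ (which follows from Assumption \ref{ass:1}, since \eqref{eq:Weg1} forces $\cu A_W$ to contain an invertible element, whose inverse lies in $\cu A_W$ by Prop.\ \ref{prop:cstar}), so that $V_k-\lambda I\in\cu A_W$ and Lemma \ref{lem:algebra} applies to all block tridiagonal submatrices of $X_{W;nW}-\lambda$.

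First I would prove the non-degeneracy statement $\Pr[\lambda\in\sigma(X_{W;nW})]=0$ by induction on $n$. For $n=1$ we have $X_{W;W}=V_1$, and $\Pr[\lambda\in\sigma(V_1)]=0$ follows from \eqref{eq:Weg1} with $A=\lambda I$ on letting $t\to\infty$. For the inductive step, condition on $\{T_k\}$ and $\{V_k\}_{k\geq 2}$; then $D:=(I-P_1)X_{W;nW}(I-P_1)$ is a fixed block tridiagonal matrix on $n-1$ blocks, of the same i.i.d.\ form, so $D-\lambda$ is a.s.\ invertible by the inductive hypothesis. On that event the Schur complement in block $1$ identifies invertibility of $X_{W;nW}-\lambda$ with invertibility of $V_1-A_1$, where $A_1=\lambda I+T_1\bigl(P_2(D-\lambda)^{-1}P_2\bigr)T_1^\dagger$. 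By Lemma \ref{lem:algebra}, $P_2(D-\lambda)^{-1}P_2\in\cu A_W$, and since $T_1\in\cu T_W$ (which is closed under $\dagger$) one gets $T_1\cu A_W T_1^\dagger\subseteq\cu A_W$; hence $A_1\in\cu A_W^H$ and \eqref{eq:Weg1} (with $t\to\infty$) gives $\Pr[V_1-A_1\text{ singular}\mid\{T_k\},\{V_k\}_{k\geq2}]=0$. Integrating over the conditioning completes the induction.

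For the resolvent bound, fix $i,j$ and condition on the $\sigma$-algebra $\cu G$ generated by $\{T_k\}_{k=1}^{n-1}$ and $\{V_k\}_{k\neq i,j}$; after this conditioning $(V_i,V_j)$ still has law $\field P_W$ (resp.\ $\field P_W\otimes\field P_W$) and is independent of $\cu G$, while $X_{W;nW}-\lambda$ is a.s.\ invertible and so are the relevant submatrices, by Step 1 applied to chains of length $<n$. If $i=j$, the Schur complement in block $i$ gives $P_i(X_{W;nW}-\lambda)^{-1}P_i=(V_i-A)^{-1}$ with $A=\lambda I+P_i\Sigma_iP_i$ a $\cu G$-measurable element of $\cu A_W^H$, and \eqref{eq:Weg1} yields the conditional bound $\kappa/t\le 2\kappa/t$. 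If $i\neq j$, put $Q=P_i+P_j$ and use the Feshbach formula on $\ran Q$: then $P_i(X_{W;nW}-\lambda)^{-1}P_j$ is the $(i,j)$ block of $\bigl(\begin{smallmatrix}V_i-\hat A&\hat C\\ \hat C^\dagger&V_j-\hat B\end{smallmatrix}\bigr)^{-1}$, with $\hat A=\lambda I+P_i\Sigma P_i$, $\hat B=\lambda I+P_j\Sigma P_j$, $\hat C=P_iXP_j-P_i\Sigma P_j$ and $\Sigma=QXQ^\perp\bigl(Q^\perp(X_{W;nW}-\lambda)Q^\perp\bigr)^{-1}Q^\perp XQ$; these are $\cu G$-measurable since they involve only the $T$-blocks touching blocks $i$ and $j$ and the inverse $\bigl(Q^\perp(X_{W;nW}-\lambda)Q^\perp\bigr)^{-1}$, which is a function of $\{V_k\}_{k\neq i,j}$ and $\{T_k\}$. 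Bounding $\norm{P_i(X_{W;nW}-\lambda)^{-1}P_j}$ by the operator norm of the full $2W\times 2W$ inverse and invoking \eqref{eq:Weg2} then gives the claimed $2\kappa/t$, and averaging over $\cu G$ finishes the proof.

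The main obstacle is the algebraic bookkeeping needed to verify $\hat A,\hat B\in\cu A_W^H$ and $\hat C\in\cu{AT}_W$ (and $A\in\cu A_W^H$ in the case $i=j$), since \eqref{eq:Weg2} applies only to matrices of exactly that algebraic type. Here one combines Lemma \ref{lem:algebra} — which places the diagonal blocks of $\bigl(Q^\perp(X_{W;nW}-\lambda)Q^\perp\bigr)^{-1}$ in $\cu A_W$ and the off-diagonal ones in $\cu{AT}_W$ — with the combinatorial observation that deleting blocks $i$ and $j$ from the path $1,2,\dots,n$ disconnects $i-1$ from $i+1$ (and $j-1$ from $j+1$). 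This kills the cross terms in $P_i\Sigma P_i$, leaving only terms of the form $T^\dagger(\,\cdot\,)T$ or $T(\,\cdot\,)T^\dagger$ with the middle factor in $\cu A_W$, which lie in $\cu A_W$ by the defining property of $\cu T_W$; self-adjointness of $\hat A$, $\hat B$, $A$ is automatic because $\Sigma$ (and $\Sigma_i$) is of the form $MCM^\dagger$ with $C$ Hermitian.
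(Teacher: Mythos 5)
Your proposal is correct and follows the same approach as the paper: isolate the block(s) $i$ (and $j$) via the Schur/Feshbach complement, check via Lemma~\ref{lem:algebra} and the defining property of $\cu T_W$ that the resulting self-energy lands in $\cu A_W^H$ (resp.\ $\cu{AT}_W$ for the off-diagonal piece), and then invoke the Wegner bounds \eqref{eq:Weg1} and \eqref{eq:Weg2} conditionally on everything but $V_i$ (and $V_j$). The only real difference is that you spell out explicitly, by induction on $n$, the a.s.\ invertibility of the restricted block-tridiagonal matrices (needed for the Schur complement to make sense), which the paper treats implicitly; that is a useful clarification rather than a departure.
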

\begin{proof} Let us first consider the case $i=j$.  The Schur complement formula shows that
$$ P_{i} (X_{W;nW} - \lambda I)^{-1} P_{i} = \left (V_{i} - \lambda I + K \right )^{-1}$$
with
$$ K = T_{i-1}^{\dagger} (X_{-} - \lambda I)^{-1} T_{i-1} + T_{i}(X_{+}-\lambda I )^{-1} T_{i}^{\dagger}$$ with $X_{-}$ and $X_{+}$ the restrictions of $X$ to the blocks above and below $i$.  By Lemma \ref{lem:algebra} $K \in \cu{A}_{W}^{H}$. (Note that it is self adjoint.)  It  follows from \eqref{eq:Weg1} that $\lambda$ is an eigenvalue of $X_{W;nW}$ with probability $0$ and that \eqref{eq:averagedWeg1} holds for $i=j$.

The argument for $i \neq j$ is similar.   In this case, we first estimate
$$ \norm{P_{i} (X_{W;nW} - \lambda I)^{-1} P_{j}} \le \norm{(P_{i}+ P_{j})  (X_{W;nW} - \lambda I)^{-1} (P_{i}+P_{j})}.$$
As above, we have
$$(P_{i}+ P_{j})  (X_{W;nW} - \lambda I)^{-1} (P_{i}+P_{j}) = \left [ \begin{pmatrix} V_{i} - \lambda I & 0 \\
0 & V_{j} - \lambda I \end{pmatrix} + \begin{pmatrix}A & C \\
C^{\dagger} & B \end{pmatrix} \right ],$$
where $A,B$ and $C$ are formed from blocks of the resolvents of restrictions of $X_{W;nW}$.  One may verify that $A,B \in \cu{A}_{W}^{H}$ and $C \in \cu{AT}_{W}.$ Thus the result follows from \eqref{eq:Weg2}.
\end{proof}

It follows that
\begin{equation}\label{eq:smomentband}
\Ev{ \norm{P_{i} (X_{W;nW} - \lambda I)^{-1} P_{j}}^{s}} \le \frac{2^{s}\kappa^{s}}{1-s} W^{(1+ \sigma)s},
\end{equation}
and so
\begin{equation}\label{eq:smomentbandmatelement}
\Ev{ \abs{ \ip{\vec{v}, P_{i} (X_{W;nW} - \lambda I)^{-1} P_{j} \vec{w}}}^{s}} \le \frac{2^{s}\kappa^{s}}{1-s} W^{(1+ \sigma)s},
\end{equation}
for any two vectors $\vec{v}, \vec{w}$. (See \eqref{eq:bathtub}.)

Lemma F in this context is as follows:
\begin{lem}[Lemma F for $X_{W;nW}$] \label{lem:FBand}Let $V_{j}$, $T_{j}$, $j=1, \ldots$, be mutually independent sequences of independent random $W \times W$ matrices.  Suppose each $V_{j}$ has distribution $\field{P}_{W}$ and each $T_{j}$ has distribution $\field{Q}_{W}$.   Let $D_{W}$ denote the real dimension of $\cu{A}_{W}^{H}$.  Fix a positive number $\nu$  large enough that $\sup_{W} D_{W} W^{-\nu} < \infty$ and suppose also that   $\nu \ge \zeta + \max(a, 1+\sigma + 2b) $, with $\sigma,  a, \zeta$ as in assumption 2 and $b$ as in assumption 3.  Then for each $0 < r < s <1$ and $0 < \Lambda < \infty$ there is $C_{r,s} > 0$ such that if $|i-j| \ge 3$ then
\begin{multline}
\Ev{ \left [ \Phi \left ( P_{i} (X_{W;nW} - \lambda I^{-1} P_{j}\right  ) \right ] ^{r}} \\ \le \exp\left(-C_{r,s} W^{-2\nu} |i-j|\right ) \Ev{ \left [ \Phi \left ( P_{i} (X_{W;nW} - \lambda I)^{-1} P_{j} \right  ) \right ]^{s}}^{r/s}
\end{multline}
for any $\lambda \in [-\Lambda, \Lambda]$ and any non-negative, positive-homogeneous function $\Phi: \cu{A T}_{W} \ra \R$ \tem \ i.e., $\Phi(Y) \ge 0$ and $\Phi(\alpha Y) = \alpha \Phi(Y)$ for $\alpha \ge 0$.
\end{lem}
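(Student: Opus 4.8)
The plan is to replicate the structure of the tridiagonal argument from \S\ref{sec:W=2}, but with the scalar variables $v_k$ replaced by the matrix blocks $V_k$ and the scalar coupling $\alpha_k$ replaced again by a \emph{scalar} multiplier $\alpha_k$ acting as $V_k \mapsto \e^{\alpha_k}\Gamma_k$. First I would establish the matrix analogue of the product formula \eqref{eq:w=2g}: writing $G_k = P_k (X_{W;kW}-\lambda I)^{-1} P_k$ and $\Gamma_k = G_k^{-1}$, the resolvent identity applied to the decoupling $T_{j-1}\mapsto 0$ gives $P_i(X_{W;nW}-\lambda)^{-1}P_j = (-1)^{?}\,(P_i(X_{W;(j-1)W}-\lambda)^{-1}P_{j-1})\,T_{j-1}\,G_j$ (with $G_j$ the block-diagonal entry of the full matrix, not the truncation), and iterating produces an expression of the form
\begin{equation*}
P_i(X_{W;nW}-\lambda)^{-1}P_j \ = \ \left(\prod_{k=i}^{j-1} R_k T_k\right)\widehat H_{j+1},
\end{equation*}
where each $R_k$ is built from $\Gamma_k$ and the $T$'s, and the $\Gamma_k$ satisfy a matrix recursion $\Gamma_k = V_k - \lambda I - T_{k-1}^\dagger \Gamma_{k-1}^{-1} T_{k-1}$. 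Lemma \ref{lem:algebra} is what guarantees this is well-defined inside $\cu{AT}_W$: every $\Gamma_k$ lies in $\cu A_W$ and the off-diagonal factors lie in $\cu{AT}_W$, so the change of variables $V_k \mapsto \Gamma_k$ (Jacobian one, being triangular) makes sense.

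Next I would introduce the scalar couplings $\alpha_k$ for $k\equiv 2\bmod 3$, each with a priori density $H(\alpha)=\frac{1}{2\lf}I[|\alpha|<\lf]$, set $f_k = \e^{\alpha_k}\Gamma_k$, and condition on $(T_\ell, f_\ell)_\ell$. As in the scalar case the $\alpha_k$ remain conditionally independent with explicit densities involving $\rho_W(\e^{-\alpha}f_k + \dots)$ and $\rho_W(\dots + \e^{\alpha}\dots)$. Since $\Phi$ is non-negative and positive-homogeneous, $\Phi$ applied to the product pulls out the factors $\e^{\alpha_k}$ as $\prod_k \e^{\alpha_k}$ exactly as $|\,\cdot\,|$ did for scalars, so $\Phi(P_i(\cdots)^{-1}P_j)^r = (\prod_k \e^{r\alpha_k})\,(\text{rest})$. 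Applying Proposition \ref{prop:basic} to each conditional factor $\Ev{\e^{r\alpha_k}\mid\cdots}$ gives the gain $\e^{-h_k(r,s)}$ with $h_k$ controlled from below by the conditional variance $\Var_q(\alpha_k\mid\cdots)$, and reassembling by conditional independence and then averaging with Hölder yields the analogue of \eqref{eq:keyresult}:
\begin{equation*}
\Ev{\Phi(\cdots)^r} \ \le \ \Ev{\e^{-\frac{s}{s-r}\sum_{k=i}^{j-1} h_k(r,s)}}^{\frac{s-r}{s}} \Ev{\Phi(\cdots)^s}^{r/s}.
\end{equation*}

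The crux, and the main obstacle, is the lower bound on $\Var_q(\alpha_k\mid\cdots)$, i.e.\ the matrix analogue of \eqref{eq:factoredlowerbound}--\eqref{eq:Akest}. As in the scalar case it reduces to bounding below a ratio $\rho_W(V_k + (\e^{-\alpha}-1)\Gamma_k)/\rho_W(V_k + (\e^{-\beta}-1)\Gamma_k)$ for $|\alpha|,|\beta|\le 2\lf$, and fluctuation regularity \eqref{eq:Wfluctreg} delivers the constant $\delta$ provided the perturbations $(\e^{-\alpha}-1)\Gamma_k$ have operator norm $\le \epsilon W^{-\zeta}$. Here the $W$-dependence must be tracked carefully: one needs $\|\Gamma_k\|$ (equivalently $\|G_k^{-1}\|$, or rather $\|\Gamma_k\|$ itself bounded, and $\|\Gamma_k^{-1}\|=\|G_k\|$) controlled on a good event $A_k$ defined by $V_k,V_{k+1}\in\Omega_W$, $\|T_k\|\le \tau W^b$, and bounds of the form $\|\Gamma_{k-1}^{-1}\|, \|\Gamma_k^{-1}\| \lesssim W^{-(1+\sigma)}\cdot(\text{stuff})$, using the recursion $\Gamma_k = V_k - \lambda I - T_{k-1}^\dagger\Gamma_{k-1}^{-1}T_{k-1}$ together with the Wegner estimate \eqref{eq:Weg1} (and \eqref{eq:Weg2} to couple the $V_k,V_{k+1}$ constraints). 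Choosing $\lf$ of order $W^{-\nu}$ — small enough that $\e^{2\lf}-1 \approx \lf$ absorbs the factors $W^{\zeta}$, $W^a$ (from $\|V\|\le LW^a$ on $\Omega_W$), $W^{1+\sigma}$ and $W^{2b}$ (the two $T$'s) appearing in the definition of $A_k$ — one gets $\Pr(A_k\mid (V_\ell)_{\ell\neq k-1,k,k+1},(T_\ell)_{\ell\neq k-1,k}) \ge \frac12 p_0^2 q_0^2$ uniformly in $k$ by integrating successively over $V_{k+1},V_k,V_{k-1},T_k,T_{k-1}$ (the three-$V$ window is exactly why $\alpha_k$ was taken only for $k\equiv 2\bmod 3$, and here one also uses that $D_W W^{-\nu}$ stays bounded so that changing the determinant of variables and the dimension-dependent Lebesgue-volume estimates cost only a bounded factor). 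On $A_k$ one has $U_k(\lf)\ge \delta^2$ and hence $h_k(r,s)\gtrsim rs\,\lf^2\,\delta^2 \sim W^{-2\nu}$; feeding this into the displayed inequality above and invoking the elementary probability Lemma \ref{lem:prob} on sums of weakly-dependent indicators gives the claimed factor $\exp(-C_{r,s}W^{-2\nu}|i-j|)$. The delicate points are (i) verifying the algebraic closure claims so that the change of variables is genuinely one-to-one onto the right vector space, and (ii) the bookkeeping of all $W$-powers in the definition of $A_k$ so that the hypothesis $\nu\ge\zeta+\max(a,1+\sigma+2b)$ is exactly what is needed for the probability of $A_k$ to stay bounded away from zero as $W\to\infty$.
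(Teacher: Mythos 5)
Your proposal reproduces the paper's own argument essentially step for step: the same block product formula for $P_i(X_{W;nW}-\lambda I)^{-1}P_j$ via iterated resolvent identities and the recursion $\Gamma_k = V_k - \lambda I - T_{k-1}^\dagger\Gamma_{k-1}^{-1}T_{k-1}$, the same unit-Jacobian change of variables $V_k\mapsto\Gamma_k$, the same scalar coupling $\alpha_k$ on residues $2\bmod 3$ with $F_k=\e^{\alpha_k}\Gamma_k$ (including the $\e^{-D_W\alpha_k}$ Jacobian factor whose effect is tamed by $\sup_W D_W W^{-\nu}<\infty$), positive homogeneity of $\Phi$ to pull out $\prod\e^{\alpha_k}$, Proposition \ref{prop:basic} applied to the conditionally independent $\alpha_k$'s, the good event $A_k$ controlled by $\Omega_W$, $\norm{T_k}\le\tau W^b$, and the Wegner estimates, the choice $\lf\sim W^{-\nu}$ with $\nu\ge\zeta+\max(a,1+\sigma+2b)$, and finally Lemma \ref{lem:prob}. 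The bookkeeping you flag as delicate is indeed where the work lies, but your account of where each power of $W$ comes from matches the paper, so this is the same proof.
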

\begin{rems} ~

\begin{enumerate} \item 
Below we will apply the result with $\Phi(Y)$ a semi-norm such as the the absolute value of a matrix element $\abs{\ip{\vec{v}, Y \vec{w}}}$ or the norm $\norm{Y}$.  However the proof does not make use of the triangle inequality, so the result also applies, for example, to $\Phi(Y)=$ spectral radius $(Y)$ or $\Phi(Y) = $ smallest singular value of $\Phi$.
\item Under rescaling of the matrix elements $X_{W;nW} \mapsto W^{\gamma} X_{W;nW}$ the localization length $1/C_{r,s} W^{-2\nu}$ should not change.  That this is indeed so follows since $\zeta \mapsto \zeta - \gamma$, $a \mapsto a + \gamma$, $\sigma \mapsto - \gamma$ and $b \mapsto b + \gamma$, so the combination $\zeta + \max(a, 1+\sigma + 2b)$ is invariant under rescaling.
\end{enumerate}
\end{rems}

Combining Lemma \ref{lem:FBand} and \eqref{eq:smomentbandmatelement} we have
\begin{thm}\label{thm:band} Let $V_{j}$, $T_{j}$, $j=1, \ldots$, be mutually independent sequences of independent random $W \times W$ matrices.  Suppose each $V_{j}$ has distribution $\field{P}_{W}$ and each $T_{j}$ has distribution $\field{Q}_{W}$. Let $D_{W}$ denote the real dimension of $\cu{A}_{W}^{H}$.  Fix a positive number $\nu$  large enough that $\sup_{W} D_{W} W^{-\nu} < \infty$ and suppose also that   $\nu \ge \zeta + \max(a, 1+\sigma + 2b) $, with $\sigma,  a, \zeta$ as in assumption 2 and $b$ as in assumption 3.  For $0 <t <1$ let 
\begin{equation}
M(W,t) = \max_{1 \le x,y \le nW}  \Ev{ |\left \langle \vec{e}_{x}, (X_{W;N} -\lambda)^{-1} \vec{e}_{y} \right \rangle|^{t} },
\end{equation}
where $\vec{e}_{x}$ and $\vec{e}_{y}$ denote elementary basis vectors.
Then 
\begin{equation}\label{eq:Mbound}
M(W,t) \le \frac{2^{t}\kappa^{t}}{1-t} W^{(1+ \sigma)t}
\end{equation}
and given $0 < s<t$ there are constants $C, \mu$ such that for any $1 \le x, y \le nW$
\begin{equation}\label{eq:thm:band}
\Ev{ |\left \langle \vec{e}_{x}, (X_{W;N} -\lambda)^{-1} \vec{e}_{y} \right \rangle|^{s} } \ \le \ C M(W,t)^{s/t} \e^{- \mu W^{-2\nu-1} |x-y|} .
\end{equation}
\end{thm}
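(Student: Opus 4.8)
The plan is to obtain Theorem~\ref{thm:band} as an essentially immediate consequence of Lemma~\ref{lem:FBand} together with the fractional moment bound \eqref{eq:smomentbandmatelement}; the only real work is to choose the positive‑homogeneous function $\Phi$ appropriately and to pass from the block index distance $|i-j|$ occurring in Lemma~\ref{lem:FBand} to the site distance $|x-y|$ occurring in \eqref{eq:thm:band}.

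First I would establish \eqref{eq:Mbound}. Given $x,y\in\{1,\ldots,nW\}$, let $i$ and $j$ be the indices of the blocks containing $x$ and $y$, so that $\vec{e}_x=P_i\vec{e}_x$ and $\vec{e}_y=P_j\vec{e}_y$; hence $\ip{\vec{e}_x,(X_{W;N}-\lambda)^{-1}\vec{e}_y}=\ip{\vec{e}_x,P_i(X_{W;N}-\lambda)^{-1}P_j\vec{e}_y}$, and \eqref{eq:smomentbandmatelement} (with $t$ in place of $s$ and $\vec{v}=\vec{e}_x$, $\vec{w}=\vec{e}_y$) gives $\Ev{\abs{\ip{\vec{e}_x,(X_{W;N}-\lambda)^{-1}\vec{e}_y}}^t}\le \tfrac{2^t\kappa^t}{1-t}W^{(1+\sigma)t}$. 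Taking the maximum over $x,y$ yields \eqref{eq:Mbound}.

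Next, fix $x,y$ with $i,j$ as above, and first suppose $|i-j|\ge 3$. The map $\Phi(Y)=\abs{\ip{\vec{u},Y\vec{v}}}$ on $\cu{AT}_W$, where $\vec{u},\vec{v}\in\C^W$ are the coordinate vectors singling out the position of $x$ within block $i$ and of $y$ within block $j$, is non‑negative and positive‑homogeneous, and $\Phi\!\left(P_i(X_{W;nW}-\lambda I)^{-1}P_j\right)=\abs{\ip{\vec{e}_x,(X_{W;N}-\lambda)^{-1}\vec{e}_y}}$. Applying Lemma~\ref{lem:FBand} with its $(r,s)$ replaced by the present $(s,t)$, and bounding the second factor by $M(W,t)^{s/t}$, gives
\[
\Ev{\abs{\ip{\vec{e}_x,(X_{W;N}-\lambda)^{-1}\vec{e}_y}}^s}\ \le\ \exp\!\left(-C_{s,t}W^{-2\nu}|i-j|\right)M(W,t)^{s/t}.
\]
Since block $i$ and block $j$ are intervals of $W$ consecutive integers, $|x-y|\le(|i-j|+1)W$, so $|i-j|\ge |x-y|/W-1$; as $\nu>0$ gives $W^{-2\nu}\le 1$, this yields $\exp(-C_{s,t}W^{-2\nu}|i-j|)\le \e^{C_{s,t}}\exp(-C_{s,t}W^{-2\nu-1}|x-y|)$, which is \eqref{eq:thm:band} with $\mu=C_{s,t}$ and $C=\e^{3C_{s,t}}$ for all pairs with $|i-j|\ge 3$. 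For the remaining pairs $|i-j|\le 2$ we have $|x-y|<3W$, hence $\exp(-\mu W^{-2\nu-1}|x-y|)\ge \e^{-3\mu W^{-2\nu}}\ge \e^{-3\mu}$, while by Jensen's inequality (or H\"older, using $s<t$) $\Ev{\abs{\ip{\vec{e}_x,(X_{W;N}-\lambda)^{-1}\vec{e}_y}}^s}\le M(W,t)^{s/t}$; combining the two shows \eqref{eq:thm:band} holds here too with the same $C=\e^{3\mu}$. This completes the argument.

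There is no genuine obstacle: all the content is in Lemma~\ref{lem:FBand} and in the Wegner estimate behind \eqref{eq:smomentbandmatelement}. The only points requiring care are recognizing that a single matrix element $\abs{\ip{\vec{u},\cdot\,\vec{v}}}$ is an admissible positive‑homogeneous $\Phi$ for Lemma~\ref{lem:FBand}; the elementary but necessary conversion of the block distance $|i-j|$ into the site distance $|x-y|$, which is precisely what degrades the exponent from $W^{-2\nu}$ to $W^{-2\nu-1}$; and handling the near‑diagonal blocks $|i-j|\le 2$ separately, since Lemma~\ref{lem:FBand} requires $|i-j|\ge 3$, by absorbing the bounded exponential factor into the constant $C$.
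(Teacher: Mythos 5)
Your proposal is correct and follows essentially the same route the paper indicates: it observes that \eqref{eq:Mbound} is a special case of \eqref{eq:smomentbandmatelement}, applies Lemma~\ref{lem:FBand} with $\Phi(Y)=\abs{\ip{\vec{u},Y\vec{v}}}$ and $(r,s)\mapsto(s,t)$ for blocks $|i-j|\ge 3$, converts block distance to site distance via $|i-j|\ge |x-y|/W-1$ (producing the extra power of $W$), and absorbs the near-diagonal cases $|x-y|<3W$ into the constant $C$. The paper's own proof is a terse prose version of exactly these steps, so nothing further needs to be said.
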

\begin{proof} This amounts to special cases of \eqref{eq:smomentbandmatelement} and Lemma \ref{lem:FBand}.  The exponent $2\nu +1$ appears in \eqref{eq:thm:band} because the difference $|i-j|$ of the blocks to which $x$ and $y$ belong is estimated by $|x-y| /W$.  The constant $C$ compensates for the exponential factor $\e^{-\mu W^{-\nu-1} |x-y|}$ when $|x-y|$ is smaller than $3W$, in which case the estiamte of Lemma \ref{lem:FBand} does not hold.
\end{proof}
\begin{rem}Putting \eqref{eq:thm:band} and \eqref{eq:Mbound} together we have
\begin{equation}\label{eq:decayagain}
\Ev{ |\left \langle \vec{e}_{x}, (X_{W;N} -\lambda)^{-1} \vec{e}_{y} \right \rangle|^{s} } \le \const W^{(1 + \sigma) s} \e^{- \mu W^{-\nu-1} |x-y|} .
\end{equation}
If the diagonal blocks $V_{j}$ are Wigner matrices, as in assumption 4 in \S\ref{sec:ensembles} below, one may obtain the estimate
\begin{equation}
M(W,t) \le \frac{2^{t}\kappa^{t}}{1-t} W^{\frac{1}{2}},
\end{equation}
resulting in a very slight improvement on the estimate on the r.h.s.\ of \eqref{eq:decayagain},
\begin{equation}\label{eq:decayyetagain}
\Ev{ |\left \langle \vec{e}_{x}, (X_{W;N} -\lambda)^{-1} \vec{e}_{y} \right \rangle|^{s} } \le \const W^{\frac{s}{2} } \e^{- \mu W^{-\nu-1} |x-y|} .
\end{equation}
This improvement is not very significant, as the main point here is the exponential factor, which dominates any power of $W$ as long as $|x-y| >>W^{2\nu + 1}$.
\end{rem}

\section{Fluctuations}\label{sec:proof}
We now prove Lemma \ref{lem:FBand}.  Following the proof of Lemma \ref{lem:fluctW=2},  let us fix $\lambda$ and set
$$ G_{n}(i,j) = P_{i}(X_{W;nW} - \lambda I)^{-1} P_{j}.$$
Since $G_{n}(i,j)^{\dagger} = G_{n}(j,i)$, in estimating $\norm{G_{n}(i,j)}$ we may assume without loss that $i \le j$.  We have, by the resolvent identity,
\begin{equation}\label{eq:wg}
G_{n}(i,j)  \  = \ - G_{j-1}(i, j-1) T_{j-1} G_{n}(j,j).
\end{equation}
Iteration gives
\begin{equation}
G_{N}(i,j) \ = \ \left ( - 1\right )^{j-i} G_{i}(i,i) T_{i}G_{i+1}(i+1,i+1) T_{i+1} \cdots G_{j-1}(j-1,j-1) T_{j-1} G_{n}(j,j) . 
\end{equation}
Let us define $W \times W$ random matrices
\begin{equation}
\Gamma_{k}=  G_{k}(k,k)^{-1} ,
\end{equation}
related by a recursion relation 
\begin{equation}\label{eq:recurrenceW}
\Gamma_{k} \ = \ V_{k} -\lambda I -  T_{k-1}^{\dagger} \Gamma_{k-1}^{-1} T_{k-1} .
\end{equation}
As in the $W=2$ case, these identities may be established using the Schur-complement formula \tem \ compare with  \eqref{eq:w=2g} and \eqref{eq:recurrence}. Similarly,
\begin{equation}
G_{n}(j,j)^{-1}= V_{j} - \lambda I -T_{j-1}^{\dagger} \Gamma_{j-1}^{-1}T_{j-1} - T_{j}^{\dagger} \widehat G_{j+1} T_{j} 
= \Gamma_{j} - T_{j} \widehat G_{j+1} T_{j} ^{\dagger}
\end{equation}
where $\widehat G_{j+1} = P_{j+1} (\widehat X_{W;nW}- \lambda)^{-1} P_{j+1}$ with $\widehat X_{W;nW}$ the matrix obtained from $X_{W;nW}$ by setting $T_{j}=0$.  Thus $\widehat G_{j+1}$ is a function of the matrix variables $(V_{k})_{k=j+1}^{N}$ and $(T_{k})_{k=j+1}^{N}$.

We now make the change of variables $V_{k} \mapsto \Gamma_{k}$ in our probability space. By Lem.\ \ref{lem:algebra} and Prop.\ \ref{prop:cstar}, $\Gamma_{k} \in \cu{A}_{W}^{H}$.  As in the tri-diagonal case, the Jacobian determinant is $1$, so 
\begin{multline}
\text{Joint distribution of $(\Gamma_{k})_{k=1}^{n}$ given $(T_{k})_{k=1}^{n-1}$} \\ = \ 
\rho(\Gamma_{1} + \lambda I)
\prod_{k=2}^n \rho(\Gamma_k + \lambda I + T_{k-1}^{\dagger} \Gamma_{k-1} T_{k-1} )  \, \prod_{k=1}^n \di \Gamma_k,
\end{multline} 
where $\di \Gamma_{k}$ denotes Lebesgue measure on $\cu{A}_{W}^{H}$.
In terms of the matrices $\Gamma_{k}$, we have
\begin{equation}\label{eq:Grep}
G_{n}(i,j) =  (-1)^{|i-j|} \Gamma_{i}^{-1} T_{i} \Gamma_{i+1}^{-1} T_{i+1} \cdots \Gamma_{j-1}^{-1} T_{j} \cdot  (\Gamma_{j} - T_{j} \widehat G_{j+1} T_{j} ^{\dagger} )^{-1}, 
\end{equation}
where $\widehat G_{j+1}$ is a function of $(\Gamma_{k})_{k=j}^{n}$ and $(T_{k})_{k=j}^{n}$ (since $V_{k} = \Gamma_{k} +  \lambda I - T_{k-1}^{\dagger} \Gamma_{k-1}^{-1} T_{k-1}$).

The matrix product in \eqref{eq:Grep} is non-commutative, so it is not clear if the heuristic analysis that the ``log of $G$ is a sum of terms with only local correlations'' is valid. Nonetheless, we may use the trick employed above of coupling the system to a family of independent identically distributed scalar variables $\alpha_{2}, \alpha_{5}, \ldots$, each with absolutely continuous distribution \begin{equation}
H(\alpha_{k}) \di \alpha_{k} \ = \ \frac{1}{2\lf} I[|\alpha_{k}| \le \lf]  \di \alpha_{k} , 
\end{equation}with $\lf > 0$ to be chosen below.  We define 
\begin{equation}
F_k =  \e^{\alpha_k} \Gamma_k ,
\end{equation}
where we take $\alpha_{k}=0$ for $k \not \equiv 2 \mod 3$. The Jacobian of the transformation $(\Gamma_{k},\alpha_{k}) \mapsto (F_{k}, \alpha_{k})$ is $\prod_{k=2,5,8,\ldots}^{N} \e^{-D_{W}\alpha_k}$, where $D_{W} = \dim \cu{A}_{W}^{H}$ is the  dimension of $\cu{A}_{W}^{H}$.
  Thus
\begin{multline}
\text{joint distribution of $(F_k)_{k=1}^{n}$ and $(\alpha_k)_{k=1}^{N}$, given $(T_{k})_{k=1}^{n-1}$} \ = \\
\prod_{k\equiv 2 \mod 3}^{n} \rho(F_{k-1} + \lambda I + T_{k-2}^{\dagger}F_{k-2}^{-1} T_{k-2}) \rho(\e^{-\alpha_{k}}F_{k} + \lambda I + T_{k-1}^{\dagger}F_{k-1}^{-1} T_{k-1})\\ \times \rho(F_{k+1} +\lambda I + \e^{\alpha_{k}} T_{k}^{\dagger}F_{k}^{-1} T_{k}) \, H(\alpha_{k}) \e^{- D_{W}\alpha_{k}}   \di F_{k-1} \di F_{k} \di F_{k+1} \di \alpha_{k},\end{multline}
with the convention that $T_{0}=0$.  

As in the tri-diagonal case, the variables $\alpha_k$ remain independent after conditioning on $(F_{k})_{k=1}^{N}$.  Also, the
\begin{multline}
\text{distribution of $\alpha_{k}$ given $(T_{\ell})_{\ell=1}^{n-1}$ and $(F_{\ell})_{\ell=1}^{n}$} \ = \\
 \frac{ \rho(\e^{-\alpha_k} F_k +  \lambda I + T_{k-1}^{\dagger}F_{k-1}^{-1} T_{k-1} ) \rho(F_{k+1} +\lambda I +  \e^{\alpha_{k}} T_{k}^{\dagger}F_{k}^{-1} T_{k}) H(\alpha_k) \e^{-D_{W}\alpha_k} }{Z_k} \di \alpha_{k}
\end{multline}
with 
\begin{equation}
Z_k \ = \ \frac{1}{2\lf} \int_{-\lf}^{\lf} \rho(\e^{-\alpha} F_k + \lambda I + T_{k-1}^{\dagger}F_{k-1}^{-1} T_{k-1} ) \rho(F_{k+1} +\lambda I +   \e^{\alpha}T_{k}^{\dagger}F_{k}^{-1} T_{k}) \e^{-D_{W}\alpha} \di \alpha.
\end{equation}

Now fix a non-negative positive homogeneous $\Phi$ as in the statement of the Lemma.   Replacing $\Gamma_{j}$ in \eqref{eq:Grep} by $\e^{-\alpha_{j}} F_{j}$, we find that
\begin{equation}
\Phi(G_{n}(i,j)) = \prod_{\substack{k \equiv 2 \mod 3  \\ i \le k \le j-1}} \e^{\alpha_{k}}  \Phi \left ( (-1)^{|i-j|}  \left (\prod_{k=i}^{j-1} F_{k}^{-1} T_{k} \right)  \widehat H_{j+1} \right ) ,
\end{equation}
where
\begin{equation}
\widehat H_{j+1} = 
\frac{1}{\Gamma_{j} -  T_{j} \widehat G_{j+1} T_{j} ^{\dagger}}
\end{equation}
is a function of $(T_{k},F_{k},\alpha_{k})_{k=j}^{N}$. 
Since $(\alpha_{k})$ are conditionally independent, it follows that
\begin{multline}\Ev{\left . \left [ \Phi(G_{n}(i,j)) \right] ^{r} \right | (T_{\ell},F_{\ell})_{\ell=1}^{N}, \ (\alpha_{\ell})_{k=j}^{N}} \\ = \ \left [ \Phi \left ( (-1)^{|i-j|}   \left (\prod_{k=i}^{j-1} F_{k}^{-1} T_{k} \right)  \widehat H_{j+1} \right ) \right ]^{r}   \prod_{\substack{k \equiv 2 \mod 3  \\ i \le k \le j-1}} \Ev{\left . \e^{r\alpha_{k}} \right |(T_{k},F_{k})_{k=1}^{N}} .
\end{multline}
By propostion \ref{prop:basic} and the H\"older inequality, we conclude that (compare with \eqref{eq:keyresult}):
\begin{equation}
\Ev{\left [ \Phi(G_{n}(i,j)) \right] ^{r}} \ \le \ \Ev{\e^{- \frac{s}{s-r} \sum_{k=i}^{j-1} h_{k}(r,s)}}^{\frac{s-r}{s}}  \Ev{\left [ \Phi(G_{n}(i,j)) \right] ^{s}}^{r/s}, \label{eq:matkeyresult}
\end{equation}
where for $k \equiv 2 \mod 3$
\begin{equation}\label{eq:hintegralII}
h_{k}(r,s) = \frac{1}{s} \int_{0}^{s} \min(r,q) (s- \max(r,q)) \Var_{q}(\alpha_{k} |  (T_{\ell},F_{\ell})_{\ell=1}^{N}) \di q ,
\end{equation}
with $\Var_{q}$ as in \eqref{eq:conditionalvariance}, and we have set
$h_{k}(r,s)=0$ for $k \not \equiv 2 \mod 3$.

Let us express $ \Var_{q}(\alpha_{k} |  (T_{\ell},F_{\ell})_{\ell=1}^{N})$  in terms of $(T_{\ell}, \Gamma_{\ell}, \alpha_{\ell})_{\ell=1}^{N}$:
\begin{equation}
 \Var_{q}(\alpha_{k} |  (T_{\ell},F_{\ell})_{\ell=1}^{N}) \ = \ \inf_{m \in \R} \frac{\int_{-\lf}^{\lf} (\alpha- m)^{2} \e^{(q-D_{W}) \alpha}\nu_{k}(\alpha) \di \alpha}{\int_{-\lf}^{\lf} \e^{(q-D_{W})\alpha} \nu_{k}(\alpha) \di \alpha}
\end{equation}
with 
\begin{equation}\label{eq:nuk}
\nu_{k} (\alpha) \ = \ \rho(\e^{\alpha_{k} - \alpha} \Gamma_{k} +\lambda I +T_{k-1}^{\dagger}\Gamma_{k-1}^{-1} T_{k-1}) \rho(\Gamma_{k+1} +  \lambda I  +\e^{\alpha - \alpha_{k}} T_{k}^{\dagger }\Gamma_{k}^{-1} T_{k}) .
\end{equation}
Thus (compare with  \eqref{eq:factoredlowerbound}), \begin{multline}\label{eq:matfactoredlowerbound}
 \Var_{q}(\alpha_{k} | (T_{\ell},F_{\ell})_{\ell=1}^{N}) \  \ge \ \frac{1}{3} \lf^{2} \e^{-2q\lf} \e^{-2D_{W} \lf}  \inf_{-2\lf < \alpha, \beta < 2\lf} \frac{ \rho(V_{k} +(\e^{-\alpha} -1) \Gamma_{k}) }{ \rho(V_{k} + (\e^{-\beta}-1)  \Gamma_{k})} \\ \inf_{-2\lf < \alpha, \beta < 2\lf}\frac{\rho(V_{k+1} + (\e^{\alpha}-1) T_{k}^{\dagger }\Gamma_{k}^{-1} T_{k}) }{ \rho(V_{k+1} + (\e^{\beta}-1) T_{k}^{\dagger }\Gamma_{k}^{-1} T_{k})} .
\end{multline}
With  \eqref{eq:matkeyresult} this implies 
\begin{equation}\label{eq:mat keyresultredux}
\Ev{\left [ \Phi(G_{n}(i,j)) \right] ^{r}} \ \le \ \Ev{\e^{-  \frac{r s}{6} \lf^2 \e^{-2s \lf} \e^{-2D_{W} \lf} \sum_{k=i}^{j-1} U_k(\lf)}}^{\frac{s-r}{s}}  \Ev{\left [ \Phi(G_{n}(i,j)) \right] ^{s}}^{r/s},
\end{equation}
where
\begin{equation}
U_k(\lf) = \inf_{-2\lf < \alpha, \beta < 2\lf} \frac{ \rho(V_{k} +(\e^{-\alpha} -1) \Gamma_{k}) }{ \rho(V_{k} + (\e^{-\beta}-1)  \Gamma_{k})} \inf_{-2\lf < \alpha, \beta < 2\lf}\frac{\rho(V_{k+1} + (\e^{\alpha}-1) T_{k}^{\dagger }\Gamma_{k}^{-1} T_{k}) }{ \rho(V_{k+1} + (\e^{\beta}-1) T_{k}^{\dagger }\Gamma_{k}^{-1} T_{k})} .
\end{equation}

By fluctuation regularity of $\bb{P}_{W}$, we have $U_{k}(\lf) \ge \delta^{2} I[A_{k}] $ where $I[A_{k}]$ is the indicator function of the event:
\begin{equation}
A_{k} = \set{V_{k}, V_{k+1} \in \Omega_{W} \ , \quad \norm{\Gamma_{k}} \le \frac{\epsilon }{ \e^{2\lf} -1} W^{-\zeta}, \ \ \text{and} \ \ \norm{T_{k}^{\dagger} \Gamma_{k}^{-1}T_{k}} \le \frac{\epsilon}{ \e^{2\lf} -1} W^{-\zeta}},
\end{equation}
with $\delta,\epsilon >0$, $\zeta \ge0$ and $\Omega_{W}$ as in assumption (3).   In turn, since $\Gamma_{k} = V_{k} + \lambda +T_{k-1}^{\dagger}\Gamma_{k-1}^{-1} T_{k-1} $, we see that
\begin{multline}\label{eq:matAksubset}
A_{k} \supset \set{V_{k+1} \in \Omega_{W}} \cap\set{V_{k} \in \Omega_{W} } \cap  \set{ \norm{T_{k-1}}, \norm{T_{k}} \le \tau W^{b}} \\ \cap \set{\norm{\Gamma_{k-1}^{-1}}  \le \frac{1}{\tau^{2}}  \left ( 
 \frac{\epsilon}{ \e^{2\lf} -1} W^{-\zeta} - L W^{a} - |\lambda| \right )W^{-2b} } \\ \cap  \set{\norm{\Gamma_{k}^{-1}}  \le \frac{1}{\tau^{2}}  \frac{\epsilon}{ \e^{2\lf} -1}W^{-2b-\zeta}} ,\end{multline}
with $L, a \ge 0$ as in assumption 2 and $\tau,b\ge 0$ as in assumption 3.  This allows us to estimate the probability of $A_{k}$ from below by successively integrating over $V_{k+1}$, $V_{k}$, $V_{k-1}$, $T_{k}$, and $T_{k-1}$ in that order.   To begin, by assumption 2, 
\begin{equation}
\Pr(V_{k+1} \in \Omega_{W} | (V_{l})_{l \neq k} , \ (T_{l}) ) = \bb{P}_{W}(\Omega_{W}) \ge p_{0} >0 .\end{equation}
Since $\Gamma_{k} = V_{k} + \lambda I + T_{k-1}^{\dagger}\Gamma_{k-1}^{-1} T_{k-1}$, we see from the Wegner estimate \eqref{eq:Weg1} that
\begin{multline}
\Pr \left ( \left . V_{k} \in \Omega_{W}\ , \ \norm{\Gamma_{k}^{-1}}  \le \frac{1}{\tau^{2}}  \frac{\epsilon}{ \e^{2\lf} -1} W^{-2b -\zeta} \right | (V_{l})_{l \neq k,k+1} , \ (T_{l}) \right )  \\ \ge 
p_{0} - \kappa  \tau^{2}  \frac{\e^{2\lf}- 1}{\epsilon} W^{1+ \sigma + \zeta + 2b}.
\end{multline}
Similarly
\begin{multline}
\Pr \left ( \left. \norm{\Gamma_{k-1}^{-1}} \le \frac{1}{\tau^{2}} \left ( 
 \frac{\epsilon}{ \e^{2\lf} -1}W^{-\zeta} - L W^{a} - | \lambda|\right ) W^{-2b} \right | (V_{l})_{l \neq k-1,k,k+1}, \ (T_{l}) \right )\\ \ge 1 -  \kappa \tau^{2} \frac{1}{\frac{\epsilon}{ \e^{2\lf} -1}W^{-\zeta} - L W^{a} - |\lambda|} W^{1+\sigma+2b}.
\end{multline}
Combining these estimates and using assumption 3 to integrate over $T_{k}$ and $T_{k-1}$, we find
\begin{multline}\label{eq:Akest}
\Pr(\left . A_{k} \right | (V_{l})_{l \neq k-1,k,k+1} , \ (T_{l})_{l \ne k, k-1}) \\ \ge \ q_{0}^{2} p_{0}\left ( p_{0} - \kappa  \tau^{2}  \frac{\e^{2\lf}- 1}{\epsilon} W^{1 + \sigma + 2b+\zeta} \right) \left ( 1 -   \kappa \tau^{2} \frac{1}{\frac{\epsilon}{ \e^{2\lf} -1}W^{-\zeta} - L W^{a} - |\lambda|} W^{1+\sigma+2b} \right ).
\end{multline}

Taking $\lf = c W^{-\nu}$ with $\nu \ge \max( a , 2b+\sigma+1) + \zeta$, we may choose $c$ sufficiently small to make the r.h.s. larger than $ \half q_{0}^{2} p_{0}^{2}$, say.  Since $U_{k} (\eta) \ge \delta^{2} I[A_{k}]$ we find, integrating successively over $V_{k}, T_{k}$ from $k=i, \ldots, j-1$ (see Lemma \ref{lem:prob}), that
\begin{multline}
\Ev{\e^{-  \frac{r s}{6} \lf^2 \e^{-s\lf} \e^{-D_{W}\lf} \sum_{k=i}^{j-1} U_k(\lf)}}^{\frac{s-r}{s}}
\\ \le \ \exp\left ( - \frac{s-r}{2s} q_{0}^{2}p_{0}^{2}  \left (1 - \e^{- c^{2} \delta^{2}\frac{r s}{6} W^{-2 \nu} \e^{-c sW^{-\nu}} \e^{-c D_{W} W^{-\nu }}} \right ) \left \lfloor \frac{|i-j|}{3} \right \rfloor \right ).
\end{multline}
Increasing $\nu$, if necessary, so that $\sup_{W}D_{W} W^{-\nu} < \infty$ completes the proof. \qed

\section{Ensembles}\label{sec:ensembles}
In this section, we consider several examples of band matrix ensembles satisfying assumptions 1, 2, and 3 of section \ref{sec:axioms}.   Assumption 1 is simply the choice of an algebra $\cu{A}_{W}$ to support the distribution of the diagonal blocks, and the corresponding set $\cu{T}_{W}$ for the off-diagonal blocks.  In this regard, we will consider two cases:
\begin{enumerate}
\item[($\R$)] $\cu{A}_{W}=$ $W \times W$ matrices with real entries,

or

\item[($\C$)] $\cu{A}_{W}=$ $W \times W$ matrices with complex entries.
\end{enumerate}
In each case the dimension of the algebra $D_{W}$ is comparable to $W^{2}$ and $\cu{T}_{W} = \cu{A}_{W}$. 

\subsection{Wigner-matrix blocks and the Wegner estimate} We shall suppose that the diagonal blocks $V_{j}$ of $X_{W;N}$ are \emph{Wigner matrices}:
\begin{ass}The distribution of the diagonal blocks, $\di \bb{P}_{W}(V)$, written in terms of the matrix elements 
\begin{equation}\label{eq:Wigner}
V \ = \ \frac{1}{\sqrt{W}} \begin{pmatrix} d_{1} & a_{1,2} & \cdots  & \cdots &a_{1,W} \\
a_{1,2}^{*} & d_{2} & & &  \vdots \\
\vdots && \ddots & & \vdots \\
\vdots & & & \ddots & a_{W-1,W}\\
a_{1,W}^{*}&\cdots &\cdots & a_{W-1,W}^{*} & d_{W}
\end{pmatrix} ,
\end{equation}
has the form
\begin{equation}
\di \bb{P}_{W}(V) \ = \  \prod_{j=1}^{W} h(d_{j}) \di d_{j} \ \prod_{1\le i < j\le W} g( a_{i,j})  \di a_{i,j},
\end{equation}
where $\di d_{j}$ is Lebesgue measure on the real line, $h \in L^{\infty}(\R) \cap L^{1}(\R)$ is non-negative with $\int h = 1$, and either
\begin{enumerate}
\item[($\R$)] $\di a_{i,j}$ is Lebesgue measure on $\R$ and $g\in L^{\infty}(\R) \cap L^{1}(\R)$ is non-negative with $\int_{\R} g = 1$,

or

\item[($\C$)] $\di a_{i,j}$ is Lebesgue measure on $\C$ and $g \in L^{\infty}(\C) \cap L^{1}(\C)$ is non-negative with $\int_{\C} g= 1$.
\end{enumerate}
Furthermore, we require
\begin{equation}
\int_{\R} \lambda^{2} h(\lambda) \di \lambda  \ < \ \infty ,
\end{equation}
\begin{equation}
\int  |a|^{4} g(a) \di a \ < \ \infty \ , \quad \text{and} \quad \int a g(a) \di a = 0 .
\end{equation}
\end{ass}

Clearly the measure $\di \bb{P}_{W}$ is absolutely continuous with respect to Lebesgue measure on $\cu{A}_{W}^{H}$ \tem \ this is part 1 of assumption 2. Regarding the Wegner estimates \tem \ part 2 of assumption 2 \tem \ we then have the following
\begin{thm}[Wegner estimate]\label{thm:Wegner} Under assumption 4, the Wegner estimates \eqref{eq:Weg1} and \eqref{eq:Weg2} hold with $\sigma = \frac{1}{2}$ and $\kappa = 2 \pi \esssup_{\lambda}h(\lambda).$
\end{thm}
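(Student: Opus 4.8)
The plan is to obtain both Wegner estimates by the classical spectral-averaging argument for Anderson-type operators, using only the absolute continuity of $\bb{P}_{W}$ and the boundedness of the density $h$ of the diagonal entries. First I would record the elementary bound, valid for any Hermitian matrix $M$ and any $\epsilon>0$,
\begin{equation*}
\1\big[\dist(0,\sigma(M))\le\epsilon\big]\ \le\ 2\epsilon\,\Im\Tr\big(M-\im\epsilon\big)^{-1},
\end{equation*}
which holds because any eigenvalue $\mu$ of $M$ with $|\mu|\le\epsilon$ contributes $\epsilon/(\mu^{2}+\epsilon^{2})\ge 1/(2\epsilon)$ to $\Im\Tr(M-\im\epsilon)^{-1}=\sum_{\mu\in\sigma(M)}\epsilon/(\mu^{2}+\epsilon^{2})$. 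Since $\norm{M^{-1}}=1/\dist(0,\sigma(M))$, taking $\epsilon=1/(tW^{3/2})$ reduces \eqref{eq:Weg1} (with $\sigma=\half$) to bounding $\Ev{\Im\Tr(M-\im\epsilon)^{-1}}$ for $M=V-A$, and reduces \eqref{eq:Weg2} to the same bound for $M$ the $2W\times 2W$ block matrix written there.

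The core step is a single-site estimate, which I would carry out by fixing a diagonal index $k$ of $M$ and conditioning on all matrix elements except the diagonal variable $d_{k}$ (which enters $V$ as the $(k,k)$ entry $d_{k}/\sqrt W$). Then $M-\im\epsilon=(H_{0}-\im\epsilon)+(d_{k}/\sqrt W)\,\eb_{k}\eb_{k}^{\dagger}$ with $H_{0}$ Hermitian and deterministic given the conditioning; here I use that $A$ and $B$ are Hermitian, so $M$ is Hermitian, while the off-diagonal block $C$ and the entries $a_{i,j}$ are absorbed into $H_{0}$. The rank-one resolvent (Krein) identity then gives
\begin{equation*}
\ip{\eb_{k},(M-\im\epsilon)^{-1}\eb_{k}}\ =\ \frac{1}{P+\im Q+d_{k}/\sqrt W},\qquad P+\im Q\ :=\ \ip{\eb_{k},(H_{0}-\im\epsilon)^{-1}\eb_{k}}^{-1},
\end{equation*}
with $Q<0$ because $\Im\ip{\eb_{k},(H_{0}-\im\epsilon)^{-1}\eb_{k}}>0$. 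Hence $\Im\ip{\eb_{k},(M-\im\epsilon)^{-1}\eb_{k}}=|Q|\,\big((P+d_{k}/\sqrt W)^{2}+Q^{2}\big)^{-1}\ge 0$, and integrating against the density $h$, substituting $u=d_{k}/\sqrt W$ and using that the Poisson-kernel integral $\int_{\R}|Q|\,\big((P+u)^{2}+Q^{2}\big)^{-1}\di u$ equals $\pi$, I obtain
\begin{equation*}
\Ev{\left.\Im\ip{\eb_{k},(M-\im\epsilon)^{-1}\eb_{k}}\ \right|\ \text{the other matrix elements}}\ \le\ \pi\,\esssup h\,\sqrt W.
\end{equation*}
Summing over the $W$ (resp.\ $2W$) diagonal indices, using independence of the $d$-variables within a block and — for \eqref{eq:Weg2} — between the two blocks, gives $\Ev{\Im\Tr(M-\im\epsilon)^{-1}}\le\pi\,\esssup h\,W^{3/2}$ (resp.\ twice that). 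Substituting this and $\epsilon=1/(tW^{3/2})$ into the pointwise bound yields \eqref{eq:Weg1} with $\kappa=2\pi\,\esssup_{\lambda}h(\lambda)$ and \eqref{eq:Weg2} with $2\kappa$ in place of $\kappa$.

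I do not expect a genuine obstacle here; this is the matrix analogue of the standard Wegner estimate. The points demanding care are the bookkeeping of powers of $W$ — one factor $W^{1/2}$ coming from the $1/\sqrt W$ normalization of $V$ in the single-site bound, another from summing over $W$ diagonal indices, combining to $W^{3/2}=W^{1+\sigma}$ with $\sigma=\half$ — and the verification that after conditioning $H_{0}$ is Hermitian (so $H_{0}-\im\epsilon$ is invertible and $\ip{\eb_{k},(H_{0}-\im\epsilon)^{-1}\eb_{k}}$ has strictly positive imaginary part, which is what makes the rank-one formula and the Poisson-kernel integral legitimate regardless of the conditioned values, and in particular for every $A,B\in\cu{A}_{W}^{H}$, $C\in\cu{AT}_{W}$). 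I would also note that the proof uses only $h\in L^{\infty}$ together with absolute continuity, so the moment hypotheses and the centering $\int a\,g(a)\,\di a=0$ in Assumption 4 are irrelevant to the Wegner estimate and enter the paper's analysis only elsewhere.
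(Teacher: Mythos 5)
Your proposal is correct and follows essentially the same route as the paper: a pointwise bound converting the norm event into $\Im\Tr$ of a shifted resolvent, a rank-one (equivalently Schur-complement) reduction to the diagonal variable $d_k$, the Poisson-kernel average against $h$ producing $\pi\esssup h\,\sqrt W$, and a sum over $W$ (resp.\ $2W$) diagonal indices. The only cosmetic differences are that the paper phrases the initial step via $\frac{2}{t^2}\tr[(V-A)^2+1/t^2]^{-1}$ rather than your indicator inequality, and invokes the Schur-complement formula where you invoke Krein's rank-one identity; these are equivalent.
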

\begin{proof} This result, which is obtained by averaging over the diagonal variables $\{d_{j}\}$ only, is a standard estimate from the theory of random Schr\"odinger operators, first obtained by Wegner \cite{Wegner:1981p5335}.  For completeness,  we sketch the proof.  

Note that $\norm{(V-A)^{-1}} > t$ if and only if $V-A$ has an eigenvalue in the interval $(-\frac{1}{t}, \frac{1}{t})$.  It follows that
\begin{multline} \Pr \set{\norm{(V-A)^{-1}} > t} \le \frac{2}{t^{2}}  \Ev{ \tr \left [ (V-A)^{2} + \frac{1}{t^{2}}\right ]^{-1}} \\ 
 = \ \frac{2}{t} \Ev{ \Im \tr \left [ V - A - \im \frac{1}{t} I\right ]^{-1}}
 = \ \frac{2}{t} \sum_{i=1}^{W} \Ev{ \Im \ip{\vec{e}_{i},  \left [ V - A - \im \frac{1}{t} I\right ]^{-1} \vec{e}_{i}}}.
\end{multline}
By the Schur complement formula,
\begin{equation}
\ip{\vec{e}_{i},  \left [ V - A - \im \frac{1}{t} I\right ]^{-1} \vec{e}_{i}}
= \frac{1}{\frac{1}{\sqrt{W}} d_{i} - \frac{1}{t} \im - \gamma} ,
\end{equation}
where $\gamma$ is a function of all matrix elements of $V$ \emph{except} $d_{i}$.  Thus $\gamma$ is a random variable independent of $d_{i}$, so
\begin{equation} 
 \Ev{ \Im \ip{\vec{e}_{i},  \left [ V - A - \im \frac{1}{t} I\right ]^{-1} \vec{e}_{i}}}
 \ = \ \Ev{ \frac{\frac{1}{t} + \Im \gamma}{(\frac{1}{\sqrt{W}}d_{i} - \Re \gamma)^{2} + (\frac{1}{t} + \Im \gamma)^{2}}} \ \le \ \norm{h}_{\infty} \pi \sqrt{W},
\end{equation}
where the inequality follows from replacing the average $\int \bullet h(  d_{i}) \di d_{i}$ by the upper bound $\norm{h}_{\infty}\int_{\R} \bullet \di d_{i}.$  Summing over $i$ gives the result.

The proof of \eqref{eq:Weg2} is analogous.  However in that case the trace is over a $2W$ dimensional space, resulting in the additional factor of $2$ on the r.h.s.\ of that equation.
\end{proof}

The scaling factor $\sqrt{W}$ that appears in \eqref{eq:Wigner}  is natural, as with this scaling  the matrix $V$ has a finite density of states in the large $W$ limit \cite{Molchanov:1992p5303}:
\begin{equation}
\lim_{W \ra \infty } \frac{1}{W} \int_{ \cu{A}_{W}^{H}}\tr f(V) \di \bb{P}_{W}(V) \ = \ \frac{1}{2\sigma^{2} \pi} \int_{-2\sigma}^{2\sigma}  f(\lambda) \sqrt{4 \sigma^{2} - \lambda^{2}} \di \lambda ,
\end{equation}
with $\sigma^{2} = \int |a|^{2} g(a)\di a.$   A key fact below is the following related result
\begin{thm}[Bai and Yin \cite{Bai:1988hy}]  
Let $V$ be a $W \times W$ random matrix of the form \eqref{eq:Wigner}, with $\{d_{i}\}$ and $\{a_{i,j}\}$ mutually independent sets of independent random variables.  If
$$ \Ev{|d_{i}|} < \infty , \quad \Ev{|a_{i,j}|^{4}} < \infty, \quad \Ev{a_{i,j}}=0, $$
and $\sigma^{2} = \Ev{|a_{i,j}|^{2}}$ then
\begin{equation}\label{eq:normtozero} \lim_{W \ra \infty} \Pr \left [ \norm{V} > 2 \sigma + \eta \right ] = 0.
\end{equation}
\end{thm}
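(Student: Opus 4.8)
This is the Bai--Yin theorem \cite{Bai:1988hy}; I sketch the method-of-moments argument. For Hermitian $V$ and any positive integer $k$ one has $\norm{V}^{2k} \le \tr V^{2k}$, so Markov's inequality gives
\begin{equation}\label{eq:by-markov}
\Pr\left[ \norm{V} > 2\sigma + \eta \right] \ \le \ \frac{\Ev{\tr V^{2k}}}{(2\sigma + \eta)^{2k}}.
\end{equation}
The plan is to let $k = k(W) \ra \infty$ with $\log W \ll k(W) \ll W$ and to establish the uniform bound
\begin{equation}\label{eq:by-moment}
\Ev{\tr V^{2k}} \ \le \ W \,(2\sigma)^{2k}\,(1 + o(1)), \qquad W \ra \infty, \ \text{uniformly in } k \le k(W).
\end{equation}
Granting \eqref{eq:by-moment}, the right side of \eqref{eq:by-markov} is at most $W(1+o(1))\bigl(2\sigma/(2\sigma+\eta)\bigr)^{2k(W)}$, and choosing, say, $k(W) = \lfloor (\log W)^2 \rfloor$ the geometric factor decays faster than any power of $W$, so the bound tends to $0$.

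The first step is a truncation, needed because only the first (diagonal) and fourth (off-diagonal) moments of the entries are assumed finite, whereas \eqref{eq:by-moment} involves high moments. Fix $\delta_W \ra 0$ with $\delta_W\sqrt W \ra \infty$, replace each entry by its restriction to the event that its modulus does not exceed $\delta_W\sqrt W$, and recenter the off-diagonal entries so that they remain mean zero; call the result $\widetilde V$. Using $\Ev{\abs{a_{i,j}}^{4}} < \infty$ and $\Ev{\abs{d_i}} < \infty$, and taking $\delta_W \ra 0$ slowly enough, one checks that $\widetilde V = V$ with probability tending to $1$ and that the recentering perturbs $\norm{V}$ by $o(1)$; hence it suffices to prove \eqref{eq:by-moment} for $\widetilde V$, whose entries satisfy $\abs{\widetilde V_{i,j}} \le 2\delta_W$ while $\Ev{\abs{\sqrt W\, \widetilde V_{i,j}}^{2}} = \sigma^{2}(1 + o(1))$ off the diagonal.

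Next comes the combinatorial core. Expanding
\begin{equation}\label{eq:by-walks}
\Ev{\tr \widetilde V^{2k}} \ = \ \sum_{i_1,\ldots,i_{2k}=1}^{W} \Ev{\widetilde V_{i_1 i_2}\widetilde V_{i_2 i_3}\cdots \widetilde V_{i_{2k} i_1}},
\end{equation}
one has a sum over closed walks of length $2k$ on $\{1,\ldots,W\}$; by independence and $\Ev{a_{i,j}} = 0$ a walk contributes nothing unless every edge it uses is traversed at least twice, so it spans at most $k+1$ distinct vertices. The dominant contribution comes from walks whose graph of traversed edges is a tree with exactly $k$ edges, each traversed exactly twice: there are $C_k$ (the $k$-th Catalan number) shapes, about $W^{k+1}$ ways to label the vertices, and each term is $(\sigma^{2}/W)^{k}(1+o(1))$, giving $W\,C_k\,\sigma^{2k}(1+o(1)) \le W(2\sigma)^{2k}(1+o(1))$. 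Every other walk --- one with a cycle, one revisiting a vertex, or one using an edge of multiplicity $\ge 3$ --- is lower order: a cycle or a vertex collision costs a factor $O(1/W)$, and each unit of excess edge multiplicity costs a factor $O(\delta_W)$ by the truncation bound. Summing these corrections and verifying that the total is $o(1)$ relative to the main term, \emph{uniformly for $k \le k(W)$}, yields \eqref{eq:by-moment}.

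I expect the last point to be the main obstacle: one must count the non-tree walks precisely (the Füredi--Komlós combinatorics) and calibrate the growth of $k(W)$ against the truncation level $\delta_W$ so that the (factorial-type) growth in $k$ of the number of such walks is overwhelmed by the gains of order $1/W$ and $\delta_W$. This balancing is exactly where the fourth-moment hypothesis on the $a_{i,j}$ (and the moment hypothesis on the $d_i$) is used; with it in hand the estimates close and the proof is complete.
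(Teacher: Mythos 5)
The paper does not prove this theorem; it states it as a known result of Bai and Yin, and the accompanying remark merely observes that it follows from Theorem~A of \cite{Bai:1988hy} (applied to both the largest and, after symmetrizing, the smallest eigenvalue, and adapted from the real symmetric to the complex Hermitian case). There is therefore no in-paper proof to compare against. Your proposal does something different and, in spirit, more: it sketches the method-of-moments argument that actually underlies the Bai--Yin result (Markov inequality via $\tr V^{2k}$ with $k = k(W) \to \infty$ slowly, truncation, Catalan-tree main term, Füredi--Komlós counting of excess walks). This is indeed the standard route and the outline is sound, including the correct recognition that the bottleneck is the non-tree enumeration calibrated against the truncation level.

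One step in the truncation, however, does not go through under the hypotheses as stated. You truncate each unscaled entry at $\delta_W\sqrt W$ and assert that $\widetilde V = V$ with probability tending to $1$. For the off-diagonal entries this is fine: $\Pr(\exists\, i<j : |a_{ij}| > \delta_W\sqrt W) \le \E[|a|^4\,\1(|a|>\delta_W\sqrt W)]/\delta_W^4$, and with $\E|a_{ij}|^4 < \infty$ one can choose $\delta_W \to 0$ slowly enough that this vanishes. For the diagonal, the union bound gives $W\Pr(|d_1| > \delta_W\sqrt W)$, and from $\E|d_1| < \infty$ one only gets $\Pr(|d_1| > M) = o(1/M)$, so the bound is $o(\sqrt W/\delta_W)$ --- it does not tend to $0$ for any choice of $\delta_W$. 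Indeed $\max_i |d_i|/\sqrt W \le \norm{V}$, and if $d_1$ has a tail like $M^{-1-\varepsilon}$ (finite first moment) then $\Pr(\max_i|d_i| > c\sqrt W) \to 1$, so the conclusion itself fails. What one actually needs is $\E[d_i^2] < \infty$ (or at least finite second moments of $d_i^{\pm}$), which is also what Bai and Yin require in Theorem~A; the hypothesis $\E|d_i| < \infty$ in the paper's statement appears to be an imprecision inherited here. Under the corrected second-moment hypothesis, your truncation closes and the rest of the sketch is the standard Füredi--Komlós argument.
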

\begin{rem} This follows from Theorem A of ref.\ \cite{Bai:1988hy}, which gives the convergence of $\lambda_{1}$, the largest eigenvalue of $V$, to $2 \sigma$ with probability one.  Symmetrizing the assumptions of Theorem A and applying the result also to show that $\lambda_{W}$, the smallest eigenvalue of $V$, converges to $-2 \sigma$, this result follows.  (The proof in \cite{Bai:1988hy} is written out in the real symmetric case, but carries over to the complex hermitian case with only very minor modifications.)
\end{rem}

\begin{cor}\label{cor:uniformnormbound} Under assumption 4, we may find $p_{0}, L > 0$  such that
\begin{equation}\label{eq:uniformnormbound}
\Pr \left [ \norm{V} \le L \right ] \ \le \ p_{0}.
\end{equation}
\end{cor}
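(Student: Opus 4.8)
The plan is to read the statement off from the Bai--Yin theorem quoted above. First I would verify that assumption 4 supplies exactly its hypotheses: $\int \lambda^{2} h(\lambda)\,\di\lambda < \infty$ gives $\Ev{|d_{i}|} < \infty$; $\int |a|^{4} g(a)\,\di a < \infty$ gives $\sigma^{2} := \int |a|^{2} g(a)\,\di a < \infty$; and $\int a\, g(a)\,\di a = 0$ is assumed outright. Hence for every $W \in \cu{S}$ the diagonal block $V$ of the form \eqref{eq:Wigner} satisfies the hypotheses of the Bai--Yin theorem, so, fixing $\eta = 1$, $\lim_{W \ra \infty} \Pr[\,\norm{V} > 2\sigma + 1\,] = 0$.

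Consequently there is $W_{0}$ such that $\Pr[\,\norm{V} \le 2\sigma + 1\,] \ge \half$ for all $W \in \cu{S}$ with $W \ge W_{0}$. For each of the finitely many $W \in \cu{S}$ with $W < W_{0}$, the measure $\bb{P}_{W}$ is a probability measure on the finite-dimensional space $\cu{A}_{W}^{H}$, so $\norm{V} < \infty$ holds $\bb{P}_{W}$-almost surely and there is a finite $L_{W}$ with $\Pr[\,\norm{V} \le L_{W}\,] \ge \half$. Taking $L = \max\{\,2\sigma + 1,\ L_{W} : W \in \cu{S},\ W < W_{0}\,\}$ and $p_{0} = \half$ then yields $\Pr[\,\norm{V} \le L\,] \ge p_{0}$ uniformly in $W \in \cu{S}$, which is the bound needed for \eqref{eq:uniformnorma} (with the natural scaling $a = 0$).

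I would also remark that the constant $\half$ plays no role: running the same argument with $1 - \epsilon$ in place of $\half$ shows that $p_{0}$ may be taken arbitrarily close to $1$. This matters because \eqref{eq:uniformnorma}, here with $a = 0$, must hold on the same set $\Omega_{W}$ as the fluctuation-regularity estimate \eqref{eq:Wfluctreg}; taking $p_{0}$ close to $1$ ensures the intersection of the norm-bounded event with the fluctuation-regularity event remains of positive $\bb{P}_{W}$-measure, in fact bounded below by the probability of the latter event minus $\epsilon$. There is no genuine obstacle in the argument; the only step requiring a word of care is that the asymptotic statement of Bai--Yin says nothing for fixed small $W$, so those values must be absorbed into $L$ by the elementary tightness observation above.
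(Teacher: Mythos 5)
Your proof is correct and is essentially the argument the paper intends (the paper states the corollary without proof, as a direct consequence of the Bai--Yin theorem). You correctly read through the typo in the statement (the displayed inequality should be $\Pr[\norm{V}\le L]\ge p_0$, as your argument and the downstream uses make clear), verify that assumption 4 supplies the Bai--Yin hypotheses, and handle the small-$W$ cases by elementary tightness so that the bound is uniform over $W\in\cu{S}$ --- a point the paper leaves implicit. The closing remark about taking $p_0$ close to $1$ is a sensible observation, though strictly speaking, when the paper later sets $\Omega_W := \{\norm{V}\le L\}$ (GUE/GOE case) the fluctuation-regularity estimate is proved on all of $\Omega_W$, so no intersection argument is needed there; it is only in the compactly supported case (Theorem~\ref{thm:charact}) that $\Omega_W$ is a proper subset of $\{\norm{V}\le L\}$ and the near-$1$ probability is genuinely exploited.
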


We require very little of the off diagonal blocks $T_{j}$.  They need only satisfy the estimate \eqref{eq:normboundb} analogous to \eqref{eq:uniformnorma} and \eqref{eq:uniformnormbound}.  In particular, they could be deterministic, say $T_{j}=I$ for all $j$ or $T_{j}$ given by a Toeplitz matrix.   In this section we consider a few examples of random off-diagonal blocks modeled on the blocks for the Gaussian band ensemble \eqref{eq:GBE}.  In that case, the off-diagonal blocks $T_{j}$ are lower triangular matrices with Gaussian entries.  More generally we may suppose
\begin{ass}The distribution of the off-diagonal blocks, $\di \bb{Q}_{W}(T)$, written in terms of the matrix elements 
\begin{equation}\label{eq:Wigner}
T \ = \ \frac{1}{\sqrt{W}} \begin{pmatrix} 0 & 0 & \cdots  & \cdots &0  \\
t_{2,1} & \ddots & & &  \vdots \\
\vdots &\ddots & \ddots & & \vdots \\
\vdots & &\ddots & \ddots & 0 \\
t_{W,1}&\cdots &\cdots & t_{W,W-1} & 0\end{pmatrix} ,
\end{equation}
has the form
\begin{equation}
\di \bb{Q}_{W}(T) \ = \  \prod_{1\le j < i\le W} \di \mu( t_{i,j}) ,
\end{equation}
where either
\begin{enumerate}
\item[($\R$)] $ \mu(t_{i,j})$ is a probability measure on $\R$ 

or

\item[($\C$)] $ \mu(t_{i,j})$ is a probability measure on $\C$,
\end{enumerate}
and 
\begin{equation}
\int  |t|^{4} \di\mu(t)  \ < \ \infty \ , \quad \text{and} \quad \int t \di \mu(t) = 0 .
\end{equation}
\end{ass}
\begin{thm}\label{thm:uniformTnorm} Under assumption 5, we may find $q_{0}, \tau > 0$ such that assumption 3 holds with $b=0$, i.e.,
\begin{equation}\label{eq:uniformTnorm} 
\Pr \left [ \norm{T} \le \tau \right ] \le q_{0}.
\end{equation}
\end{thm}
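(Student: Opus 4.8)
The content of the theorem is that \eqref{eq:normboundb} holds with $b=0$: there are $q_{0},\tau>0$ with $\bb{Q}_{W}\set{T:\norm{T}\le\tau}\ge q_{0}$ for every $W\in\cu{S}$ (as with Corollary~\ref{cor:uniformnormbound}, the inequality displayed in the statement is to be read in this direction). The plan is to reduce the assertion to a single uniform estimate, $\sup_{W\in\cu{S}}\Ev{\norm{T}}\le C<\infty$, and then apply Markov's inequality: given such a $C$ one has $\bb{Q}_{W}\set{\norm{T}>\tau}\le C/\tau$ for all $W$, so $\tau=2C$ and $q_{0}=\half$ do the job, with $b=0$.

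The heart of the matter is thus the uniform bound on $\Ev{\norm{T}}$. Under assumption 5 the entries of $T$ are $T_{ij}=\tfrac{1}{\sqrt{W}}\,t_{ij}\,\1[i>j]$ with the $t_{ij}$ i.i.d.; in particular they are independent and centered ($\int t\,\di\mu=0$), with $\Ev{|T_{ij}|^{2}}=\1[i>j]\,\sigma_{T}^{2}/W$ where $\sigma_{T}^{2}:=\int|t|^{2}\di\mu<\infty$ (finite by Cauchy--Schwarz), and $\Ev{|T_{ij}|^{4}}=\1[i>j]\,M_{4}/W^{2}$ where $M_{4}:=\int|t|^{4}\di\mu<\infty$. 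I would invoke Lata{\l}a's bound for the operator norm of a random matrix with independent, mean-zero entries,
\begin{equation*}
\Ev{\norm{T}}\ \le\ C_{0}\Bigl(\max_{i}\bigl(\textstyle\sum_{j}\Ev{|T_{ij}|^{2}}\bigr)^{1/2}+\max_{j}\bigl(\textstyle\sum_{i}\Ev{|T_{ij}|^{2}}\bigr)^{1/2}+\bigl(\textstyle\sum_{i,j}\Ev{|T_{ij}|^{4}}\bigr)^{1/4}\Bigr)
\end{equation*}
(with $C_{0}$ an absolute constant, at worst doubled in the complex case by treating real and imaginary parts separately). Here every row-sum of second moments is $(i-1)\sigma_{T}^{2}/W\le\sigma_{T}^{2}$, every column-sum is $(W-j)\sigma_{T}^{2}/W\le\sigma_{T}^{2}$, and $\sum_{i,j}\Ev{|T_{ij}|^{4}}=\binom{W}{2}M_{4}/W^{2}\le M_{4}/2$, so $\Ev{\norm{T}}\le C_{0}\bigl(2\sigma_{T}+(M_{4}/2)^{1/4}\bigr)=:C$, independent of $W$. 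Combined with the first paragraph this proves the theorem.

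If one wishes to rely only on the Bai--Yin theorem already recorded in this section, I would argue instead as follows. For each $k\in\N$, $\Ev{\norm{T}^{2k}}\le\Ev{\Tr\bigl((TT^{\dagger})^{k}\bigr)}$, and expanding the trace over index cycles one sees that every nonvanishing term is a product of nonnegative moments of the $t_{ij}$; hence deleting the constraints $i>j$ can only increase the sum, so $\Ev{\Tr\bigl((TT^{\dagger})^{k}\bigr)}\le\Ev{\Tr\bigl((\widetilde{T}\widetilde{T}^{\dagger})^{k}\bigr)}$, where $\widetilde{T}=\tfrac{1}{\sqrt{W}}(m_{ij})_{i,j=1}^{W}$ has all entries i.i.d.\ $\sim\mu$. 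The sample-covariance form of the Bai--Yin moment estimate (with the customary truncation of the entries, permitted by the fourth-moment hypothesis) bounds the last quantity by $W\,C_{1}^{2k}$ uniformly for $k$ up to order $\log W$, with $C_{1}$ depending only on $\sigma_{T}$ and $M_{4}$; choosing $k\asymp\log W$ and using $\Ev{\norm{T}}\le\bigl(\Ev{\norm{T}^{2k}}\bigr)^{1/2k}$ then gives $\sup_{W}\Ev{\norm{T}}\le\e^{1/2}C_{1}<\infty$.

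The point that needs care — and the reason the theorem is not immediate from \S\ref{sec:ensembles} — is that the tempting shortcut of symmetrizing, $T\mapsto T+T^{\dagger}$ (a Wigner-type matrix to which the Bai--Yin theorem proved above applies), and bounding $\norm{T}$ by $\norm{T+T^{\dagger}}$, is \emph{not} legitimate: $T$ is the strictly lower-triangular part of $T+T^{\dagger}$, and triangular truncation is not bounded on the operator norm uniformly in $W$ (its norm grows like $\log W$). One must estimate $\norm{T}$ in a way that respects its non-Hermitian, triangular structure, which is exactly what Lata{\l}a's inequality — or the nonnegativity-based moment comparison above — accomplishes; the $1/\sqrt{W}$ normalization in the definition of $T$ is what renders every quantity above $O(1)$ and thereby forces $b=0$.
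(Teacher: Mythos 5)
Your main argument is correct: Lata{\l}a's inequality gives $\sup_{W}\Ev{\norm{T}}<\infty$ under exactly the independence, centering, and fourth-moment hypotheses of assumption 5, and Markov's inequality then yields $q_{0},\tau$ with $b=0$ (and you read the misprinted inequality in the statement in the right direction). But this is a genuinely different route from the paper's, and it imports an external result. The paper instead writes $T=\half(T+T^{\dagger})+\frac{1}{2\im}\,\im(T-T^{\dagger})$ and applies the Bai--Yin theorem, already quoted in this section, to each of the two Hermitian matrices $T+T^{\dagger}$ and $\im(T-T^{\dagger})$: both are Wigner-type matrices with independent centered entries (and zero diagonal) satisfying the fourth-moment condition, so each has norm at most $\sigma+\eta$ with probability tending to one, and the triangle inequality bounds $\norm{T}$. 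In particular, your closing paragraph attacks a step the paper never takes: the objection to triangular truncation is correct in itself, but the paper does not bound $\norm{T}$ by $\norm{T+T^{\dagger}}$; it decomposes $T$ into Hermitian and anti-Hermitian parts, which costs only a factor that is uniform in $W$, so the ``shortcut'' you rule out is in fact legitimate in the form the paper uses it. What each approach buys: yours gives an explicit expectation bound and needs no symmetrization, while the paper's uses only the Bai--Yin input already established for assumption 4, keeping the paper self-contained. One further caution: your fallback moment-comparison argument (dropping the constraint $i>j$ to compare with a fully i.i.d.\ matrix) is not airtight as stated, since the nonvanishing terms in the expansion of $\Ev{\Tr((TT^{\dagger})^{k})}$ involve joint moments such as $\int t^{2}\bar{t}\,\di\mu$ or, in the real case, $\int t^{3}\di\mu$, which need not be nonnegative; that route would require a truncation or a restriction to even-multiplicity terms before the monotonicity claim can be made.
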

\begin{proof}It follows from \cite[Theorem A]{Bai:1988hy} that, with $\sigma^{2}= \int |t|^{2} \di \mu(t)$, 
\begin{equation}
\lim_{W \ra 0} \Pr \left [ \norm{T + T^{\dagger}} > \sigma + \eta \right ] =0, \quad \lim_{W \ra 0} \Pr \left [ \norm{\im( T - T^{\dagger})} > \sigma + \eta \right ] = 0,
\end{equation}
for any $\eta > 0$.  Since 
\begin{equation}
T \ = \ \half ( T + T^{\dagger} ) + \frac{1}{2\im} \im (T - T^{\dagger}) ,
\end{equation}
it follows that 
\begin{equation}
\lim_{W \ra 0} \Pr \left [ \norm{ T} >  \sigma + \eta \right ] = 0.
\end{equation}
Thus \eqref{eq:uniformTnorm} holds.
\end{proof}

\subsection{Fluctuation regularity} 
A particular example of distributions satisfying assumption 4 are the Gaussian Orthogonal Ensemble (GOE), corresponding to case ($\R$), and the Gaussian Unitary Ensemble (GUE), corresponding to case ($\C$).  In these cases, the measure $\bb{P}$ is of the form
\begin{equation}   
\di \bb{P}(V) \ \propto \  \e^{- \beta W \tr V^{2}} \di V, \quad V \in \cu{A}_{W}^{H},
\end{equation}
with $\beta =1$ ($\R$) or $\beta =2$ ($\C$).
That is,
\begin{equation}
h(d) \ = \ \frac{1}{\sqrt{\pi}} \e^{-\beta d^{2}}, \quad g(a) \ = \ \frac{1}{(2 \beta\pi)^{\frac{\beta}{2}}} \e^{-2 \beta |a|^{2}} .
\end{equation}
\begin{thm} If $V$ is a GUE or GOE matrix of size $W$ then assumption 2 of section \ref{sec:axioms} holds with $\sigma = \frac{1}{2}$, $\zeta =2$ and $a=0$. \end{thm}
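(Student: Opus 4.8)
The plan is to check the three clauses of Assumption~\ref{ass:1} separately, observing that two of them have in effect already been disposed of. \emph{Absolute continuity} is immediate: for GUE/GOE the measure $\bb{P}_W$ has density
\[
\rho_W(V) \ = \ \frac{1}{Z_W}\,\e^{-\beta W \tr V^{2}}, \qquad V \in \cu{A}_W^{H},
\]
with respect to Lebesgue measure on $\cu{A}_W^{H}$, where $\beta = 1$ in case~($\R$) and $\beta = 2$ in case~($\C$). The \emph{Wegner-type estimates} \eqref{eq:Weg1} and \eqref{eq:Weg2} with $\sigma = \tfrac12$ are exactly Theorem~\ref{thm:Wegner}, since GUE and GOE are instances of Assumption~4 ($h$ and $g$ being Gaussians, hence bounded and integrable, with the required second and fourth moments and with $\int a\,g(a)\,\di a = 0$). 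Consequently the only point that remains is \emph{fluctuation regularity with bounded tails}, which must be verified with $a = 0$ and $\zeta = 2$.

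For the sets $\Omega_W$ I would simply take $\Omega_W = \{\,V \in \cu{A}_W^{H} : \norm{V} \le L\,\}$. By Corollary~\ref{cor:uniformnormbound} one can fix $L$, independent of $W$, so that $\bb{P}_W(\Omega_W) \ge p_0$ for some $p_0 > 0$ and every $W \in \cu{S}$; this is the step where the Bai--Yin theorem is used, and it is exactly what allows the exponent $a = 0$ --- a ball of fixed radius still carries probability bounded below uniformly in $W$. The norm bound \eqref{eq:uniformnorma} then holds with $a = 0$ by construction.

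Finally, for \eqref{eq:Wfluctreg}: since the normalization cancels,
\[
\frac{\rho_W(V_1)}{\rho_W(V_2)} \ = \ \e^{-\beta W\,(\tr V_1^{2} - \tr V_2^{2})},
\]
so it is enough to control $W\,\abs{\tr V_1^{2} - \tr V_2^{2}}$ when $\norm{V_j - V} \le \epsilon W^{-\zeta}$ for some $V \in \Omega_W$. In that case $\norm{V_j} \le L + \epsilon$ and $\norm{V_1 - V_2} \le 2\epsilon W^{-\zeta}$, so, writing $\tr V_1^{2} - \tr V_2^{2} = \tr\big((V_1 - V_2)(V_1 + V_2)\big)$ and using $\abs{\tr(AB)} \le W\norm{A}\,\norm{B}$ for $W \times W$ matrices,
\[
\abs{\tr V_1^{2} - \tr V_2^{2}} \ \le \ W\,\norm{V_1 - V_2}\,\norm{V_1 + V_2} \ \le \ 4\epsilon(L + \epsilon)\,W^{1-\zeta}.
\]
Therefore $\beta W\,\abs{\tr V_1^{2} - \tr V_2^{2}} \le 4\beta\epsilon(L+\epsilon)\,W^{2-\zeta}$, a $W$-independent constant exactly when $\zeta = 2$; with that choice $\rho_W(V_1)/\rho_W(V_2) \ge \e^{-4\beta\epsilon(L+\epsilon)} =: \delta > 0$ uniformly in $W$, which is \eqref{eq:Wfluctreg}. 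Assembling the three clauses gives Assumption~2 with $\sigma = \tfrac12$, $\zeta = 2$, $a = 0$.

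The main obstacle, such as it is, lies not in this theorem but in its inputs: the only non-elementary ingredient is the uniform-in-$W$ operator-norm estimate of Corollary~\ref{cor:uniformnormbound}, which rests on Bai--Yin and is already in hand. The remaining care is purely bookkeeping of powers of $W$ --- it is the trace-norm bound $\abs{\tr(AB)} \le W\norm{A}\norm{B}$, rather than a cruder Hilbert--Schmidt estimate of $\norm{V_1 - V_2}$, that makes $\zeta = 2$ (and not something larger) come out --- together with the routine verification that GUE/GOE meet the hypotheses of Assumption~4 used in the Wegner step.
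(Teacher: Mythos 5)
Your proposal is correct and follows essentially the same route as the paper: the Wegner part is delegated to the separately-proved Wegner theorem, $\Omega_W = \{\norm{V}\le L\}$ with $L, p_0$ from the Bai--Yin corollary, and the density ratio is controlled via $\tr V_1^2 - \tr V_2^2 = \tr\bigl((V_1-V_2)(V_1+V_2)\bigr)$ together with $\abs{\tr(AB)} \le W\norm{A}\norm{B}$, forcing $\zeta = 2$. (Your constant $4\beta\epsilon(L+\epsilon)$ in the exponent is in fact the correct bookkeeping; the paper drops a factor of $2$ at one step, but this only affects the value of $\delta$, not the result.)
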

\begin{cor}Assumptions 1, 2, and 3 hold for the Gaussian band ensemble \eqref{eq:GBE}.
\end{cor}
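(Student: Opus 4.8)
The plan is to read the algebra data and the two block marginals straight off the matrix \eqref{eq:bandmatrix} and then quote the preceding results. First I would fix the algebra: take $\cu{S}=\N$ and, for every $W$, let $\cu{A}_{W}$ be the full algebra $M_{W}(\C)$ of complex $W\times W$ matrices (case $(\C)$ of \S\ref{sec:ensembles}). This is a $\star$ algebra over $\R$, and since $T^{\dagger}M_{W}(\C)T\subset M_{W}(\C)$ for every complex $T$ one has $\cu{T}_{W}=\cu{AT}_{W}=M_{W}(\C)$ as well, with $D_{W}=\dim_{\R}\cu{A}_{W}^{H}=W^{2}$. Decomposing $X_{W;nW}$ into $W\times W$ blocks as in \eqref{eq:blockbandform2}: since \eqref{eq:bandmatrix} is banded, every entry of a diagonal block lies in the band, so each $V_{j}\in\cu{A}_{W}^{H}$; an off-diagonal entry of $T_{j}=P_{j}X_{W}P_{j+1}$ sits at band distance $W+q-p$ and hence vanishes unless $q<p$, so each $T_{j}$ is strictly lower triangular and trivially lies in $\cu{T}_{W}$. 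As the matrix entries of $X_{W;N}$ are independent, $(V_{j})$ and $(T_{j})$ are independent i.i.d.\ sequences. This establishes Assumption~1.

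Next I would identify the marginals. Because the weight $\e^{-2W\tr X_{W;N}^{2}}$ factorizes over the independent matrix entries, and the within-block contribution of a diagonal block to $\tr X_{W;N}^{2}$ is exactly $\tr V_{j}^{2}$, the law of each $V_{j}$ on $\cu{A}_{W}^{H}$ is $\propto\e^{-2W\tr V^{2}}\di V$, i.e.\ the size-$W$ GUE in the normalization of \S\ref{sec:ensembles} ($\beta=2$), whose entry densities are bounded with finite moments and vanishing mean; thus Assumption~4 holds, and the theorem just above on GUE/GOE blocks yields Assumption~2 with $\sigma=\tfrac12$, $\zeta=2$, $a=0$. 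In the same way the cross-block contribution to $\tr X_{W;N}^{2}$ shows that each $T_{j}$ equals $\tfrac{1}{\sqrt{W}}$ times a strictly lower-triangular matrix with i.i.d.\ centered complex Gaussian entries, precisely the form of Assumption~5, so Theorem~\ref{thm:uniformTnorm} gives Assumption~3 with $b=0$, while $\field{Q}_{W}$ is trivially supported on $\cu{T}_{W}=M_{W}(\C)$. Assembling, Assumptions~1, 2 and 3 all hold, which is the Corollary; feeding $\sigma=\tfrac12$, $a=0$, $\zeta=2$, $b=0$, $D_{W}\asymp W^{2}$ into Theorem~\ref{thm:band} (so $\nu$ may be taken $=\max(2,\,\zeta+\max(a,1+\sigma+2b))=\tfrac72$) recovers $\sigma\le\tfrac12$ and $\mu=2\nu+1\le8$ as asserted in Theorem~\ref{thm:resolvent}.

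I do not expect a serious obstacle: the real content sits in the results being quoted, and the Corollary is essentially a structural check. The one place needing care is the normalization match in the second paragraph — confirming that restricting $\e^{-2W\tr X_{W;N}^{2}}\di X_{W;N}$ to a diagonal (respectively cross-) block reproduces exactly a GUE (respectively the scaled form of Assumption~5) in the $\tfrac{1}{\sqrt{W}}$-conventions of \S\ref{sec:ensembles}, so that the exponent coefficient $2W$ corresponds to $\beta W$ with $\beta=2$ and the moment hypotheses on $h$, $g$, $\mu$ are genuinely met. A minor point worth flagging is that the general framework permits the off-diagonal blocks not to be triangular, whereas for \eqref{eq:GBE} the $T_{j}$ happen to be strictly lower triangular; this causes no difficulty because $\cu{A}_{W}=\cu{T}_{W}=\cu{AT}_{W}=M_{W}(\C)$, so the closure-under-nonlinear-operations concern that motivated Assumption~1 is vacuous in this example.
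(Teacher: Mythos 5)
Your proposal is correct and takes essentially the same route as the paper: pick $\cu{A}_W=M_W(\C)$ to verify Assumption~1, observe that the diagonal blocks of \eqref{eq:GBE} are GUE matrices so that the preceding theorem delivers Assumption~2 with $\sigma=\tfrac12,\ \zeta=2,\ a=0$, and that the (strictly lower-triangular, centered Gaussian) off-diagonal blocks satisfy Assumption~5 so that Theorem~\ref{thm:uniformTnorm} delivers Assumption~3 with $b=0$. The only difference is that you spell out the band-index bookkeeping showing each $T_j$ is strictly lower triangular and the normalization match $\e^{-2W\tr V^2}=\e^{-\beta W\tr V^2}$ with $\beta=2$, which the paper leaves implicit.
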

\begin{proof}We have already derived the Wegner estimates (Thm.\ \ref{thm:Wegner}).  It remains only to show the fluctuation regularity. For the Gaussian ensembles, we have
\begin{equation}
\frac{\rho(V_{1})}{\rho(V_{2})} \ = \ \e^{-\beta W \tr (V_{1}^{2} - V_{2}^{2})}
= \e^{-\beta W \tr(V_{1}-V_{2}) (V_{1}+V_{2})} \ge \e^{-\beta W^{2} \norm{V_{1}-V_{2}} \norm{V_{1}+ V_{2}}} .
\end{equation}
If $\norm{V_{1}- V}, \norm{V_{2}- V} \le \epsilon W^{-2}$ we have
\begin{equation}
\frac{\rho(V_{1})}{\rho(V_{2})}  \ge \e^{-2 \beta \epsilon ( \norm{V} + \epsilon W^{-2}) }.
\end{equation}

Letting $p_{0}$ and  $L$ be as in Cor.\ \ref{cor:uniformnormbound}, we set $\Omega_{W} := \{ \norm{V} \le L\}$.  Then $\Pr(\Omega_{W}) \ge p_{0} > 0$ and if $V \in \Omega_{W}$, we have
\begin{equation}
\frac{\rho(V_{1})}{\rho(V_{2})}  \ge \e^{-2 \epsilon ( L + \epsilon ) } \ := \ \delta,
\end{equation}
whenever $\norm{V_{1}- V}, \norm{V_{2}- V} \le \epsilon W^{-2}$. \end{proof}

To obtain fluctuation regularity for general Wigner matrices \eqref{eq:Wfluctreg} we require additional assumptions on $h$ and $g$. For instance, we have the following
\begin{thm}\label{thm:unHcont} If $V$ satisfies assumption 4 with $\ln h$ and $\ln g$ uniformly  H\"older continuous with exponent $\alpha$, then assumption 2  of section \ref{sec:axioms} holds with $\sigma = \frac{1}{2}$, $\zeta =\frac{2}{\alpha} + \frac{1}{2}$ and $a=0$.\end{thm}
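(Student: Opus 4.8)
The plan is to verify the three bullet points of Assumption~2 of \S\ref{sec:axioms} in turn. Absolute continuity of $\bb P_W$ on $\cu A_W^H$ is immediate from the product form of $\di\bb P_W$ given in Assumption~4, and the Wegner estimates \eqref{eq:Weg1}--\eqref{eq:Weg2} with $\sigma=\tfrac12$ are exactly Theorem~\ref{thm:Wegner}. So all the work lies in the last bullet, fluctuation regularity with bounded tails, and the claim to be established is that it holds with $a=0$ and $\zeta=\tfrac2\alpha+\tfrac12$.

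For the set $\Omega_W$ I would take $\Omega_W=\set{\,\norm V\le L\,}$, with $L$ and $p_0$ supplied by Corollary~\ref{cor:uniformnormbound}; then $\bb P_W(\Omega_W)\ge p_0$ and every $V\in\Omega_W$ satisfies $\norm V\le L=LW^0$, so $a=0$. For the density ratio, first observe that global Hölder continuity of $\ln h$ and $\ln g$ forces $h$ and $g$ to be strictly positive everywhere, so $\rho_W(V_1)/\rho_W(V_2)$ is well defined and positive for \emph{all} $V_1,V_2\in\cu A_W^H$. Writing $V=\tfrac1{\sqrt W}M$ with $M$ the matrix whose diagonal entries are the $d_i$ and whose off-diagonal entries are the $a_{i,j}$ of Assumption~4, one has $\rho_W(V)=c_W\prod_i h(\sqrt W\,V_{ii})\prod_{i<j}g(\sqrt W\,V_{ij})$ for a $W$-dependent normalization constant $c_W$; the constant cancels in the ratio, so
\begin{multline*}
\ln\frac{\rho_W(V_1)}{\rho_W(V_2)}=\sum_{i}\Bigl(\ln h(\sqrt W(V_1)_{ii})-\ln h(\sqrt W(V_2)_{ii})\Bigr)\\
+\sum_{i<j}\Bigl(\ln g(\sqrt W(V_1)_{ij})-\ln g(\sqrt W(V_2)_{ij})\Bigr).
\end{multline*}

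Now suppose $\norm{V_j-V}\le\epsilon W^{-\zeta}$ for $j=1,2$. Since individual matrix entries are dominated by the operator norm, $\abs{(V_1)_{k\ell}-(V_2)_{k\ell}}\le\norm{V_1-V_2}\le2\epsilon W^{-\zeta}$ for all $k,\ell$, so the arguments of $h$ and $g$ above differ by at most $2\epsilon W^{1/2-\zeta}$ in modulus. Uniform Hölder continuity with exponent $\alpha$ then bounds each summand by $C(2\epsilon)^\alpha W^{\alpha(1/2-\zeta)}$, and summing the $W$ diagonal together with the $\binom{W}{2}$ off-diagonal contributions gives
\[
\abs{\ln\frac{\rho_W(V_1)}{\rho_W(V_2)}}\ \le\ C'(2\epsilon)^\alpha\,W^{2+\alpha(1/2-\zeta)}.
\]
With $\zeta=\tfrac2\alpha+\tfrac12$ the exponent $2+\alpha(\tfrac12-\zeta)$ vanishes, so the right-hand side is bounded uniformly in $W$; setting $\delta=\e^{-C'(2\epsilon)^\alpha}$ (for any fixed choice of $\epsilon$) then yields $\rho_W(V_1)/\rho_W(V_2)\ge\delta>0$, which completes the verification.

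The argument is routine once this reduction is set up; the only real bookkeeping is the interplay of the scaling $\tfrac1{\sqrt W}$ in the definition of $V$ with the $\sim W^2$ count of independent matrix entries, since it is precisely the competition between the factor $W^2$ (number of entries) and $W^{-\alpha\zeta}$ (Hölder continuity applied to entry-differences of size $\sim W^{1/2-\zeta}$) that pins down the threshold $\zeta=\tfrac2\alpha+\tfrac12$. The role of $\Omega_W$ is confined to producing the norm bound $\norm V\le L$; it plays no part in the density estimate.
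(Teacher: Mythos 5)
Your proof is correct and takes essentially the same approach as the paper: express $\ln\bigl(\rho_W(V_1)/\rho_W(V_2)\bigr)$ as a sum over the $\sim W^2$ independent entries, bound each unscaled entry difference by $2\epsilon W^{1/2-\zeta}$ (the $\sqrt W$ coming from the normalization in Assumption~4), apply Hölder continuity, and observe that the resulting bound $O\bigl(W^{2+\alpha(1/2-\zeta)}\bigr)$ is $O(1)$ exactly when $\zeta=\tfrac2\alpha+\tfrac12$, with $\Omega_W=\{\|V\|\le L\}$ supplying $a=0$ via Corollary~\ref{cor:uniformnormbound}. Your extra remarks (strict positivity of $h,g$, the explicit reduction of the first two bullets to Theorem~\ref{thm:Wegner}) are correct but only make explicit what the paper leaves implicit.
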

\begin{rem} For example $h(\lambda) = g(\lambda) = c_{\alpha} \e^{-|\lambda|^{\alpha}}$ with $0 < \alpha \le 1$ satisfies the hypotheses of the theorem.
\end{rem}
\begin{proof}
We have 
\begin{multline}
\frac{\rho(V_{1})}{\rho(V_{2})} \ = \ \exp \left ( \sum_{i} \ln h( d_{i;1}) - \ln h( d_{i;2}) \ + \ \sum_{i,j} \ln g(  a_{i,j;1}) - \ln g(  a_{i,j;2}) \right ) \\
\ge \exp\left [ - C \left ( \sum_{i} |d_{i;1} - d_{i;2}|^{\alpha} + 
\sum_{i,j} \ln |a_{i,j;2} - a_{i,j;2}|^{\alpha} \right ) \right ]
\end{multline}
If $\norm{V_{1}- V}, \norm{V_{2}- V} \le \epsilon W^{-\zeta}$, then $$|d_{i;1} - d_{i;2}| , \ |a_{i,j;2} - a_{i,j;2}| \ \le \ \sqrt{W} \norm{V_{1}- V} + \sqrt{W} \norm{V_{2}- V}
\ \le \ 2 \epsilon W^{\half - \zeta}.$$
It follows that
\begin{equation}
\frac{\rho(V_{1})}{\rho(V_{2})} \ \ge \ \exp\left [ -  C \epsilon W^{2 }W^{ \frac{\alpha}{2}-\zeta \alpha} \right ]   \ = \ \e^{- C \epsilon} \ =: \ \delta.
\end{equation}
This estimate holds for \emph{every} $V$, so in particular for all $V$ in $\Omega_{W} = \{ \norm{V} \le L\}$.
\end{proof}

Theorem \ref{thm:unHcont} cannot apply if $h$ or $g$ has compact support. Nonetheless compactly supported densities can be handled.  A general result of this type would somewhat involve to state, so let us simply note that assumption 2 holds if $h$ and $g$ are characteristic functions of open neighborhoods of the origin.
\begin{thm}\label{thm:charact}  Suppose that $V$ satisfies assumption 4 and that 
$$ h(d) = \frac{1}{2 D} I[ |d| < D], \quad g(a) = \frac{1}{c_{\beta} A^{\beta}} I[ |a| < A],$$
with $c_{1}= 2$ and $c_{2}= \pi$.  Then assumption 2  of section \ref{sec:axioms} holds with $\sigma = \frac{1}{2}$, $\zeta = \frac{5}{2}$ and $a=0$.
\end{thm}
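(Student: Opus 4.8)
The plan is to verify the three clauses of Assumption~2 in \S\ref{sec:axioms} --- absolute continuity, the Wegner-type estimates, and fluctuation regularity with bounded tails --- of which only the last is not immediate. Absolute continuity is clear since $h$ and $g$ are bounded, integrable, and of unit mass, so $\bb{P}_W$ is absolutely continuous with respect to Lebesgue measure on $\cu{A}_W^H$, with density $\rho_W$ the corresponding product of $h$'s and $g$'s; and since $\esssup h=\tfrac{1}{2D}<\infty$, Theorem~\ref{thm:Wegner} gives \eqref{eq:Weg1} and \eqref{eq:Weg2} with $\sigma=\tfrac12$. So the task is to exhibit, with $\zeta=\tfrac52$ and $a=0$, the sets $\Omega_W$ required by the third clause.

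The key observation is that $h$ and $g$ are \emph{constant} on their (open) supports, so the density ratio in \eqref{eq:Wfluctreg} equals exactly $1$ whenever all matrix entries of $V_1$ and of $V_2$ lie strictly inside those supports. Thus it suffices to choose $\Omega_W$ so that an operator-norm ball of radius $\epsilon W^{-\zeta}$ about any $V\in\Omega_W$ remains inside that region, while keeping $\bb{P}_W(\Omega_W)$ bounded below uniformly in $W$. Writing the entries of $V$ as $d_p=\sqrt W\,V_{pp}$ and $a_{pq}=\sqrt W\,V_{pq}$ ($p<q$) as in the parametrization of assumption~4, a perturbation $E$ with $\norm E\le\epsilon W^{-\zeta}$ changes each raw entry by at most $\sqrt W\,\norm E\le\epsilon W^{1/2-\zeta}$, which equals $\epsilon W^{-2}$ once we take $\zeta=\tfrac52$. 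I would therefore set
\[
\Omega_W^{(1)}\ =\ \set{\,V\in\cu{A}_W^H \ :\ |d_p|\le D-2\epsilon W^{-2}\ \text{ for all }p,\ \ |a_{pq}|\le A-2\epsilon W^{-2}\ \text{ for all }p<q\,},
\]
so that any $V_1,V_2\in\cu{A}_W^H$ with $\norm{V_j-V}\le\epsilon W^{-5/2}$ have every raw entry strictly below $D$, resp.\ $A$, in modulus, whence $\rho_W(V_1)/\rho_W(V_2)=1$.

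For the probability, since $h$ is uniform on $(-D,D)$, each of the $W$ diagonal constraints holds with probability $1-\tfrac{2\epsilon}{D}W^{-2}$, and since $g$ is uniform on $\{|a|<A\}$, each of the $\binom W2$ off-diagonal constraints holds with probability at least $1-\tfrac{4\epsilon}{A}W^{-2}$ (in both the real and the complex case). Using $(1-x)^m\ge 1-mx$ and $\binom W2\le W^2/2$,
\[
\bb{P}_W\bigl(\Omega_W^{(1)}\bigr)\ \ge\ \Bigl(1-\tfrac{2\epsilon}{D}W^{-1}\Bigr)\Bigl(1-\tfrac{2\epsilon}{A}\Bigr)\ \ge\ 1-2\epsilon\Bigl(\tfrac1D+\tfrac1A\Bigr),
\]
uniformly in $W$, so fixing $\epsilon$ small enough makes this at least $\tfrac34$. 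To obtain \eqref{eq:uniformnorma} with $a=0$, I would invoke the Bai--Yin theorem exactly as in the proof of Corollary~\ref{cor:uniformnormbound} (treating the finitely many small $W$ by hand) to find $L<\infty$ with $\bb{P}_W\set{\norm V\le L}\ge\tfrac34$ for all $W\in\cu{S}$. Then $\Omega_W:=\Omega_W^{(1)}\cap\set{\norm V\le L}$ has $\bb{P}_W(\Omega_W)\ge\tfrac12=:p_0$, satisfies \eqref{eq:uniformnorma} with $a=0$, and satisfies \eqref{eq:Wfluctreg} with $\delta=1$ and $\zeta=\tfrac52$; this is the third clause, and the proof is complete.

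The step I expect to be the crux is the uniform-in-$W$ lower bound on $\bb{P}_W(\Omega_W)$: one cannot shrink the supports by a fixed proportion, since the resulting probability $(\mathrm{const})^{\sim W^2}$ would tend to $0$; one must shrink only by $O(W^{-2})$, and it is precisely the $\sqrt W$ in the scaling of the matrix elements, together with the $\sim W^2$ number of entries, that pins the exponent at $\zeta=\tfrac52$.
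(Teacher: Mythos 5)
Your proof is correct and follows essentially the same route as the paper's: exploit that the density is constant on its open support, use the $\sqrt W$ scaling of the raw entries against a $\zeta=\tfrac52$ ball to get an $O(W^{-2})$ perturbation of each of the $\sim W^2$ entries, bound the resulting shrinkage probability uniformly in $W$, and intersect with the Bai--Yin norm ball to secure \eqref{eq:uniformnorma} with $a=0$. If anything you are slightly more careful than the paper (using a $2\epsilon W^{-2}$ margin so that perturbed entries stay strictly inside the support), but the argument is the same.
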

\begin{proof}  Clearly the moment conditions of assumption 4 hold.  Thus by Cor.\ \ref{cor:uniformnormbound} we can find $p_{0}$ and $L$ so that \eqref{eq:uniformnormbound} holds.  

Now suppose $\norm{V_{1}- V}, \norm{V_{2}- V} \le \epsilon W^{-\frac{5}{2}}$.  Suppose also that the matrix elements of $V$ satisfy $ W^{-\half }|d_{i}| \le W^{-\half }D - \epsilon W^{-\frac{5}{2}}$, $ W^{-\half }|a_{i,j}| \le W^{-\half }A - \epsilon W^{-\frac{5}{2}}$  for all $i,j$.    Then 
\begin{equation}
\frac{\rho(V_{1})}{\rho(V_{2})} = 1.
\end{equation}
But 
\begin{multline}
\Pr( W^{-\half }|d_{i}| \le W^{-\half }D - \epsilon W^{-\frac{5}{2}}, \   W^{-\half }|a_{i,j}| \le W^{-\half }A - \epsilon W^{-\frac{5}{2}} )
\\ \ge \  1 - \sum_{i} \Pr ( |d_{i}| >  D - \epsilon W^{-2 } ) - 
\sum_{i,j} \Pr (  |a_{i,j}| > A - \epsilon W^{-2} )  \ge 1 - C \epsilon . 
\end{multline}
Now let 
\begin{equation}
\Omega_{W} \ = \ \set{ \norm{V} \le L \ , \ W^{-\half } |d_{i}| \le W^{-\half }D - \epsilon W^{-\frac{5}{2}} \ , \ \text{ and } \  W^{-\half }|a_{i,j}| \le W^{-\half }A - \epsilon W^{-\frac{5}{2}}} ,
\end{equation}
with $\epsilon$ sufficiently small that \begin{equation}
\Pr (\Omega_{W}) \ge p_{0} - C \epsilon > 0 . \qedhere \end{equation}
\end{proof}

\subsection{Summary}Putting the results of this section together with Thm.\ \ref{thm:band} we have:
\begin{thm}. 
Let $\cu{A}_{W}= \cu{T}_{W} =$ set of $W \times W$ matrices with real or complex entries  and suppose $\bb{P}$ and $\bb{Q}$ satisfy assumptions 4 and 5.    
\begin{enumerate}
\item If $\bb{P}$ is either the Gaussian orthogonal or Gaussian unitary ensemble, then
given $r >0$ and $s  \in (0,1)$ there are $A_s < \infty$ and $\alpha_s > 0$  such
that
\begin{equation}\label{eq:resolvlocGauss}
  \Ev{\abs{\ip{\mathbf{e}_i, (X_{W;N} - \lambda)^{-1}
  \mathbf{e}_j}}^s} \ \le \ A_s W^{\frac{s}{2}} \e^{-\alpha_s \frac{|i-j|}{W^{ 8}}}  , \quad \lambda \in [-r,r].
\end{equation}
 In particular, \eqref{eq:resolvlocGauss} holds for the Gaussian band ensemble \eqref{eq:GBE}.
\item If $\ln h$ and $\ln g$  are uniformly H\"older continuous with exponent $\alpha$,
then
given $r >0$  and $s  \in (0,1)$ there are $A_s < \infty$ and $\alpha_s > 0$  such
that
\begin{equation}\label{eq:resolvlocredux}
  \Ev{\abs{\ip{\mathbf{e}_i, (X_{W;N} - \lambda)^{-1}
  \mathbf{e}_j}}^s} \ \le \ A_s W^{\frac{s}{2}} \e^{-\alpha_s \frac{|i-j|}{W^{ \mu}}}  , \quad \lambda \in [-r,r],
\end{equation}
with $\mu= 5 + \frac{4}{\alpha}$
\item If $h$ and $g$ are proportional to characteristic functions of open neighborhoods of the origin, then  given  $r >0$  and $s  \in (0,1)$ there are $A_s < \infty$ and $\alpha_s > 0$  such
that \eqref{eq:resolvlocredux} holds with $\mu=9.$
\end{enumerate}
\end{thm}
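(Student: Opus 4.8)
The plan is to deduce all three parts from Theorem \ref{thm:band}, so that the only substantive work is to record, case by case, the parameters $\sigma$, $a$, $b$, $\zeta$ with which assumptions 1--3 of \S\ref{sec:axioms} have been verified, and then to read off $\mu = 2\nu+1$ for the smallest admissible $\nu$. Several inputs are common to all three parts and I would collect them first. The Wegner bounds \eqref{eq:Weg1}--\eqref{eq:Weg2} hold with $\sigma = \half$ by Theorem \ref{thm:Wegner} (its moment hypotheses being part of assumptions 4--5); absolute continuity of $\bb{P}_W$ on $\cu{A}_W^H$ is immediate from the product form of the density; the off-diagonal tail bound \eqref{eq:normboundb} holds with $b=0$ by Theorem \ref{thm:uniformTnorm}; and in every case the set $\Omega_W$ supplying fluctuation regularity is a norm ball $\{\norm{V}\le L\}$ with $L$ independent of $W$, so the exponent $a$ in \eqref{eq:uniformnorma} is $0$. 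Finally, since $\cu{A}_W$ is the full algebra of real or complex $W\times W$ matrices, $D_W = \dim\cu{A}_W^H$ is of order $W^2$, so the requirement $\sup_W D_W W^{-\nu} < \infty$ in Theorem \ref{thm:band} is satisfied once $\nu \ge 2$.

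Next I would feed in the three fluctuation-regularity computations already carried out: $\zeta = 2$ in part (1) (the GOE/GUE estimate above), $\zeta = \tfrac{2}{\alpha} + \half$ in part (2) by Theorem \ref{thm:unHcont}, and $\zeta = \tfrac{5}{2}$ in part (3) by Theorem \ref{thm:charact}. In Theorem \ref{thm:band} the smallest admissible value of the free parameter is $\nu = \zeta + \max(a, 1+\sigma+2b) = \zeta + \tfrac{3}{2}$, which equals $\tfrac{7}{2}$, $\tfrac{2}{\alpha}+2$, and $4$ in the three cases; each is at least $2$, so the dimension constraint forces no further enlargement of $\nu$. Theorem \ref{thm:band} then delivers the decay rate $W^{-(2\nu+1)} = W^{-(2\zeta+4)}$, i.e.\ $\mu = 2\zeta + 4$, which is $\mu = 8$, $\mu = \tfrac{4}{\alpha}+5$, and $\mu = 9$, respectively.

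It remains to pull the prefactor down to $W^{s/2}$. The generic bound coming from \eqref{eq:thm:band} carries the factor $M(W,t)^{s/t}$, and \eqref{eq:Mbound} alone would only yield a $W^{3s/2}$ power; but, as noted in the remark following Theorem \ref{thm:band}, when the diagonal blocks are Wigner matrices the diagonal resolvent matrix elements have the Schur-complement form $(W^{-1/2}d_x - \im\epsilon - \gamma)^{-1}$ with $\gamma$ independent of $d_x$, and since $W^{-1/2}d_x$ has density bounded by $\sqrt{W}\,\norm{h}_\infty$ (rather than by $\norm{h}_\infty$) one improves the estimate to $M(W,t)\le\const\,W^{t/2}$, hence $M(W,t)^{s/t}\le\const\,W^{s/2}$. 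Substituting into \eqref{eq:thm:band} with any fixed $t\in(s,1)$ and absorbing the constant into $A_s$ produces \eqref{eq:resolvlocGauss} (resp.\ \eqref{eq:resolvlocredux}), with $\alpha_s>0$ the decay constant produced by Theorem \ref{thm:band} when $\Lambda = r$; the restriction $\lambda\in[-r,r]$ is inherited from the $\Lambda$-dependence of Lemma \ref{lem:FBand}. Since the Gaussian band ensemble \eqref{eq:GBE} has GOE (or GUE) diagonal blocks and lower-triangular Gaussian off-diagonal blocks, it satisfies assumptions 4 and 5, so part (1) applies to it.

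The only point requiring genuine attention rather than bookkeeping is this last prefactor improvement: it is a mild refinement of the Wegner computation in the proof of Theorem \ref{thm:Wegner}, exploiting that the \emph{unscaled} variable $d_x$ carries a bounded density, and it is not literally a consequence of \eqref{eq:Mbound}. Beyond that, the main thing to be careful about is tracking which of the three fluctuation-regularity theorems supplies $\zeta$, since $\zeta$ — together with the universally vanishing $a$ and $b$ — is the sole case-dependent quantity entering $\nu$, and hence $\mu$.
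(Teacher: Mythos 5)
Your proof is correct and follows the same route the paper takes: this theorem is essentially bookkeeping on top of Theorem \ref{thm:band}, Theorem \ref{thm:Wegner} ($\sigma=\tfrac12$), Theorem \ref{thm:uniformTnorm} ($b=0$), Corollary \ref{cor:uniformnormbound} ($a=0$), and the case-dependent $\zeta$ from the three fluctuation-regularity theorems, yielding $\mu = 2\nu+1 = 2\zeta+4$. You are also right that the genuine content beyond bookkeeping is the prefactor improvement, and your version of it is the correct one: the estimate must read $M(W,t)\le\const\,W^{t/2}$ (from integrating over a single diagonal variable $d_x$ with density $\sqrt{W}h(\sqrt{W}\,\cdot)$), which then gives $M(W,t)^{s/t}\le\const\,W^{s/2}$; the exponent $\tfrac12$ appearing in the paper's remark after Theorem \ref{thm:band} would only give $W^{s/(2t)}$ and appears to be a misprint for $\tfrac{t}{2}$.
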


 \appendix

 \section{A lemma on conditional averages}
 In the proofs of the various versions of Lemma F above, a key step was to estimate averages of the form \begin{equation}
 \label{eq:coupleexponential} \Ev{\e^{-\sum_{j=1}^{n}U_{j}}}
 \end{equation} 
 in which $U_{j}$ are non-negative, strictly positive with good probability, but not independent.  The following Lemma gives the relevant estimate, which can be seen as a simple version of \emph{stochastic domination}.  As the proof shows, under appropriate assumptions, we can estimate
 \eqref{eq:coupleexponential} in terms of the same expression with $U_{j}$ replaced by i.i.d.\ non-negative Bernoulli variables taking $0$ with probability less than $1$. 
 \begin{lem}\label{lem:prob} Let $\Sigma_j$ be a sequence of $\sigma$-algebras of events on a probability space and let $U_j$ be a sequence of non-negative random variables with $U_{j}$ measurable with respect to $\Sigma_{k}$ for $k \neq j$.  If for some $\delta > 0$,
 $$ \Pr(U_j \ge \delta | \Sigma_{j}) \ge p_0 $$
 for each $j$, then
 $$\Ev{\e^{-\sum_{j=1}^n U_j}} \ \le \ \e^{-(1-\e^{-\delta}) p_0 n}.$$
 \end{lem}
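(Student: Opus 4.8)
The plan is to induct on $n$, peeling off the variable $U_n$ one step at a time by conditioning on $\Sigma_n$. The structural input is that, by hypothesis, $U_j$ is $\Sigma_n$-measurable for every $j \neq n$, so the partial sum $\sum_{j=1}^{n-1} U_j$ is $\Sigma_n$-measurable; this is exactly what lets the tower property do the work.

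First I would record an elementary pointwise bound: for every $x \ge 0$,
\[
\e^{-x} \ \le \ 1 - (1-\e^{-\delta})\1[x \ge \delta] ,
\]
since the right-hand side equals $\e^{-\delta}$ when $x \ge \delta$ (and there $\e^{-x} \le \e^{-\delta}$) and equals $1$ otherwise (and $\e^{-x} \le 1$ always). Applying this with $x = U_n$ and taking the conditional expectation given $\Sigma_n$, the hypothesis $\Pr(U_n \ge \delta \mid \Sigma_j) \ge p_0$ together with $1 - y \le \e^{-y}$ gives
\[
\Ev{\e^{-U_n}\,\big|\,\Sigma_n} \ \le \ 1 - (1-\e^{-\delta})\,\Pr(U_n \ge \delta \mid \Sigma_n) \ \le \ 1 - (1-\e^{-\delta}) p_0 \ \le \ \e^{-(1-\e^{-\delta}) p_0} .
\]

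Next, using that $\sum_{j=1}^{n-1} U_j$ is $\Sigma_n$-measurable and non-negative, the tower property yields
\[
\Ev{\e^{-\sum_{j=1}^{n} U_j}} \ = \ \Ev{\e^{-\sum_{j=1}^{n-1} U_j}\,\Ev{\e^{-U_n}\,\big|\,\Sigma_n}} \ \le \ \e^{-(1-\e^{-\delta}) p_0}\,\Ev{\e^{-\sum_{j=1}^{n-1} U_j}} .
\]
Iterating this $n$ times, with the trivial base case $\Ev{1} = 1$, produces the asserted bound $\Ev{\e^{-\sum_{j=1}^{n} U_j}} \le \e^{-(1-\e^{-\delta}) p_0 n}$.

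I do not expect a real obstacle here; the only subtlety is making sure the measurability hypothesis is invoked in the right place, namely that "$U_j$ is measurable with respect to $\Sigma_k$ for $k \neq j$" forces $\sum_{j \neq n} U_j$ to be $\Sigma_n$-measurable so the conditioning step is legitimate. This is precisely the stochastic-domination picture alluded to before the lemma: the same computation bounds $\Ev{\e^{-\sum_j U_j}}$ by the corresponding quantity for i.i.d.\ variables $\delta B_j$ with $B_j$ Bernoulli of parameter $p_0$, whose value is $(1 - p_0(1-\e^{-\delta}))^n \le \e^{-(1-\e^{-\delta})p_0 n}$.
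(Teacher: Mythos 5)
Your proof is correct and is essentially the paper's own argument: condition on $\Sigma_n$, pull out the $\Sigma_n$-measurable partial sum $\sum_{j<n}U_j$, bound $\Ev{\e^{-U_n}\mid\Sigma_n}$ by $1-(1-\e^{-\delta})p_0 = (1-p_0) + \e^{-\delta}p_0 \le \e^{-(1-\e^{-\delta})p_0}$, and iterate. The pointwise bound $\e^{-x}\le 1-(1-\e^{-\delta})\1[x\ge\delta]$ you record is just an explicit way of writing the same single step.
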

 \begin{proof}
 This follows by induction, since
 $$\Ev{\e^{-\sum_{j=1}^n U_j} | \Sigma_{n}} =\e^{-\sum_{j=1}^{n-1} U_{j}}
 \Ev{\e^{-U_{n}}| \Sigma_{n}} \le [(1-p_0) + \e^{-\delta} p_0] \e^{-\sum_{j=1}^{n-1} U_j}$$
 and 
 $$ (1-p_0) + \e^{-\delta} p_0 \le \e^{-(1-\e^{-\delta}) p_0}. \qedhere$$
 \end{proof}
 % \bibliography{references}
 \def\rmp{Rev.\ Math.\ Phys.\ }
\def\cmp{Comm.\ Math.\ Phys.\ }
\def\lmp{Lett.\ Math.\ Phys.\ }
\def\jsp{J.\ Stat.\ Phys.\ }
\def\prl{Phys.\ Rev.\ Lett.\ }
\def\pre{Phys.\ Rev.\ E }
\providecommand{\bysame}{\leavevmode\hbox to3em{\hrulefill}\thinspace}
\providecommand{\MR}{\relax\ifhmode\unskip\space\fi MR }
% \MRhref is called by the amsart/book/proc definition of \MR.
\providecommand{\MRhref}[2]{%
  \href{http://www.ams.org/mathscinet-getitem?mr=#1}{#2}
}
\providecommand{\href}[2]{#2}

\end{document}